\renewcommand\footnotetextcopyrightpermission[1]{}
\pgfplotsset{compat=1.18}
\newlength{\smalltabcolsep}
\newcommand{\pick}{\xleftarrow{\scriptscriptstyle \$}\xspace}
\newcommand{\qequal}{\stackrel{\scriptscriptstyle ?}=\xspace}
\newcommand{\bin}{\{0,1\}\xspace}
\newcommand{\lang}{\ensuremath{\mathcal{L}}\xspace}
\newcommand{\intervalvec}{\ensuremath{\vec{\Delta}}\xspace}
\newcommand{\messvec}{\ensuremath{\vv{m}}\xspace}
\newcommand{\ex}{\ensuremath{a}\xspace}
\newcommand{\rb}{\ensuremath{b}\xspace}
\newcommand{\ver}{\ensuremath{v}\xspace}
\newcommand{\vervec}{\ensuremath{\vec{\ver}}\xspace}
\newcommand{\paid}{\ensuremath{u}\xspace}
\newcommand{\paidvec}{\ensuremath{\vec{\paid}}\xspace}
\newcommand{\puzzvec}{\ensuremath{\vec{p}}\xspace} 
\newcommand{\commvec}{\ensuremath{\vec{g}}\xspace} 
\newcommand{\solsvec}{\ensuremath{\vec{s}}\xspace} 
\newcommand{\timevec}{\ensuremath{\vec{t}}\xspace}
\newcommand{\keyvec}{\ensuremath{\vec{k}}\xspace}
\newcommand{\timeexpvec}{\ensuremath{\vv{T}}\xspace}
\newcommand{\randvec}{\ensuremath{\vec{r}}\xspace}
\newcommand{\witnvec}{\ensuremath{\vec{d}}\xspace}
\newcommand{\expvec}{\ensuremath{\vec{\ex}}\xspace}
\newcommand{\extravec}{\ensuremath{\vec{\Psi}}\xspace}
\newcommand{\coins}{\ensuremath{\textit{coins}}\xspace}
\newcommand{\adr}{\ensuremath{\textit{adr}}\xspace}
\newcommand{\coinsvec}{\ensuremath{\vv{\textit{co{\i}ns}}}\xspace}
\newcommand{\sk}{\ensuremath{\textit{sk}}\xspace}
\newcommand{\pk}{\ensuremath{\textit{pk}}\xspace}
\newcommand{\csk}{\ensuremath{\textit{csk}}\xspace}
\newcommand{\cpk}{\ensuremath{\textit{cpk}}\xspace}
\newcommand{\skegen}{\ensuremath{\mathsf{SKE.keyGen}}\xspace}
\newcommand{\skeenc}{\ensuremath{\mathsf{SKE.Enc}}\xspace}
\newcommand{\skedec}{\ensuremath{\mathsf{SKE.Dec}}\xspace}
\newcommand{\Com}{\ensuremath{\mathsf{Com}}\xspace}
\newcommand{\Ver}{\ensuremath{\mathsf{Ver}}\xspace}
\newcommand{\cl}{\ensuremath{\mathcal{C}}\xspace} 
\newcommand{\se}{\ensuremath{\mathcal{S}}\xspace} 
\newcommand{\ve}{\ensuremath{\mathcal{V}}\xspace} 
\newcommand{\tpc}{\ensuremath{\mathcal{H\cl}}\xspace} 
\newcommand{\tps}{\ensuremath{\mathcal{H\se}}\xspace} 
\newcommand{\adv}{\ensuremath{\mathcal{A}}\xspace} 
\newcommand{\scc}{\ensuremath{\mathcal{SC}}\xspace} 
\newcommand{\hash}{\ensuremath{\mathsf{G}}\xspace}
\newcommand{\parse}{\ensuremath{\mathsf{parse}}\xspace}
\newcommand{\poly}{\ensuremath{\textit{poly}}\xspace}
\newcommand{\negl}{\ensuremath{\mu(\lambda)}\xspace}
\newcommand{\ced}{\ensuremath{\mathsf{CEDG}}\xspace}
\newcommand{\empt}{\ensuremath{\epsilon}\xspace}
\newcommand{\setup}[1]{\ensuremath{\mathsf{Setup_{#1}}\xspace}}
\newcommand{\gen}[1]{\ensuremath{\mathsf{GenPuzzle_{#1}}\xspace}}
\newcommand{\genext}[1]{\ensuremath{\mathsf{GenPuzzleExt_{#1}}\xspace}}
\newcommand{\solve}[1]{\ensuremath{\mathsf{Solve_{#1}}\xspace}}
\newcommand{\prove}[1]{\ensuremath{\mathsf{Prove_{#1}}\xspace}}
\newcommand{\verify}[1]{\ensuremath{\mathsf{Verify_{#1}}\xspace}}
\newcommand{\register}[1]{\ensuremath{\mathsf{Register_{#1}}\xspace}}
\newcommand{\pay}[1]{\ensuremath{\mathsf{Pay_{#1}}\xspace}}
\newcommand{\delegate}[1]{\ensuremath{\mathsf{Delegate_{#1}}\xspace}}
\newcommand{\retrieve}[1]{\ensuremath{\mathsf{Retrieve_{#1}}\xspace}}
\newcommand{\eval}{\ensuremath{\mathsf{Eval}\xspace}}
\newcommand{\dgm}{D-TLP\xspace}
\newcommand{\dgc}{ED-TLP\xspace}
\newcommand{\mtdgm}{DTLP\xspace}
\newcommand{\gm}{GM-TLP\xspace}
\newcommand{\gmp}{GMTLP\xspace}
\newcommand{\gc}{GC-TLP\xspace}
\newcommand{\ctlp}{C-TLP\xspace}
\newcommand{\tlp}{TLP\xspace}
\newcommand{\cmark}{\textcolor{OI3}{\ding{51}\xspace}}
\newcommand{\xmark}{\textcolor{OI6}{\ding{55}\xspace}}
\newtheorem{assumption}{Assumption}
\newtheorem{theorem}{Theorem}
\newtheorem{lemma}{Lemma}
\theoremstyle{definition}
\newtheorem{definition}{Definition}
\newtheorem{claim}{Claim}
\def\brokenuline{\bgroup
\UL@pixel.3\p@ \ULdepth .25ex
\let\UL@putbox\UL@brokenputbox \ULset}
\def\UL@brokenputbox{%
\ifx\UL@start%
\@empty%
\else 
\vrule\@width\z@ \LA@penalty\@M
{\UL@skip\wd\UL@box \advance\UL@skip\UL@pixel \UL@leaders \kern-\UL@skip}%
\textcolor{white}{%
\kern-2\UL@pixel\copy\UL@box\kern-\wd\UL@box
\kern \UL@pixel\raise.4\p@\copy\UL@box\kern-\wd\UL@box
\kern \UL@pixel\lower.4\p@\copy\UL@box\kern-\wd\UL@box
\kern \UL@pixel\raise.4\p@\copy\UL@box\kern-\wd\UL@box
\kern \UL@pixel\copy\UL@box\kern-\wd\UL@box
\kern-2\UL@pixel}%
\box\UL@box \fi}
\newcommand\blfootnote[1]{
    \begingroup
    \renewcommand\thefootnote{}\footnote{#1}
    \addtocounter{footnote}{-1}
    \endgroup
}
\begin{document}

\title{Scalable Time-Lock Puzzles}

\author{Aydin Abadi}
\affiliation{%
  \institution{Newcastle University}
  \city{Newcastle}
  \country{United Kingdom}}
  \email{aydin.abadi@ncl.ac.uk}

\author{Dan Ristea}
\affiliation{%
  \institution{University College London}
  \city{London}
  \country{United Kingdom}}
  \email{dan.ristea.19@ucl.ac.uk}

\author{Artem Grigor}
\affiliation{%
  \institution{University of Oxford}
  \city{Oxford}
  \country{United Kingdom}}
  \email{artem.grigor@cs.ox.ac.uk}

\author{Steven J.\ Murdoch}
\affiliation{%
  \institution{University College London}
  \city{London}
  \country{United Kingdom}
}
\email{s.murdoch@ucl.ac.uk}
\begin{CCSXML}
<ccs2012>
   <concept>
       <concept_id>10002978.10002979</concept_id>
       <concept_desc>Security and privacy~Cryptography</concept_desc>
       <concept_significance>500</concept_significance>
       </concept>
   <concept>
       <concept_id>10002978.10002991.10002995</concept_id>
       <concept_desc>Security and privacy~Privacy-preserving protocols</concept_desc>
       <concept_significance>500</concept_significance>
       </concept>
 </ccs2012>
\end{CCSXML}

\ccsdesc[500]{Security and privacy~Cryptography}
\ccsdesc[500]{Security and privacy~Privacy-preserving protocols}

\keywords{Time-lock puzzle, smart contracts, and fair payment.}

\begin{abstract}
Time-Lock Puzzles (TLPs) enable a client to lock a message such that a server can unlock it only after a specified time. They have diverse applications, such as scheduled payments, secret sharing, and zero-knowledge proofs. 
In this work, we present a scalable TLP designed for real-world scenarios involving a large number of puzzles, where clients or servers may lack the computational resources to handle high workloads. Our contributions are both theoretical and practical. From a theoretical standpoint, we formally define the concept of a ``Delegated Time-Lock Puzzle (D-TLP)'', establish its fundamental properties, and introduce an \textit{upper bound} for TLPs, addressing a previously overlooked aspect. From a practical standpoint, we introduce the ``Efficient Delegated Time-Lock Puzzle'' (ED-TLP) protocol, which implements the D-TLP concept. This protocol enables both the client and server to \textit{securely outsource} their resource-intensive tasks to third-party helpers. It enables \textit{real-time verification} of solutions and guarantees their delivery within predefined time limits by integrating an upper bound and a fair payment algorithm. ED-TLP allows combining puzzles from different clients, enabling a solver to process them sequentially, significantly reducing computational resources, especially for a large number of puzzles or clients. ED-TLP is the first protocol of its kind. We have implemented ED-TLP and conducted a comprehensive analysis of its performance for up to 10,000 puzzles. The results highlight its significant efficiency in TLP applications, demonstrating that ED-TLP securely delegates 99\% of the client's workload and 100\% of the server's workload with minimal overhead. 
\end{abstract}
\maketitle

\blfootnote{A preliminary version of this paper appears in ACM AsiaCCS 2025.}

\section{Introduction}\label{sec:introduction}
Time-Lock Puzzles~(TLP) are elegant cryptographic primitives that enable the transmission of information to the future.
A \textit{client} can lock a message in a \tlp such that no \textit{server} can unlock it for a specified duration.
\citeauthor{TimothyMay1993}~\cite{TimothyMay1993} was first to propose sending information into the future, i.e., time-lock puzzle/encryption.
As May's scheme requires a trusted agent to release a secret on time, \citeauthor{Rivest:1996:TPT:888615}~\cite{Rivest:1996:TPT:888615} proposed a trustless TLP RSA-based using sequential modular squaring which became the basis of much of the work in the field.
It is secure against a server with high computation resources that support parallelization.
Applications of TLP include timely payments in cryptocurrencies~\cite{ThyagarajanBMDK20}, e-voting~\cite{ChenD12}, sealed-bid auctions~\cite{Rivest:1996:TPT:888615}, timed secret sharing~\cite{kavousi2023timed},
timed commitments \cite{KatzLX20},
zero-knowledge proofs \cite{dwork2000zaps},
and verifiable delay functions~\cite{BonehBBF18}.

To avoid requiring the client or another party trusted by the client to be online at the time of release, TLPs necessarily approximate time using computation,
which impose high costs, especially in settings where multiple puzzles are required.
In consequence, techniques to improve the scalability of TLPs have been proposed in the literature.
However, there are generic settings these techniques do not address.

Consider two clients that need to send unrelated messages to a
\textit{resource-constrained} server.
One client requires their message to be hidden for 24 hours; the other for 40 hours.
Both clients encode their messages as RSA-based time-lock puzzles and send them independently to the server. To obtain both messages on time, the server would have to solve the puzzles in parallel.
It would thus be performing total work equivalent to at least 64 hours of CPU time, an obvious inefficiency.

None of the techniques in the literature can reduce the workload in this setting.
Homomorphic TLPs~\cite{MalavoltaT19,BrakerskiDGM19,abadi2024temporafusiontimelockpuzzleefficient} enable computation on multiple puzzles to produce a single TLP containing the result and forgo solving individual puzzles separately.
Unfortunately, these cannot help in this setting, as the two puzzles are unrelated.
Batchable TLPs~\cite{SrinivasanLMNPT23}
cannot help either, as they only consider identical time intervals.
Chained~TLP~(C-TLP)~\cite{Abadi-C-TLP} enable a client to encode multiple puzzles separated by a fixed interval.
The server can obtain each message on time only solving each chained puzzle sequentially, rather than all puzzles in parallel.
However, C-TLP requires that \emph{a single client} generate all puzzles, so two clients cannot chain their puzzles, and the interval between puzzles is fixed, so arbitrary intervals cannot be chained (at least not without incurring a high communication and computational cost,
as discussed in \autoref{subsec::strawman}).
Therefore, none of the existing TLP variants enable the server to only do work equivalent to the longest puzzle while obtaining both messages at their intended times.

Moreover, for the fundamental feature of TLPs to hold -- that they cannot be solved before the intended time -- puzzles must be set assuming the capabilities of the most powerful solver in the world.
Hence, a less capable server  may still struggle to solve even a single, combined puzzle in time.
This also introduces issues in settings with multiple servers with heterogeneous capabilities, as the servers will solve the puzzles at different times.

Setting puzzles for the most capable solver introduces its own challenge.
Solving capabilities are difficult to estimate and, for longer-running puzzles, to predict how they will improve.
The~\textit{LCS35 puzzle} released by \citeauthor{LCS35}~\cite{LCS35} is a poignant example.
Intended to take 35 years, it was solved in 3.5 years on a general purpose CPU~\cite{LCS35Solve}.
Subsequently, it was opened in 2 months~\cite{Cryptophage,CryptophageSolver} using field-programmable gate arrays~(FPGA), showing that hardware specialized in repeated squaring provides a large speed-up.

Existing schemes consider the lower bound, representing the earliest possible time for the most powerful server to find a solution.
However, an \emph{upper bound} indicating the guaranteed time for a regular server to find a solution has not been studied and defined.

\subsection{Our Contributions}\label{subsec:contributions}
These limitations inform the key features of the protocols in this paper.
Critically, a solution requires secure delegation.
On the server-side, delegation helps a server without capabilities offload their work to more powerful helpers.
On the client-side, delegation helps clients combine unrelated puzzles into a single chain.
To support this, a solution must also allow for arbitrary intervals in the chain.

We therefore introduce the concept of the ``Delegated Time-Lock Puzzle'' (\dgm) and present the ``Efficient Delegated Time-Lock Puzzle'' (\dgc), a protocol that realizes this concept.

\dgc is a modular protocol that enables secure end-to-end delegation, efficiently handles the multi-puzzle setting, and accommodates varied time intervals.
It lets clients and servers delegate tasks to potentially adversarial third-party helpers, who may have specialized hardware.
This is especially beneficial when servers have varied capabilities or even lack the capabilities to directly engage in the protocol.
Although \dgc includes both client and server delegation, each can be used separately.
The protocol preserves the privacy of plaintext solutions and supports efficient real-time verification of solutions by relying solely on symmetric-key primitives.
In a multi-client setting, when combined with client delegation, this lets multiple clients combine distinct messages into a single puzzle chain.
The protocol supports resuming a previously generated puzzle chain, which allows a client to act as a helper for other clients by adding additional puzzles to an existing puzzle.

\dgc offers \textit{accurate time} solution delivery, ensuring that a rational helper always delivers a solution before a predefined time.
To capture this feature, we rely on two techniques.
Firstly, we equip \dgc with an explicit upper bound for a helper to deliver a solution, determined by a ``Customized Extra Delay Generating'' function, which may hold independent interest and find applications in other delay-based primitives, such as Verifiable Delay Functions~(VDFs)~\cite{BonehBBF18,Wesolowski19,EphraimFKP20a}.
Secondly, we incorporate fair payment into \dgc, an algorithm that only compensates a solver if it provides a valid solution before a predefined time, made trustless through a smart contract.

We formally define and develop \dgc in a modular fashion.
First, we define ``Generic Multi-instance Time-Lock Puzzle'' (\gm) which supports the multi-puzzle setting with different time intervals and efficiently verifying the correctness of solutions.
We then present an instantiation of \gm: the ``Generic Chained Time-Lock Puzzle'' (\gc) protocol.
We enhance \gm with client and server delegation to define \dgm.
Finally, we introduce \dgc which realizes \gm.
Our protocols use a standard time-lock puzzle scheme in a black-box manner.

We implemented and performed an in-depth evaluation of the protocols when generating and solving up to 10,000 puzzles. The implementation is open-source~\cite{repo}. 
To the best of our knowledge, it is the first time parties' overhead for a large number of puzzles is studied.
We found the computational overhead of \dgc to be negligible per puzzle and the cost of the smart contract is as low as 0.2 cents per puzzle. Applications of our (E)D-TLP include enhancing the scalability of (i) VDFs, by securely delegating VDFs computations to reduce the computational load on resource-constrained parties, and (ii) proof of storage, by offloading computationally intensive tasks—such as solving sequential challenges—to resource-capable helpers, as detailed in Section \ref{sec:applications}. 


\section{Preliminaries}\label{sec:preliminaries}
\subsection{Notations and Assumptions}\label{subsec:notations-and-assumptions}
We define function $\parse(\omega, y)\rightarrow (a, b)$, with inputs $\omega$ and $y$ of at least $\omega$ bits, that parses $y$ into $a$ of length $|y|-\omega$ and $b$ of length $\omega$.
To ensure generality in the definition of our verification algorithms, we adopt notations from zero-knowledge proof systems~\cite{BlumSMP91,FeigeLS90}.
Let $R$ be an efficient binary relation which consists of pairs of the form $(stm, wit)$, where $stm$ is a statement and $wit$ is a witness.
Let $\mathcal{L}$ be the language (in $\mathcal{NP}$) associated with $R$, i.e., $\mathcal{L}=\{stm|\ \exists wit \text{ s.t. }$ $ R(stm, wit)=1 \}$.
A (zero-knowledge) proof for $\mathcal{L}$ allows a prover to convince a verifier that $stm\in \mathcal{L}$ for a common input $stm$ (without revealing $wit$).

In the formal definitions, we use $\Pr\left[\begin{array}{c}\mathsf{Exp}\\ \hline\mathsf{Cond}\\\end{array}\right]$, where $\mathsf{Exp}$ is an experiment that involves an adversary $\adv$, and $\mathsf{Cond}$ is the set of winning conditions for $\adv$.

We denote network delay by $\Upsilon$.
As shown in~\cite{GarayKL15,BadertscherMTZ17}, after an honestly generated block is sent to the blockchain network, there is a maximum time period after which it will be observed by honest parties in the blockchain's unaltered part (a.k.a.\ state), ensuring a high probability of consistency.
For further discussion on the network delay, please refer to Appendix~\ref{sec::network-delay}.
We assume that parties use a secure channel when they interact with each other off-chain.

\subsection{Symmetric-key Encryption Scheme}\label{subsec:symmetric-key-encryption-scheme}

A symmetric-key encryption scheme consists of three algorithms:
\begin{enumerate*}[label=(\arabic*)]
\item $\skegen(1^\lambda)\rightarrow k$ is a probabilistic algorithm that outputs a symmetric key $k$.
\item $\skeenc(k,m)\rightarrow c$ takes as input $k$ and a message $m$ in some message space and outputs a ciphertext $c$.
\item $\skedec(k,c)\rightarrow m$ takes as input $k$ and a ciphertext $c$ and outputs a message $m$.
\end{enumerate*}
We require that the scheme satisfies \emph{indistinguishability under chosen-plaintext attacks}~(IND-CPA).
We refer readers to Appendix~\ref{sec:SKE} for more details.

\subsection{Time-Lock Puzzle}\label{subsec:time-lock-encryption}
We restate the formal definition of a TLP~\cite{Rivest:1996:TPT:888615}.
Our focus will be on the RSA-based \tlp, due to its simplicity and foundational role to the majority of TLPs.

\begin{definition}\label{Def::Time-lock-Puzzle}
A \tlp is a scheme between a client \cl and a server \se, consisting of the following algorithms.
\begin{itemize}[leftmargin=.37cm]
\item[$\bullet$]\brokenuline{$\setup{\tlp}(1^\lambda,\Delta, S)\rightarrow (\pk,\sk)$}.
    A probabilistic algorithm run by \cl.
    It takes as input a security parameter, $1^\lambda$, time parameter $\Delta$ that specifies how long a message must remain hidden in seconds, and time parameter $S$, the maximum number of squaring a server (with the highest level of computation resources) can perform per second.
    It outputs public and private keys $(\pk,\sk)$.
\item[$\bullet$]\brokenuline{$\gen{\tlp}(s, \pk, \sk)\rightarrow {p}$}.
    A probabilistic algorithm run by \cl.
    It takes as input solution $s$ and $(\pk,\sk)$ and outputs puzzle $p$.
\item[$\bullet$]\brokenuline{$\solve{\tlp}(\pk, p)\rightarrow s$}.
    A deterministic algorithm run by \se.
    It takes as input $\pk$ and $p$ and outputs solution $s$.
\end{itemize}

\noindent \tlp meets completeness and efficiency properties:
\begin{itemize}
\item[$*$] \textit{Completeness}.
    For any honest \cl and \se, $\solve{\tlp}(\pk$,\\ $ \gen{\tlp}(s, \pk, \sk))=s$.
\item[$*$] \textit{Efficiency}.
    The run-time of $\solve{\tlp}(\pk, p)$ is upper-bounded by a polynomial $\poly(\Delta,\lambda)$.
\end{itemize}
\end{definition}

The security of a TLP requires that the puzzle's solution stay confidential from all adversaries running in parallel within the time period, $\Delta$.
This is formally stated in Definition~\ref{def::TLP-sec}.

\begin{definition}\label{def::TLP-sec} A TLP is secure if for all $\lambda$ and $\Delta$, all probabilistic polynomial time (PPT) adversaries $\adv\coloneqq(\adv_1,\adv_2)$ where $\adv_1$ runs in total time $O(\textit{poly}(\Delta,\lambda))$ and $\adv_2$ runs in time $\delta(\Delta)<\Delta$ using at most polynomial number  $\xi(\Delta)$ of parallel processors, there is a negligible function $\mu(\cdot)$, such that:

\[ \Pr\left[%
\begin{array}{l}%
\pk, \sk \leftarrow \setup{\tlp}(1^\lambda,\Delta, S)\\
s_0,s_1,\text{state} \leftarrow \adv_1(1^\lambda,\pk, \Delta)\\
\rb\pick \bin\\
 p\leftarrow\gen{\tlp}(s_\rb, \pk, \sk)\\
\hline
 \rb \leftarrow \adv_2(\pk, p,\text{state}) \\

\end{array}
\right]\leq \frac{1}{2}+\negl\]
\end{definition}

By definition, TLPs are sequential functions.
Their construction requires that a sequential function, such as modular squaring, is invoked iteratively a fixed number of times.
The sequential function and iterated sequential functions notions, in the presence of an adversary possessing a polynomial number of processors, are formally defined in~\cite{BonehBBF18}.
We restate the definitions in Appendix~\ref{sec::sequential-squaring}.

Although not required by the definition, a desirable property of a TLP is that generating a puzzle much faster than solving it.
\citeauthor{Rivest:1996:TPT:888615}~\cite{Rivest:1996:TPT:888615} use a trapdoor RSA construction to meet the definition and achieve this property.
The security of the RSA-based TLP relies on the hardness of the factoring problem, the security of the symmetric key encryption, and the sequential squaring assumption.
This construction is detailed in Appendix~\ref{sec:RSA-based-TLP}.
We restate its formal definition below and refer readers to~\cite{Abadi-C-TLP} for the proof.

\begin{theorem}\label{theorem::R-LTP-Sec}
Let $N$ be a strong RSA modulus and $\Delta$ be the period for which the solution stays private.
If the sequential squaring holds, factoring $N$ is a hard problem and the symmetric-key encryption is semantically secure, then the RSA-based TLP scheme is a secure TLP\@.
\end{theorem}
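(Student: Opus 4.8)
The plan is to derive security by contradiction through a short hybrid argument, reducing any distinguisher for the \tlp to one of the three stated assumptions. Recall the structure of the RSA-based construction (Appendix~\ref{sec:RSA-based-TLP}): $\setup{\tlp}$ fixes the strong modulus $N$ and a squaring count $T=\Delta\cdot S$, retaining the trapdoor $\phi(N)$ so that \cl can generate puzzles cheaply; $\gen{\tlp}$ samples a fresh base $g$, forms the one-time mask $w=g^{2^T}\bmod N$ via the trapdoor, uses $w$ to blind a freshly sampled symmetric key $k$, and outputs the blinded key together with $\skeenc(k,s)$; $\solve{\tlp}$ recovers $w$ through $T$ sequential squarings, unblinds $k$, and decrypts. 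Suppose toward contradiction that an adversary $\adv=(\adv_1,\adv_2)$ obeying the resource bounds of Definition~\ref{def::TLP-sec} wins with advantage $\varepsilon$ that is non-negligible.

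First I would lay out a chain of three hybrid experiments over the challenge puzzle. $H_0$ is the real experiment of Definition~\ref{def::TLP-sec}. In $H_1$ the honestly computed mask $w=g^{2^T}\bmod N$ is replaced by a uniformly random group element, with everything else left unchanged. In $H_2$ the ciphertext $\skeenc(k,s_\rb)$ is further replaced by $\skeenc(k,s_0)$ irrespective of the challenge bit, so that the adversary's view becomes independent of $\rb$ and its advantage in $H_2$ is exactly $0$.

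The step $H_0\approx H_1$ is where the timing hypotheses enter, and it is the part I expect to be the crux. Here I would construct a two-stage distinguisher $\mathcal{D}=(\mathcal{D}_1,\mathcal{D}_2)$ for the sequential-squaring problem: $\mathcal{D}_1$ receives $(N,g,T)$ and, since $\adv_1$ is permitted time $O(\poly(\Delta,\lambda))$, runs $\adv_1$ during its preprocessing phase to obtain $(s_0,s_1,\text{state})$; $\mathcal{D}_2$ then receives the challenge value $y$ --- either $g^{2^T}\bmod N$ or uniform --- embeds it as the mask $w$, assembles the challenge puzzle, and runs $\adv_2$, forwarding its guess. Because $\adv_2$ runs in time $\delta(\Delta)<\Delta$ with only polynomially many processors, strictly below the wall-clock cost of $T$ sequential squarings, $\mathcal{D}_2$ would distinguish within a budget that the sequential-squaring assumption declares infeasible; the hardness of factoring $N$ is exactly what rules out $\adv_2$ recovering $\phi(N)$ and shortcutting the squarings. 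The delicate accounting --- and the main obstacle --- is to verify that everything $\mathcal{D}$ performs outside the $\adv_2$ window (sampling, running $\adv_1$, reading off the final bit) stays inside the allotted parallel-time budget, so that the reduction genuinely honours the $\delta(\Delta)<\Delta$ constraint rather than secretly buying itself the squarings it is supposed to be unable to afford.

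Once $H_1$ is reached, the mask is uniform and independent, so the blinded key information-theoretically hides $k$; the key is therefore used only inside $\skeenc$ and never otherwise leaks. Consequently $H_1\approx H_2$ follows from IND-CPA security: any noticeable change in behaviour when swapping $\skeenc(k,s_\rb)$ for $\skeenc(k,s_0)$ yields an IND-CPA attacker against the symmetric scheme. Chaining the three bounds gives $\varepsilon\le \mathsf{Adv}^{\mathsf{ss}}_{\mathcal{D}}+\mathsf{Adv}^{\mathsf{cpa}}+0$, each term negligible under the hypotheses, contradicting non-negligible $\varepsilon$ and establishing that the RSA-based \tlp is secure.
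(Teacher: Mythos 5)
The first thing to note is that the paper does not prove Theorem~\ref{theorem::R-LTP-Sec} at all: it restates the theorem from prior work and explicitly refers the reader to~\cite{Abadi-C-TLP} for the proof. So your attempt can only be measured against the standard argument, and your skeleton does match it in spirit: replace the mask $r^{2^T}\bmod N$ by a uniform element (charging this to sequential squaring, with factoring hardness justifying that the order of the group cannot be learned), then use IND-CPA to make the view independent of the challenge bit. Your $H_1\rightarrow H_2$ step is correct as stated: once the mask is uniform, the blinded key is uniform and independent of $k$, so the reduction can simulate it and play the IND-CPA game on the ciphertext component.

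However, there are two genuine gaps. First, your recap of the construction silently deviates from Appendix~\ref{sec:RSA-based-TLP}: in the paper's scheme, both the base $r$ and the key $k$ are sampled in $\setup{\tlp}$, and $r$ is part of \pk; $\gen{\tlp}$ samples no fresh base. This is not cosmetic. In the game of Definition~\ref{def::TLP-sec}, $\adv_1$ receives \pk and may run for $O(\poly(\Delta,\lambda))$ total time, which suffices to compute the mask $r^{2^T}\bmod N$ directly (it is a deterministic function of \pk costing $T=S\cdot\Delta$ squarings, polynomial in $\Delta$) and hand it to $\adv_2$ in $\text{state}$; $\adv_2$ then unblinds $k$ and decrypts within its budget. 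Against this adversary your step $H_0\approx H_1$ simply fails, so a proof for the paper's literal scheme must either sample the base fresh at puzzle-generation time (your implicit repair) or stipulate that \pk is only revealed together with the puzzle; this needs to be made explicit rather than assumed. Second, even granting the fresh-base reading, your reduction $\mathcal{D}=(\mathcal{D}_1,\mathcal{D}_2)$ presupposes a two-stage, preprocessing-enabled form of the squaring assumption, whereas Assumption~\ref{assumption::SequentialSquaring} is single-shot: the adversary receives $(N,r,y)$ all at once and must answer within parallel time below $\delta(\cdot)$. Running $\adv_1$ (arbitrary polynomial total time) inside that window blows the budget. You correctly identify this accounting as ``the main obstacle,'' but identifying it is not resolving it: as written, the reduction does not type-check against the assumption the paper actually states, and closing the gap requires either strengthening the assumption to allow polynomial preprocessing on $N$ before the challenge $y$ arrives, or restructuring the reduction so that nothing expensive happens inside the time-bounded phase.
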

\subsection{Multi-instance Time Lock Puzzle}\label{subsec:multi-instance-time-lock-puzzle}
\citeauthor{Abadi-C-TLP}~\cite{Abadi-C-TLP} proposed the Chained Time-Lock Puzzle~(C-TLP) to address the setting in which a server receives multiple puzzle instances separated by a fixed interval.
It encodes information required to solve each puzzle in the preceding puzzle's solution.
Only one puzzle in the chain needs to be solved at any one time, which, for $z$ puzzles, reduces computations by a factor of $\frac{z+1}{2}$. 
The scheme also introduces hash-based proofs of the correct solutions.
To date, it remains the most efficient time-lock puzzle for the multi-puzzle setting but it works only for fixed intervals.

\subsection{Commitment Scheme}\label{subsec:commit}

A commitment scheme involves two parties: \emph{sender} and \emph{receiver} and two phases: \emph{commit} and \emph{open}.
In the commit phase, the sender commits to a message $x$ as $\Com(x,r)=\Com_x$ using secret value $r\pick \bin^\lambda$.
The commitment $\Com_x$ is sent to the receiver.
In the open phase, the sender sends the opening $(x,r)$ to the receiver who verifies its correctness: $\Ver(\Com_x,(x,r))\qequal1$ and accepts if the output is $1$.
A commitment scheme must satisfy two properties:
\begin{enumerate*}[label=(\arabic*)]
    \item \textit{hiding}: it is infeasible for an adversary (the receiver) to learn any information about the committed message $x$ until the commitment $\Com_x$ is opened, and
    \item \textit{binding}: it is infeasible for an adversary (the sender) to open a commitment to different values than those used in the commit phase, meaning it is infeasible to find $(x', r')$ with $x\neq x'$ \textit{s.t.} $\Ver(\Com_x, (x,r))=\Ver(\Com_x, (x',r'))=1$, .
\end{enumerate*}
Efficient non-interactive commitment schemes exist in both the standard model, e.g., Pedersen scheme~\cite{Pedersen91}, and the random oracle model using the well-known hash-based scheme such that committing is: $\hash(x||r)=\Com_x$ and $\Ver(\Com_x, {x})$ requires checking: $\hash(x||r)\qequal\Com_x$, where $\hash:\bin^*\rightarrow \bin^\lambda$ is a collision-resistant hash function; i.e., the probability to find $x$ and $x'$ such that $\hash(x)=\hash(x')$ is negligible in the security parameter, $\lambda$.

\subsection{Smart Contract}\label{subsec:SC}
A smart contract is a computer program that encodes the terms and conditions of an agreement between parties and often contains a set of variables and functions.
Smart contract code is stored on a blockchain and is executed by the miners who maintain the blockchain.
When (a function of) a smart contract is triggered by an external party and the associated fee (termed \textit{gas}) is paid, every miner executes the smart contract.
The correctness of the program execution is guaranteed by the security of the underlying blockchain.

In this work, we use a standard smart contract to retain a pre-defined deposit amount, check if delegated work was completed correctly and on time, and distribute the deposit among parties accordingly.
Its role can also be played by any semi-honest server (including a semi-honest ``Trusted Execution Environment'' \cite{PassST17}), even one with minimal computational resources.

\section{Generic Multi-instance TLP}\label{sec::Generic-Multi-instance-Time-lock-Puzzle}

Although the Chained TLP~(\ctlp)~\cite{Abadi-C-TLP} can efficiently deal with multiple puzzles, it cannot handle time intervals of different sizes.
We address this limitation by formally defining the Generic Multi-instance Time-Lock Puzzle (\gm) and its concrete instantiation Generic Chained Time-Lock Puzzle (\gc).

\subsection{Definition and Security Properties}\label{Section::Multi-instance-Time-lock Puzzle-Definition}

\begin{definition}[Generic Multi-instance TLP]
A \gm is a scheme between a client \cl, a server \se, and a public verifier \ve, that consists of algorithms:

\begin{itemize}[leftmargin=.45cm]
\item[$\bullet$]\brokenuline{$\setup{\gmp}(1^\lambda, \intervalvec, S, z)\rightarrow (\pk,\sk)$}.
A probabilistic algorithm run by \cl.
It takes as input security parameter $1^\lambda$, a vector of time intervals $ \intervalvec=[\bar{\Delta}_1$, $\ldots$, $\bar{\Delta}_z]$ (such that the $j$-th solution is not found before $\sum_{i=1}^j \bar{\Delta}_i$), $S$, and the total number of puzzles $z$.
It outputs a set of public \pk and private \sk parameters.
    
\item[$\bullet$]\brokenuline{$\gen{\gmp}( \messvec, \pk, \sk)\rightarrow \hat{p}$}.
A probabilistic algorithm run by \cl.
It takes as input a message vector $\messvec=[m_1$, $\ldots$, $m_z]$ and $(\pk,\sk)$ and outputs $\hat{p}\coloneqq(\puzzvec$, $\commvec)$, where \puzzvec is a puzzle vector and \commvec is a public statement vector w.r.t.\ language \lang and \messvec.
Each $j$-th element in vectors \puzzvec and \commvec corresponds to a solution $s_j$ which itself contains $m_j$ (and possibly witness parameters).
\cl publishes \commvec and sends \puzzvec to $\se$.
     
\item[$\bullet$]\brokenuline{$\solve{\gmp}(\pk, \puzzvec)\rightarrow (\solsvec, \messvec)$}.
A deterministic algorithm run by \se.
It outputs solution vector \solsvec and plaintext message vector \messvec.
    
\item[$\bullet$]\brokenuline{$\prove{\gmp}(\pk, s_j)\rightarrow \pi_j$}.
A deterministic algorithm run by \se.
It takes as input \pk and a solution: $s_j\in \solsvec$ and outputs a proof $ \pi_j$ (asserting $m_j\in \lang$).
\se sends $m_j\in s_j$ and $\pi_j$ to \ve.

\item[$\bullet$]\brokenuline{$\verify{\gmp}(\pk, j, m_j, {\pi}_j, g_j)\rightarrow \ver_j\in \bin$}.
A deterministic algorithm run by \ve.
It takes as input \pk, $j$, $m_j$,  $ \pi_j$, and $g_j\in \commvec$ and outputs $1$ if it accepts the proof or $0$ otherwise.
\end{itemize}

\noindent \gm satisfies completeness and efficiency:
\begin{itemize}
\item[$*$] \textit{Completeness}.
For any honest prover and verifier, $\forall j \in [z]$, it holds that:
\begin{itemize}
\item[$\bullet$] $\solve{\gmp}(\pk,[p_1$, $\ldots$, $p_j]) \rightarrow ([s_1$, $\ldots$, $s_j]$, $[m_1$, $\ldots$, $m_j])$.
\item[$\bullet$] $\verify{\gmp}(\pk$, $j$, $m_j$, $\pi_j$, $g_j)\rightarrow 1$.
\end{itemize}

\item[$*$] \textit{Efficiency}.
The run-time of algorithm $\solve{\gmp}(\pk$, $[p_1$, $\ldots$, $p_j])=([s_1$, $\ldots$, $s_j]$, $[m_1$, $\ldots$, $m_j])$ is bounded by a polynomial
$\poly(\sum_{i=1}^j \bar{\Delta}_i, \lambda)$, $\forall j\in [z]$.
\end{itemize}
\end{definition}
 
In the above, the prover is required to generate a witness/proof $\pi_j$ for the language $\lang\coloneqq\{stm\coloneqq(g_j$, $m_j)|\ R(stm$, $\pi_j)=1\}$.
The proposed definition is generalization of the definition provided in~\cite{Abadi-C-TLP}: 
\begin{enumerate*}[label=(\arabic*)]
    \item different size intervals between puzzles are allowed and
    \item a broad class of verification algorithms and schemes are supported, not just hash-based commitments. 
\end{enumerate*}
We allow \solve{\gmp} to output both the solution vector and the plaintext message vector, as the solution to a puzzle may be an encoded version of the plaintext. 
 
At a high level, \gm satisfies a solution's \emph{privacy} and \emph{validity}.
Solution's privacy requires the $j$-th solution to remain hidden from all adversaries that run in parallel in the period: $\sum_{i=1}^j \bar{\Delta}_i$.
The solution's validity states that it is infeasible for a PPT adversary to compute an invalid solution and pass the verification.
We formally define the two properties below.

\begin{definition}[Privacy]\label{Def::GM-TLP::Solution-Privacy}
A \gm is privacy-preserving if for all $\lambda$ and $ \intervalvec=[\bar{\Delta}_1,\ldots, \bar{\Delta}_z]$, any number of puzzle: $z\geq1$, any $j \in [z]$, any pair of randomized algorithms $\adv \coloneqq (\adv_1,\adv_2)$, where $\adv_1$ runs in time $O(\poly(\sum_{i=1}^j \bar{\Delta}_i,\lambda))$ and $\adv_2$ runs in time $\delta(\sum_{i=1}^j \bar{\Delta}_i)<\sum_{i=1}^j \bar{\Delta}_i$ using at most polynomial  $\xi(\max(\bar{\Delta}_1,\ldots, \bar{\Delta}_j))$ parallel processors, there exists a negligible function $\mu(\cdot)$, such that:

\[\Pr\left[ \begin{array}{l}
\pk, \sk \leftarrow\setup{\gmp}(1^\lambda, \intervalvec, S, z)\\
\messvec_0, \messvec_1, \text{state}\leftarrow \adv_1(1^\lambda,\pk,z)\\
\rb_j\pick \bin, \forall j \in [z]\\
\hat{p} \leftarrow \gen{\gmp}( \messvec_\rb, \pk, \sk)\\
\hline
b',k \leftarrow \adv_2(\pk, \hat{p},\text{state})\\
\text{s.t.}\ 
b' = \rb_k\\

\end{array} \right]\leq \frac{1}{2}+\negl\]

\noindent where $k \in [z]$, $\messvec_0= [m_{0,1}$,\ldots,$m_{0, z}]$, $\messvec_1= [m_{1, 1}$,\ldots,$m_{1, z}]$ and $ \messvec_\rb=[m_{\rb_1, 1}$, $\ldots$, $m_{\rb_z, z}]$.
\end{definition}

Definition~\ref{Def::GM-TLP::Solution-Privacy} ensures solutions appearing after the $j$-th solution remain hidden from the adversary with a high probability.
Moreover, similar to~\cite{Abadi-C-TLP,BonehBBF18,MalavoltaT19,garay2019}, it ensures privacy even if $\adv_1$ computes on the public parameters for a polynomial time, $\adv_2$ cannot find the $j$-th solution in time $\delta(\sum_{i=1}^j \bar\Delta_i)<\sum_{i=1}^j \bar\Delta_i$ using at most $\xi(\max(\bar\Delta_1,\ldots, \bar\Delta_j))$ parallel processors, with a probability significantly greater than $\frac{1}{2}$.
As shown in~\cite{BonehBBF18}, we can set $\delta(\bar\Delta)=(1-\epsilon)\cdot\bar\Delta$ for a small $\epsilon$, where $0<\epsilon<1$.

\begin{definition}[Solution-Validity]\label{Def::GM-TLP::Solution-Validity}
A \gm preserves solution validity, if for all $\lambda$ and $ \intervalvec=[\bar\Delta_1,\ldots, \bar\Delta_z]$, any number of puzzles: $z\geq1$, all PPT adversaries $\adv\coloneqq\left(\adv_1,\adv_2\right)$ that run in time $O(\poly(\sum_{i=1}^z \bar\Delta_i$, $\lambda))$ there is a negligible function $\mu(\cdot)$, such that:

\[ \Pr\left[
\begin{array}{l}
\pk, \sk\leftarrow\setup{\gmp}(1^\lambda, \intervalvec, S, z)\\
\messvec,\text{state} \leftarrow \adv_1(1^\lambda,\pk,  \intervalvec)\\
\hat{p} \leftarrow\gen{\gmp}({\messvec}, \pk, \sk)\\
\solsvec, \messvec \leftarrow\solve{\gmp}(\pk, \puzzvec)\\
\hline
j,  \pi', m_j \leftarrow \adv_2(\pk, \solsvec,  \hat{p},\text{state})\\
 
\text{s.t. } 
\verify{\gmp}(\pk, j, m_j, {\pi}', g_j)= 1\\
m_j\notin \lang \\

\end{array} 
\right]\leq \negl\]
where $ {\messvec}=[m_1,\ldots,m_z]$, and $g_j\in  \commvec\in  \hat{p}$.
\end{definition}

\begin{definition}[Security]\label{def::GC-TLP-security}
A \gm is secure if it satisfies solution-privacy and solution-validity, w.r.t.\ Definitions~\ref{Def::GM-TLP::Solution-Privacy} and~\ref{Def::GM-TLP::Solution-Validity}, respectively.
\end{definition}

\subsection{Strawman Solutions}\label{subsec::strawman}
Before presenting our solution, we sketch potential schemes that provide variable intervals $\bar\Delta_j\neq \bar\Delta_{j+1}$ (or approximate them) and highlight their limitations to motivate our protocol.

One could use \tlp as a black-box by symmetrically encrypting each message and including the key in the preceding puzzle to construct a chain. 
This would have varied intervals but would require a fresh RSA key for each message imposing a computational and communication costs for \cl.
Another approach would be to employ \ctlp~\cite{Abadi-C-TLP} as a black-box by generating dummy puzzles to precede every real puzzle in the chain and approximate variable duration.
Specifically, \cl identifies the largest value $\Delta'$ that exactly divides all time intervals $\bar{\Delta}_j$.
To generate each puzzle $p_i$ containing message $m_i$, \cl computes $c_i=\frac{\bar{\Delta}_i}{\Delta}$, then uses \ctlp to produce $c_i$ chained puzzles: $c_i-1$ dummy puzzles followed by a puzzle containing $m_i$.
Given the example in \autoref{sec:introduction}, with two messages that are to be kept hidden for 24 hours and 40 hours, respectively, this would require a fixed interval of 8 hours and would produce 5 puzzles, in order: two dummy puzzles, a puzzle containing the message to be kept hidden for 24h, another dummy puzzle, and a puzzle containing the message to be kept hidden for 40h.
Although only one RSA modulus is needed, \cl has to perform additional modular exponentiation and transmit dummy puzzles to \se.
These costs are linear with the total number of dummy puzzles, which can explode when $\Delta'$ has to be very small to accurately capture the distinct intervals.
For example, combining a 99 hour puzzle and a 100 hour puzzle would require 100 total puzzles, a 50x increase in communication costs.

The protocol described in the next section offers a high level of flexibility (w.r.t.\ the size of intervals) without suffering from the aforementioned downsides.

\subsection{Generic Chained Time-lock Puzzle (\gc)}\label{subsec:gc-tlp-protocol}

We present \gc that realizes \gm.
At a high level, \gc lets a client \cl encode a vector of messages $\messvec=[m_1,\ldots,$ $m_z]$ so a server \se learns each message $m_j$ at time $t_j\in \timevec, \forall j \in [z]$.
In this setting, \cl is available only at an earlier time $t_0$, where $t_0< t_1$.

In \gc, we use the chaining technique, similar to \ctlp, and define the interval between puzzles as $\bar\Delta_j=t_j-t_{j-1}$, $ \intervalvec=[\bar\Delta_1,\ldots, \bar\Delta_z]$.
In the setup, for each $\bar\Delta_j$, \cl generates a set of parameters, including a time-dependent parameter $\ex_j$, which allows the message to be encoded with its corresponding time interval.
In the puzzle generation phase, \cl uses the above parameters to encrypt each message $m_j$.
The information required for decrypting $m_j$ is contained in the ciphertext of message $m_{j-1}$.
\se must solve the puzzles sequentially in the specified order, starting with extracting message $m_1$, which uniquely has the values required to solve it publicly available.
Below, we present \gc in detail.

\begin{enumerate}[leftmargin=.5cm]

\item \brokenuline{$\setup{\gmp}(1^\lambda, \intervalvec, S, z)\rightarrow (\pk,\sk)$}.
\label{GC-TLP::step::setup}
Involves \cl.  
    \begin{enumerate}[leftmargin=5mm]
    \item compute $N=q_1\cdot q_2$, where $q_i$ are large randomly chosen prime number.
    Next, compute Euler's totient function of $N$, as: $\phi(N)=(q_1-1)\cdot(q_2-1)$.

    \item $\forall j \in [z]$, set $T_j=S\cdot\bar\Delta_j$ as the number of squarings needed to decrypt an encrypted message $m_j$ after message $m_{j-1}$ is revealed, where $S$ is the maximum number of squarings modulo $N$ per second the strongest server can perform, and set $\timeexpvec=[T_1$, $\ldots$, $T_z]$.

    \item\label{GC-TLP::compute-a-values} compute values  $\ex_j=2^{T_j}\bmod \phi(N), \forall j \in [z]$, which yields vector $\expvec=[\ex_1,\ldots, \ex_z]$.

    \item pick fixed size random generators: $r_j\pick\mathbb{Z}^*_N,\forall j\in [z+1]$ with $|r_j|=\omega_1$, and set $\randvec=[r_2,\ldots,r_{z+1}]$.
    Generate $z$ symmetric-key encryption keys: $\keyvec=[k_1,\ldots,k_z]$ and pick $z$ fixed size sufficiently large random values: $\witnvec=[d_1,\ldots,d_z]$, with $|d_j|=\omega_2$.

    \item set public key as $\pk\coloneqq(\textit{aux}$, $N$, \timeexpvec, $r_1$, $\omega_1$, $\omega_2)$ and secret key as $\sk \coloneqq (q_1$, $q_2$, \expvec, \keyvec, \randvec, $\witnvec)$, where $\textit{aux}$ contains a hash function's description and the size of the random values.
    Output \pk and \sk.
    \end{enumerate}

\item \brokenuline{$\gen{\gmp}(\messvec, \pk, \sk)\rightarrow \hat{p}$}.
\label{GC-TLP::step::Generate-Puzzle}
Involves \cl. 
Encrypt the messages $\forall j \in [z]$:
    \begin{enumerate}[leftmargin=5mm]
    \item set $\pk_j \coloneqq (N,T_j, r_j)$ and $\sk_j\coloneqq(q_1, q_2, \ex_j, k_j)$.
    Recall $r_1$ is in \pk; for $j > 1$, $r_j$ is in \randvec.
    \item generate a puzzle by calling the puzzle generator algorithm (in the RSA-based puzzle scheme):\\ 
    $\gen{\tlp}(m_j||d_j||r_{j + 1},\pk_j, \sk_j) \rightarrow p_j$
    \item commit to each message: $\hash(m_j||d_j)=g_j$ and output $g_j$.
    \item output ${p}_j$ as puzzle. 
    \end{enumerate}
This phase yields vectors of puzzles: $\puzzvec=[ {p}_1,\ldots,{p}_z]$ and commitments: $\commvec=[g_1,\ldots,g_z]$.
Let $\hat p\coloneqq(\puzzvec, \commvec)$.
All public parameters and puzzles are given to the server at time $t_0 < t_1$, where $\bar\Delta_1=t_1-t_0$, and $\commvec$ is sent to the public.
This algorithm can be easily parallelized over the values of $j$.

\item\brokenuline{$\solve{\gmp}(\pk, \puzzvec)\rightarrow (\solsvec, \messvec)$}.
\label{GC-TLP::step::Solve-Puzzle}
Involves \se. 
Decrypt the messages in order from $j=1,\forall j\in [z]$:
    \begin{enumerate}[leftmargin=5mm]
    \item if $j=1$, $r_1\in \pk$; otherwise, $r_j$ is obtained by solving and parsing the previous puzzle.
    
    \item set $\pk_j\coloneqq(N,T_j,r_j)$.
    
    \item call the puzzle solving algorithm in the \tlp scheme: \\
    $\solve{\tlp}(\pk_j, p_j)\rightarrow x_j$, where $p_j \in \puzzvec$.
    
    \item parse $x_j=m_j||d_j||r_{j+1}$ as:
        \begin{enumerate}
        \item $\parse(\omega_1, m_j||d_j||r_{j+1})\rightarrow (m_j||d_j, r_{j+1})$.
        \item $\parse(\omega_2, m_j||d_j)\rightarrow (m_j, d_j)$.
        \item output $s_j =  m_j||d_j$ and $m_j$.
        \end{enumerate}
    \end{enumerate}

    By the end of this phase, vectors of solutions $\solsvec=[s_1,\ldots,s_z]$ and plaintext messages $\messvec=[m_1,\ldots,m_z]$ are generated. 

\item \brokenuline{$\prove{\gmp}(\pk,s_j)\rightarrow \pi_j$}.
\label{GC-TLP::step::prove-}
Involves \se.
    \begin{enumerate}[leftmargin=5mm]
    \item parse $s_j$ into $(m_j,d_j)$.
    \item send $m_j$ and $\pi_j=d_j$ to the verifier.
    \end{enumerate}

\item \brokenuline{$\verify{\gmp}(\pk, j, m_j, \pi_j, g_j)\rightarrow a_j \in \bin$}.
\label{GC-TLP::step::verify-}
Involves the verifier, e.g., the public or \cl.
    \begin{enumerate}[leftmargin=5mm]
    \item verify the commitment: $\hash(m_j, \pi_j)\qequal g_j$.
    \item if verification succeeds, accept the solution and output $1$; otherwise, output $0$.
    \end{enumerate}

\end{enumerate}

Note that \cl can extend a puzzle chain later, only retaining \pk and \sk, without having to be online in the intervening time.
A server must fully solve the existing chain before starting on new puzzles.
Appendix~\ref{sec::extending-a-puzzle-chain} specifies the steps required.

\begin{theorem}\label{theorem::GC-TLP-sec}
\gc is a secure multi-instance time-lock puzzle, w.r.t.\ Definition~\ref{def::GC-TLP-security}.
\end{theorem}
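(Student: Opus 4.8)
The plan is to prove the two halves of Definition~\ref{def::GC-TLP-security} separately: solution-validity (Definition~\ref{Def::GM-TLP::Solution-Validity}) and solution-privacy (Definition~\ref{Def::GM-TLP::Solution-Privacy}). I would dispose of validity first by a direct reduction to the binding of the hash-based commitment, equivalently the collision-resistance of \hash. In \gc the verifier accepts $(j,m_j,\pi')$ exactly when $\hash(m_j || \pi')=g_j$, while the honest \cl published $g_j=\hash(m_j^\ast || d_j)$ for the true message $m_j^\ast$ and secret randomness $d_j$ recovered in $s_j$. Hence any PPT $\adv=(\adv_1,\adv_2)$ that wins the validity game produces an opening $(m_j,\pi')$ of $g_j$ with $m_j\neq m_j^\ast$, i.e.\ a pair $(m_j || \pi')\neq(m_j^\ast || d_j)$ whose hashes collide. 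A standard wrapper turns such an adversary into a collision-finder (a binding-breaker) running in the same polynomial time, so by collision-resistance the winning probability is \negl. The fact that $d_j$ is itself time-locked is irrelevant here, since validity concerns only the published $g_j$.

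The substantive part is privacy, for which I would run a hybrid argument reducing to the security of the underlying RSA-based \tlp (Theorem~\ref{theorem::R-LTP-Sec}) together with commitment hiding. Fix the index $j$ of Definition~\ref{Def::GM-TLP::Solution-Privacy}. The key structural observation is that the generator $r_{i+1}$ needed to solve $p_{i+1}$ is embedded in the solution $s_i$ of $p_i$ and is otherwise uniform and secret; thus, starting from the public $r_1$, recovering any solution $s_k$ requires evaluating the composed squaring map $r_1\mapsto w_1\mapsto r_2\mapsto\cdots\mapsto w_k$ with $w_i:=r_i^{2^{T_i}}\bmod N$, an iterated sequential function of $\sum_{i=1}^k T_i=S\sum_{i=1}^k\bar\Delta_i$ steps. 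Since $\adv_2$ runs in time $\delta(\sum_{i=1}^j\bar\Delta_i)<\sum_{i=1}^j\bar\Delta_i$ with bounded parallelism, it cannot finish $p_j$, so every generator $r_{j+1},\ldots,r_{z+1}$ stays hidden and $p_j,\ldots,p_z$ appear as fresh \tlp instances whose generators $\adv_2$ never learns; the accompanying solutions at indices $k\geq j$ are exactly the ones the definition asks to keep hidden. I would then define hybrids $\mathsf{H}_j,\ldots,\mathsf{H}_z$ that, one index at a time for $k\geq j$, switch the encoded message inside $p_k$ from $m_{0,k}$ to $m_{1,k}$ (holding $d_k$ and $r_{k+1}$ fixed) and update the published $g_k$ accordingly. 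Consecutive hybrids are indistinguishable by (i) commitment hiding, so the change in $g_k$ is undetectable, and (ii) a reduction to \tlp security in which the challenge puzzle carries the cumulative interval $\sum_{i=1}^k\bar\Delta_i$ as its effective hiding time, while the polynomial-time generation adversary simulates all non-challenge puzzles by direct squaring (without the trapdoor) and the time-bounded part runs $\adv_2$.

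I expect the main obstacle to be aligning the time budgets across the chain. A naive reduction that embeds the challenge as a single $p_j$ with interval $\bar\Delta_j$ fails, because $\adv_2$ may run for $\delta(\sum_{i=1}^j\bar\Delta_i)\gg\bar\Delta_j$, far outside the hiding window of one puzzle. Correctly charging the cumulative interval $\sum_{i=1}^k\bar\Delta_i$ to the challenge—justified by the iterated-sequential-function structure of the chained squarings and the secrecy of the intermediate generators—while simultaneously simulating the remaining chain within the polynomial generation-phase budget is the crux; this is precisely the point where the \ctlp analysis must be generalized from a fixed interval to unequal intervals $\bar\Delta_j\neq\bar\Delta_{j+1}$. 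The remaining ingredients, namely the symmetric-key masking of the solution (subsumed by Theorem~\ref{theorem::R-LTP-Sec}) and commitment hiding, are routine.
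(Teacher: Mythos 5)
Your proposal is correct, and its ingredients coincide with the paper's: validity is reduced to collision-resistance of \hash exactly as in Theorem~\ref{GC-TLP::Solution-Validity}, and privacy rests on the security of the underlying RSA-based \tlp (Theorem~\ref{theorem::R-LTP-Sec}), the secrecy of the chained generators, and the hiding of the hash commitments in the random-oracle model. Where you genuinely differ is in how the privacy half is organized. The paper first proves a standalone generator-secrecy lemma (Lemma~\ref{lemma::Next-Generator-Privacy}: $r_j$ is uniform in $\mathbb{Z}^*_N$ and travels encrypted inside $s_{j-1}$, so recovering it early means breaking the symmetric-key encryption or guessing), and then argues privacy (Theorem~\ref{Solution-Privacy}) by case analysis: for $z=1$ the scheme \emph{is} the original TLP, while for $z>1$ finding $s_j$ early forces the adversary either to solve a prior puzzle early (breaking TLP) or to obtain $r_j$ without solving $p_{j-1}$ (contradicting the lemma), with the random oracle ruling out leakage through $g_j$. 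You instead run an explicit per-index hybrid argument for $k \ge j$, each hop charged to commitment hiding plus a TLP reduction that simulates non-challenge puzzles by direct squaring and credits the challenge with the cumulative interval $\sum_{i=1}^{k}\bar\Delta_i$. Your formalization is tighter: it makes the reduction's simulation explicit and surfaces the point the paper's prose glosses over, namely that $\adv_2$'s budget $\delta(\sum_{i=1}^{j}\bar\Delta_i)$ can dwarf the single interval $\bar\Delta_k$, so cumulative accounting is essential and is sound only because the intermediate generators stay hidden. Note, however, that this cumulative-accounting step---which you correctly flag as the crux but leave as an anticipated obstacle rather than discharge---does not follow from single-puzzle TLP security alone: the embedded challenge $p_k$ contains only $T_k = S\cdot\bar\Delta_k$ squarings, so its indistinguishability over the longer window must be backed by a generator-secrecy reduction, which is precisely the role of the paper's Lemma~\ref{lemma::Next-Generator-Privacy}; once you promote your structural observation (uniformity and secrecy of $r_{i+1}$ inside $s_i$) to such a reduction against the symmetric-key encryption, your argument closes and is, if anything, more rigorous than the paper's.
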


\subsection{Security Analysis of \gc}\label{sec::GC-Proof}

In this section, we prove the security of \gc, i.e.,  Theorem~\ref{theorem::GC-TLP-sec}.

\begin{proof} 
The proof has similarities with that of \ctlp~\cite{Abadi-C-TLP}.
Nevertheless, it has significant differences.
First, we prove a solver cannot find the parameters needed to solve $j$-th puzzle, without solving the previous puzzle, $(j-1)\text{\small{-th}}$.

\begin{lemma}\label{lemma::Next-Generator-Privacy}
Let $N$ be a large RSA modulus, $k$ be a random key for symmetric-key encryption, and $\lambda=\log_2(N)=\log_2(k)$ be the security parameter.
In \gc, given puzzle vector \puzzvec and public key \pk, an adversary $\adv\coloneqq(\adv_1,\adv_2)$, defined in Section~\ref{Section::Multi-instance-Time-lock Puzzle-Definition}, cannot find the next group generator $r_j$, where $r_j \pick \mathbb{Z}^*_N$ and $j\geq1$, significantly in time smaller than $T_j=\delta(\sum_{i=1}^{j-1} \bar\Delta_i)$, except with a negligible probability in the security parameter, \negl.
\end{lemma}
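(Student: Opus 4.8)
The plan is to proceed by induction on the chain position $j$ and to reduce the task of finding $r_j$ to the security of the underlying RSA-based TLP (Definition~\ref{def::TLP-sec}, guaranteed by Theorem~\ref{theorem::R-LTP-Sec}), so that the statement ultimately rests on the sequential squaring assumption, the hardness of factoring $N$, and the IND-CPA security of the symmetric-key encryption. The base case $j=1$ is immediate, since $r_1$ is published in \pk and the threshold $\delta(0)$ is vacuous. For the inductive step, the key structural observation is that in \gc the value $r_j$ occurs in exactly one place: inside the plaintext $m_{j-1}||d_{j-1}||r_j$ that is time-locked in the preceding puzzle $p_{j-1}$. Hence outputting $r_j$ is tantamount to breaching the confidentiality of the solution of $p_{j-1}$, and a search-to-decision argument lets me convert such a breach into a win in the TLP indistinguishability game.

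Concretely, I would assume for contradiction an adversary $\adv\coloneqq(\adv_1,\adv_2)$ that outputs $r_j$ with non-negligible probability while $\adv_2$ runs in time significantly below $\delta(\sum_{i=1}^{j-1}\bar\Delta_i)$, and build a distinguisher $\mathcal{B}$ against the TLP underlying $p_{j-1}$. Here $\mathcal{B}$ submits two candidate solutions that agree on the $m_{j-1}||d_{j-1}$ prefix but carry two independent uniform values $r_j^{(0)},r_j^{(1)}$ in the generator slot, receives the challenge puzzle embedding $s_b$, and simulates the remaining public parameters and puzzles of the chain. If \adv returns $r_j^{(b)}$, then $\mathcal{B}$ recovers $b$; since the adversary cannot otherwise output the embedded generator except with probability $2^{-\omega_1}$ (with $|r_j|=\omega_1=\Theta(\lambda)$), this contradicts TLP security. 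IND-CPA enters precisely where the symmetric ciphertext carrying $r_j$ must hide it while the time-lock key $k_{j-1}$ remains locked, and factoring hardness rules out the trapdoor shortcut of recovering $\phi(N)$ to compute $2^{T_{j-1}}\bmod\phi(N)$ and open $p_{j-1}$ without the sequential squarings.

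Two complications require care. First, the simulated tail $p_j,\ldots,p_z$ depends on the embedded $r_j^{(b)}$, which $\mathcal{B}$ does not know; I would resolve this by generating those puzzles from dummy messages and arguing, via a short hybrid that again appeals to TLP security and IND-CPA, that a time-bounded $\adv_2$ --- which provably cannot solve as far as $p_{j-1}$ within its budget --- cannot detect the substitution. Second, and more delicate, the single-puzzle TLP guarantee only forbids opening $p_{j-1}$ faster than its own interval $\bar\Delta_{j-1}$, whereas the lemma demands the cumulative bound $\sum_{i=1}^{j-1}\bar\Delta_i$. I would close this gap with the inductive hypothesis together with the iterated sequential function assumption restated in Appendix~\ref{sec::sequential-squaring}: by induction $r_{j-1}$ stays hidden until time $\delta(\sum_{i=1}^{j-2}\bar\Delta_i)$, so $\adv_2$ cannot even begin the $T_{j-1}$ sequential squarings needed to open $p_{j-1}$ before that moment, and sequentiality then contributes essentially $\bar\Delta_{j-1}$ more; crucially, the bounded pool of parallel processors $\xi(\cdot)$ cannot shortcut this inherently sequential computation.

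I expect the cumulative-time composition to be the main obstacle: the reduction must establish that the sequential lower bounds genuinely compose additively along the chain rather than merely bounding each puzzle in isolation, and that the simulation of the earlier links $p_1,\ldots,p_{j-2}$ --- which an adversary is permitted to solve --- never leaks $r_j$ ahead of time. Once these are settled, chaining the per-link indistinguishabilities shows that \adv's view is indistinguishable from one in which $r_j$ is replaced by an independent uniform value, whence $r_j$ is information-theoretically hidden and can be output only with the negligible probability \negl.
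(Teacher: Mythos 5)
Your proposal is correct in substance but takes a genuinely different, and considerably more rigorous, route than the paper's own proof. The paper's proof is two sentences and entirely informal: $r_j$ is uniform in $\mathbb{Z}^*_N$ and is encrypted along with the $(j-1)$-th solution $s_{j-1}$, so an adversary that has not performed enough squarings must either break the symmetric-key scheme (decrypting the relevant ciphertext) or guess $r_j$ outright, both of which succeed with probability at most \negl. There is no induction, no explicit reduction, no simulation of the remaining chain, and no argument for why per-puzzle security lifts to the cumulative bound $\delta(\sum_{i=1}^{j-1}\bar\Delta_i)$. Your plan supplies exactly the machinery the paper leaves implicit: the search-to-decision reduction (embedding two independent generators $r_j^{(0)},r_j^{(1)}$ and checking which one \adv outputs) is the honest way to turn ``finds $r_j$'' into a win in the game of Definition~\ref{def::TLP-sec}; the dummy-tail hybrid addresses a genuine simulation problem the paper never mentions (and quietly repairs your own earlier overstatement that $r_j$ lives in ``exactly one place'' --- it also enters $p_j$, where $r_j^{2^{T_j}}\bmod N$ masks $k_j$); and your identification of the additive composition of sequential lower bounds as the crux is precisely the step that the paper, following the \ctlp proof it inherits from, buries inside the phrase ``without performing enough squaring.'' What the paper's approach buys is brevity, at the cost of treating the chaining property as self-evident; what yours buys is a skeleton that could actually be completed, with the honest caveat that the composition step still requires a careful invocation of the iterated-sequential-function assumption of Appendix~\ref{sec::sequential-squaring} --- a gap in the literature's treatment of chained puzzles generally, not a defect specific to your argument.
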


\begin{proof}
 The next generator, $r_j$, is picked uniformly at random from $\mathbb{Z}^{*}_{N}$ and is encrypted along with the $(j-1)\text{\small{-th}}$ puzzle solution, $s_{j-1}$.
 For the adversary to find $r_j$ without performing enough squaring, it has to either break the security of the symmetric-key scheme, decrypt the related ciphertext $s_{i-1}$ and extract the random value from it, or guess $r_j$ correctly.
 In both cases, the adversary's probability of success is negligible \negl.
\end{proof}
  
Next, we prove \gc preserves a solution's privacy with regard to Definition~\ref{Def::GM-TLP::Solution-Privacy}.

\begin{theorem}\label{Solution-Privacy}
 Let $\intervalvec=[\bar\Delta_1,\ldots, \bar\Delta_z]$ be a vector of time parameters and $N$ be a strong RSA modulus.
 If the sequential squaring assumption holds, factoring $N$ is a hard problem, $\hash(\cdot)$ is a random oracle and the symmetric-key encryption is  IND-CPA secure, then \gc (which encodes $z$ solutions) is a privacy-preserving \gm according to Definition~\ref{Def::GM-TLP::Solution-Privacy}.
\end{theorem}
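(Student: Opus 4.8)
The plan is to prove privacy by a hybrid argument that reduces the indistinguishability of \gc to the security primitives listed in the statement: the IND-CPA security of the symmetric-key encryption, the sequential squaring assumption, the hardness of factoring $N$, and the random-oracle property of $\hash$. The adversary $\adv=(\adv_1,\adv_2)$ is, per Definition~\ref{Def::GM-TLP::Solution-Privacy}, constrained so that $\adv_2$ runs in depth $\delta(\sum_{i=1}^j\bar\Delta_i)<\sum_{i=1}^j\bar\Delta_i$ with only polynomially many parallel processors, so it cannot complete the squarings needed to unlock the $k$-th puzzle for the challenge index $k$ it attacks. The goal is to show that $\adv_2$ guesses the bit $\rb_k$ with advantage negligibly above $\frac12$.

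**First I would** invoke Lemma~\ref{lemma::Next-Generator-Privacy} to establish the crucial structural fact: since each generator $r_j$ is encoded only inside the solution $s_{j-1}$ of the preceding puzzle, the adversary who cannot solve puzzle $j-1$ in time learns nothing about $r_j$ beyond a negligible probability. This lets me treat the $k$-th puzzle $p_k=\gen{\tlp}(m_k\|d_k\|r_{k+1},\pk_k,\sk_k)$ as a \emph{single} RSA-based TLP instance whose generator $r_k$ is, from $\adv_2$'s view, an unopened fresh group element. The reduction is then essentially a reduction to Theorem~\ref{theorem::R-LTP-Sec}: a \gc-distinguisher for index $k$ yields a distinguisher against the underlying RSA-based TLP for a single puzzle with time parameter $T_k=S\cdot\bar\Delta_k$. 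I would build the reduction $\mathcal{B}$ that receives the challenge TLP $p_k$, embeds it in the $k$-th position, and simulates every other puzzle honestly (it knows $\sk$ for all indices it generates itself, so it can fabricate the rest of $\puzzvec$ and $\commvec$ consistently).

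**The key steps in order** are: (i) define a sequence of hybrids $H_0,\ldots$ that successively replace the committed plaintexts and ciphertexts of puzzles $1,\ldots,k-1$, arguing each swap is undetectable by the random-oracle hiding of $\hash(m_j\|d_j)=g_j$ (the $d_j$ are large fresh random values, so the commitment reveals nothing) together with IND-CPA security, since their keys are independent; (ii) at the challenge index $k$, use Lemma~\ref{lemma::Next-Generator-Privacy} to argue $r_k$ is hidden, so the puzzle behaves as an isolated instance; (iii) apply Theorem~\ref{theorem::R-LTP-Sec}, whose hypotheses are exactly the ones assumed here (sequential squaring, hardness of factoring, semantic security of SKE), to bound $\adv_2$'s distinguishing advantage on $m_{0,k}$ versus $m_{1,k}$ by \negl; and (iv) collect the hybrid gaps, each negligible, into a single negligible bound.

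**The hard part will be** handling the dependency chain correctly in the reduction: because solving puzzle $j$ reveals $r_{j+1}$, the hybrids cannot treat the puzzles as fully independent, and I must be careful that the time/depth budget of $\adv_2$ is respected across the embedding so that the sequential-squaring assumption is applied with the right parameter $T_k$ rather than the cumulative $\sum_{i=1}^k\bar\Delta_i$. Concretely, I need Lemma~\ref{lemma::Next-Generator-Privacy} to guarantee that $\adv_2$, bounded below $\sum_{i=1}^{k}\bar\Delta_i$ in depth, reaches the $k$-th instance effectively ``cold'' — with a uniformly random, hidden generator — so that the single-instance TLP hardness with parameter $T_k$ suffices and the parallel-processor bound $\xi(\max(\bar\Delta_1,\ldots,\bar\Delta_j))$ does not let the adversary amortize work across the chain. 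Making this depth-vs-total-time bookkeeping rigorous, and confirming the simulator $\mathcal B$ runs within the allotted time while faithfully reproducing the joint distribution of $(\puzzvec,\commvec)$, is where the real care is required; the cryptographic swaps themselves are routine given Theorem~\ref{theorem::R-LTP-Sec}.
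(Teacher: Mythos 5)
Your proposal follows essentially the same route as the paper's proof: both hinge on Lemma~\ref{lemma::Next-Generator-Privacy} to argue that each generator $r_j$ stays hidden until the preceding puzzle is solved, reduce the target puzzle to the single-instance RSA-based TLP security of Theorem~\ref{theorem::R-LTP-Sec}, and dispose of leakage from the commitments $g_j$ via the random-oracle property of $\hash(\cdot)$. The paper's version is a terse two-case analysis ($z=1$ vs.\ $z>1$) that leaves the challenge embedding and the depth bookkeeping you worry about implicit, so your explicit hybrid/reduction structure is just a more rigorous rendering of the same argument.
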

\begin{proof}
For adversary $\adv\coloneqq(\adv_1,\adv_2)$, where $\adv_1$ runs in total time $O(\poly(\sum_{i=1}^z \bar\Delta_i$, $\lambda))$, $\adv_2$ runs in time $\delta(\sum_{i=1}^j \bar\Delta_{i})<\sum_{i=1}^j \bar\Delta_{i}$ using at most $\xi(\max(\bar\Delta_1,\ldots, \bar\Delta_j))$ parallel processors, and $j\in [z]$, we have two cases.
In case $z=1$: to find $s_1$ earlier than $\delta(\bar\Delta_1)$, it has to break the original TLP scheme~\cite{Rivest:1996:TPT:888615}, as the two schemes are identical, yet TLP is known to be secure from Theorem~\ref{theorem::R-LTP-Sec}.
In case $z>1$: to find $s_j$ earlier than $T_j=\delta(\sum_{ i=1}^{j} \bar\Delta_i)$, it has to either find one of the prior solutions earlier than its predefined time, which would require it to break the TLP scheme again, or to find the related generator, $r_j$, earlier than it is supposed to yet its success probability is negligible due to Lemma~\ref{lemma::Next-Generator-Privacy}.
 
The adversary may want to find partial information of the commitment, $g_j$, pre-image (which contains the solution) before solving the puzzle.
But, this is infeasible for a PPT adversary, given that $\hash(\cdot)$ is a random oracle.
We conclude that \gc is a {privacy-preserving} generic multi-instance time-lock puzzle scheme.
\end{proof}

Next, we present the theorem and proof for \gc solution's validity, w.r.t.\ Definition~\ref{Def::GM-TLP::Solution-Validity}.

\begin{theorem} \label{GC-TLP::Solution-Validity}
 Let $\hash(\cdot)$ be a hash function modeled as a random oracle.
 Then, \gc preserves a solution's validity, according to Definition~\ref{Def::GM-TLP::Solution-Validity}.
\end{theorem}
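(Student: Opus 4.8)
The plan is to reduce the solution-validity of \gc to the binding property of the underlying hash-based commitment, which itself follows from modeling \hash as a random oracle. First I would recall the concrete verification rule of \gc: $\verify{\gmp}(\pk, j, m_j, \pi', g_j)$ returns $1$ precisely when $\hash(m_j || \pi') = g_j$. I would also recall that in an honest run of $\gen{\gmp}$ on the message vector $\messvec$ chosen by $\adv_1$, each commitment is formed as $g_j = \hash(\hat m_j || \hat d_j)$, where $\hat m_j$ is the genuine $j$-th message and $\hat d_j \in \witnvec$ is the random witness drawn by Setup. By completeness of \gc, $\solve{\gmp}$ recovers exactly $s_j = \hat m_j || \hat d_j$, so the honest opening $(\hat m_j, \hat d_j)$ of $g_j$ is well defined; since $\solsvec$ is passed to $\adv_2$, the adversary in fact knows this opening.

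Next I would make the winning event precise. For $\adv_2$ to succeed on index $j$ it must output $(j, \pi', m_j)$ with $\hash(m_j || \pi') = g_j$ (verification accepts) while $(g_j, m_j) \notin \lang$. I would argue that membership $(g_j, m_j) \in \lang$ coincides with $m_j$ being the message honestly committed in $g_j$, i.e.\ $m_j = \hat m_j$; hence $(g_j, m_j) \notin \lang$ forces $m_j \neq \hat m_j$. Combined with the accepting condition, this yields $\hash(m_j || \pi') = g_j = \hash(\hat m_j || \hat d_j)$ with $m_j || \pi' \neq \hat m_j || \hat d_j$, i.e.\ a genuine collision of \hash on two distinct preimages.

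The core step is then a reduction exhibiting this collision. I would construct an algorithm that runs the validity experiment internally, answers all of $\adv$'s queries to \hash by lazy sampling, and, whenever $\adv_2$ wins, outputs the colliding pair $(m_j || \pi', \hat m_j || \hat d_j)$. Since $\adv = (\adv_1, \adv_2)$ is PPT it issues at most polynomially many queries $q$ to the random oracle, and a standard birthday-style bound in the random-oracle model limits the probability that any two distinct preimages map to the same digest by $O(q^2 / 2^\lambda)$, which is \negl in the security parameter $\lambda = \log_2 N$. Consequently the probability that $\adv_2$ outputs an accepting proof for an $m_j \notin \lang$ is negligible, establishing Definition~\ref{Def::GM-TLP::Solution-Validity}.

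I expect the main obstacle to be the semantic step rather than the counting: one must justify that ``$m_j \notin \lang$'' truly corresponds to $m_j \neq \hat m_j$, so that an accepting proof forces a collision rather than merely exhibiting a second valid witness for the genuine message. I would handle this by identifying $\lang$-membership of $(g_j, m_j)$ with $m_j$ being the unique honestly committed message of $g_j$, making any accepted $m_j \neq \hat m_j$ a binding break. A secondary subtlety is that $\adv_2$ already possesses $\hat d_j$ through $\solsvec$; this does not assist it, since producing a second preimage of a fixed random-oracle output remains hard regardless of knowing one preimage.
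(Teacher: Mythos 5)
Your proposal is correct and takes essentially the same route as the paper: both reduce solution-validity to the binding of the hash-based commitment, i.e., an accepting tuple $(j,\pi',m_j)$ with $m_j\notin\lang$ yields $\hash(m_j||\pi')=g_j=\hash(\hat m_j||\hat d_j)$ with $m_j\neq\hat m_j$, a collision that is infeasible to find in the random oracle model. The paper states this in two sentences; your version simply fills in the explicit lazy-sampling reduction, the $O(q^2/2^\lambda)$ bound, and the (correct) observation that the adversary's knowledge of the honest opening $\hat d_j$ via \solsvec does not help, since producing a second preimage of a random-oracle output remains hard.
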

\begin{proof}
 The proof is reduced to the security of the hash-based commitment.
 Given the commitment $g_j=\hash(m_j$, $d_j)$ and an opening $\pi\coloneqq(m_j$, $d_j)$, for an adversary to break solution validity, it must find $(m'_j,d'_j)$, such that $\hash(m'_j,d'_j)=g_j$, where $m_j\neq m'_j$, i.e., a collision.
 However, this is infeasible for a PPT adversary, as $\hash(\cdot)$ is collision-resistant in the random oracle model.
\end{proof}
\renewcommand{\qedsymbol}{$\blacksquare$}
As indicated in the proofs of Theorems~\ref{Solution-Privacy} and~\ref{GC-TLP::Solution-Validity}, \gc preserves the privacy and validity of a solution, respectively.
Hence, per Definition~\ref{def::GC-TLP-security}, \gc is a \emph{secure} generic multi-instance time-lock puzzle scheme.
\end{proof}


\section{Delegated Time-lock Puzzle}\label{sec:Delegated-Time-lock-Puzzle}

We introduce the notion of Delegated TLP (\dgm), which is an enhancement of \gm.
We then proceed to present Efficient Delegated Time-Lock Puzzle (\dgc), a protocol embodying the properties of \dgm.
The key features that \dgm offers include:

\begin{enumerate}[leftmargin=.48cm]

\item \emph{Client-side delegation}, that allows \cl to delegate setup and puzzle generation phases to a third-party helper \tpc (modeled as a semi-honest adversary) while preserving the privacy of its plaintext messages.
\item \emph{Server-side delegation}, which enables \se to offload puzzle solving to another third-party helper \tps(modeled as a rational adversary with a utility function equal to the sum of payments made to the helper minus the helper's computational costs) while ensuring the privacy and integrity of the plaintext solutions.
\item \emph{Accurate-time solution delivery}, that guarantees each solution is delivered before a pre-defined time.
\end{enumerate}
To capture the latter property, we parameterize \dgm with an \textit{upper bound} and a \textit{fair payment algorithm}.

An upper bound explicitly specifies the time by which a helper must find a solution.
This parameter is not explicit in existing TLPs: the only time parameter is the lower bound.
To establish the upper bound, we introduce the ``Customized Extra Delay Generating'' function, which produces a time parameter, denoted as $\Psi$.
This specifies the duration after $\Delta$ when the helper is expected to discover the solution to the puzzle.

\begin{definition}[Customized Extra Delay Generating Function] \label{def::Customized-extra-delay-predicate}
Let $ToC$ be the type of computational step a scheme relies on, $S$ be the maximum number of steps a server with the best computational resources can compute per second, $\Delta$ be the interval in seconds a message must remain private, and $aux_{ID}$ be auxiliary data about a specific server identified with $ID$, e.g., its computational resources.
The Customized Extra Delay Generating function is defined as:
\[\ced(ToC, S, \Delta, aux_{ID})\rightarrow \Psi_{ID}\]
It returns $\Psi_{ID}$: the extra time in seconds solver $ID$ requires to solve the puzzle in addition to $\Delta$.

\end{definition}

A fair payment algorithm compensates a party only when it submits a valid solution before a pre-defined deadline.
Implementing fair payments is crucial to guarantee the timely delivery of a solution by a third party helper \tps in the context of delegated TLP.
Note that simply relying on an upper bound would not guarantee that the solution will be delivered by a rational adversary that has already identified the solution.
This algorithm will be incorporated into a smart contract.

\subsection{Definition and Security Properties}\label{sec::DGM-TLP-def}

We initially present the syntax of \dgm in Definition~\ref{def::dgm-syntax} and then present its security properties, in Definitions~\ref{Def::DGM-TLP::Solution-Privacy},~\ref{Def::DGM-TLP::Solution-Validity}, and~\ref{def::dgm-tlp-fair-payment}.

\begin{definition}[Delegated Time-Lock Puzzle]\label{def::dgm-syntax}
A \dgm is a scheme between a client \cl, a server \se, a pair of third-party helpers (\tpc, \tps), and a smart contract \scc that consists of algorithms:
\begin{itemize}[leftmargin=.43cm]
\item[$\bullet$]\brokenuline{$\cl.\setup{\mtdgm}(1^\lambda, \intervalvec) \rightarrow (\cpk,\csk)$}.
A probabilistic algorithm run by \cl. It generates a set of public \cpk and private \csk parameters, given $\intervalvec=[\bar{\Delta}_1$, $\ldots$, $\bar{\Delta}_z]$, where $\sum_{i=1}^j \bar{\Delta}_i$ is the period after which the $j$-th solution is discovered.
It appends \intervalvec to \cpk and outputs $(\cpk,\csk)$.
\cl sends \cpk and \csk to \se.

\item[$\bullet$] \brokenuline{$\cl.\delegate{\mtdgm}(\messvec, \cpk, \csk)\rightarrow (\messvec^*, t_0)$}.
A probabilistic algorithm run by \cl. It encodes each element of $ \messvec=[m_1$, $\ldots$, $m_z]$ to produce a vector of encoded messages $\messvec^*=[m^*_1$,$\ldots$,$m^*_z]$ and time point $t_0$ when puzzles are provided to \tps.
\cl publishes $t_0$ and sends $\messvec^*$ to \tpc.

\item[$\bullet$] \brokenuline{$\se.\delegate{\mtdgm}(\ced$, $ToC$, $S$, \intervalvec, $aux$, $\adr_\tps$, $t_0$, $\Upsilon$, $\coinsvec)$ $\rightarrow \adr_\scc$}.
A deterministic algorithm executed by \se.
It runs $\ced(ToC$, $S$, $ \bar\Delta_i$, $aux)\rightarrow \Psi_i$, which uses the auxiliary information about the helper $aux$ to determine the acceptable delay for every $\bar\Delta_i$ in \intervalvec.
It generates a smart contract \scc, deployed into the blockchain with address $\adr_\scc$.
\se deposits
$\coins=\sum_{i=1}^z\coins_i$ coins into \scc, where $\coins_i \in \coinsvec$ is the amount of coins paid to $\tps$ for solving the $i$-th puzzle.
It registers $t_0, \extravec=[\Psi_1$, $\ldots$, $\Psi_z]$, $\adr_\tps$, and $\Upsilon$ in \scc.
It returns $\adr_\scc$, the address of the deployed \scc.
\se publishes $\adr_\scc$.

\item[$\bullet$]\brokenuline{$\setup{\mtdgm}(1^\lambda, \intervalvec, S, z)\rightarrow (\pk,\sk)$}.
A probabilistic algorithm run by \tpc. It outputs a set of public $\pk$ and private $\sk$ parameters.
\tpc sends \pk to \tps.

\item[$\bullet$] \brokenuline{$\gen{\mtdgm}({\messvec^*}, \cpk, \pk, \sk, t_0)\rightarrow \hat p$}.
A probabilistic algorithm run by \tpc. It outputs $\hat{p}\coloneqq(\puzzvec, \commvec)$, where \puzzvec is a puzzle vector, \commvec is a public statement vector w.r.t.\ language \lang and $\messvec^*$.
Each $j$-th element in vectors \puzzvec and \commvec corresponds to a solution $s_j$ that itself consists of $m_j^*$ (and possibly witness parameters).
\tpc sends \commvec to $\scc$.
It also sends \puzzvec to \tps at time $t_0$.

\item[$\bullet$]\brokenuline{$\solve{\mtdgm}(\extravec, aux, \pk, \puzzvec, \adr_\scc, \coinsvec', \coinsvec)\rightarrow (\solsvec, q)$}.
A deterministic algorithm run by \tps, in which $\coinsvec'$ is the vector of payments \tps expects.
It checks \coinsvec and \extravec.
If it does not agree on these parameters, it sets $q = 0$ and outputs $(, q)$ and the rest of the algorithms will not be run.
Else, it sets $q = 1$, finds the solutions, and outputs a vector \solsvec of solutions and $q$.

\item[$\bullet$] \brokenuline{$\prove{\mtdgm}(\pk, s_j)\rightarrow \pi_j$}.
A deterministic algorithm run by \tps. It outputs a proof $\pi_j$ asserting $m_j^*\in \lang$.

\item[$\bullet$]\brokenuline{$\register{\mtdgm}(s_j, \pi_j, \adr_\scc)\rightarrow t_j$}.
A deterministic algorithm run by \tps. It registers $m_j^*$ and $\pi_j$ in \scc, where $m_j^*\in s_j$.
It receives registration time $t_j$ from \scc.
It outputs $t_j$.

\item[$\bullet$] \brokenuline{$\verify{\mtdgm}(\pk, j, m_j^*, \pi_j, g_j, \Psi_j, t_j, t_0, \intervalvec, \Upsilon)\rightarrow \ver_j$}.
A deterministic algorithm run by \scc. It checks whether (i) proof $\pi_j$ is valid and (ii) the $j$-th solution was delivered on time.
If both checks pass, it outputs $\ver_j=1$; otherwise, it outputs $\ver_j=0$.

\item[$\bullet$] \brokenuline{$\pay{\mtdgm}(\ver_j, \adr_\tps, \coinsvec, j)\rightarrow \paid_j$}.
A deterministic algorithm run by \scc. It pays out based on the result of verification $\ver_j$.
If $\ver_j=1$, it sends $\coins_j \in \coinsvec$ coins to an account with address $\adr_\tps$ and sets $\paid_j=1$; otherwise, it sets $\paid_j=0$.
It outputs $\paid_j$.

\item[$\bullet$] \brokenuline{$\retrieve{\mtdgm}(\pk, \csk, m_j^*)\rightarrow m_j$}.
A deterministic algorithm run by \se. It retrieves message $m_j$ from $m_j^*$ and outputs $m_j$.
\end{itemize}

\noindent\dgm satisfies completeness and efficiency properties.
\begin{itemize}[leftmargin=.43cm]
\item[*] \textit{Completeness}.
For honest \cl, \se, \scc, \tpc, and \tps, $\forall j\in [z]$, it holds that:

\begin{itemize}[leftmargin=0cm]
\item[$\bullet$] $\solve{\mtdgm}(\extravec, aux$, \pk, $[p_1$, $\ldots$, $p_j])$, $\adr_\scc$, $\coinsvec'$, $\coinsvec) \rightarrow([s_1$, $\ldots$, $s_j], 1)$.
\item[$\bullet$] $\verify{\mtdgm}(\pk, j, m_j^*, \pi_j, g_j, \Psi_j, t_j, t_0, \intervalvec, \Upsilon)\rightarrow \ver_j = 1$.
\item[$\bullet$] $\pay{\mtdgm}(ver_j, \adr_\tps, \coinsvec, j)\rightarrow \paid_j=1$.
\item[$\bullet$] $\retrieve{\mtdgm}(\pk, \csk, m_j^*)\rightarrow m_j$, where $m_j\in \messvec$.
\end{itemize}

\item[*]\textit{Efficiency}.
The run-time of $\solve{\mtdgm}(\extravec$, $aux$, $\pk$, $[p_1$, $\ldots$, $p_j]$, $\adr_\scc$, $\coinsvec'$, $\coinsvec) \rightarrow ([s_1$, $\ldots$, $s_j]$, $1)$ is bounded by
a polynomial $\poly(\sum_{i=1}^j \bar\Delta_i, \lambda)$.
\end{itemize}
\end{definition}

\dgm's definition, similar to \gm, supports a solution's privacy and validity with the addition that privacy requires plaintext messages $[m_1$, $\ldots$, $m_z]$ to remain hidden from \tpc and \tps.

In Case~\ref{def::solution-privacy-from-helper} in Definition~\ref{Def::DGM-TLP::Solution-Privacy}, we state that given the algorithms' transcripts, an adversary that picks a pair of plaintext messages cannot tell which message is used for the puzzle with a probability significantly greater than $\frac{1}{2}$. 
In Case~\ref{def::solution-privacy-from-solver} in Definition~\ref{Def::DGM-TLP::Solution-Privacy}, we formally state that the privacy of an encoded solution $m_j^*$ requires $j$-th encoded solution to remain hidden from all adversaries that run in parallel in period $\sum_{i=1}^j \bar{\Delta}_i$.

\begin{definition}[Privacy]\label{Def::DGM-TLP::Solution-Privacy}
A \dgm is privacy-preserving if for all $\lambda$ and $\intervalvec=[\bar\Delta_1, \ldots, $ $\bar\Delta_z]$, any number of puzzles: $z\geq1$, any $j \in [z]$ the following hold:

\begin{enumerate}[leftmargin=.5cm]

\item\label{def::solution-privacy-from-helper}
For any PPT adversary $\adv_1$, there exists a negligible function $\mu(\cdot)$, such that:
\[\Pr\left[\begin{array}{l}
\cpk, \csk \leftarrow\cl.\setup{\mtdgm}(1^\lambda, \intervalvec)\\
\messvec_0, \messvec_1,\text{state}\leftarrow\adv_1(1^\lambda, \cpk, z)\\
\rb_i \pick \bin,\forall i \in [j]\\
\messvec^*, t_0 \leftarrow\cl.\delegate{\mtdgm}(\messvec_\rb, \cpk, \csk)\\
\adr_\scc \leftarrow \se.\delegate{\mtdgm}(\ced, ToC, S,\intervalvec, \\\hphantom{\adr_\scc \leftarrow} aux, \adr_\tps, t_0, \Upsilon, \coinsvec)\\
\pk,\sk\leftarrow\tpc.\setup{\mtdgm}(1^\lambda, \intervalvec, S, z)\\
\hat{p}\leftarrow \gen{\mtdgm}(\messvec^*, \cpk, \pk, \sk, t_0)\\
\solsvec, q \leftarrow\solve{\mtdgm}(\extravec, aux, \pk, \puzzvec, \adr_\scc, \coinsvec',\\\hphantom{\solsvec, q \leftarrow} \coinsvec)\\
\hline
b',k\leftarrow\adv_1(\pk, \cpk, \messvec_0, \messvec_1, \messvec^*, \text{state},t_0, \ced,\\\hphantom{b',k}
 ToC, aux, \extravec, \intervalvec, \solsvec,\hat{p}, \coinsvec, \adr_\tps, \adr_\scc)\\

\text{s.t.}\
b' = \rb_k \\

\end{array}\right]\hspace{-.3mm}\leq\hspace{-1mm} \frac{1}{2}+\mu(\lambda)\hspace{-1mm}\]
\noindent where 
$\messvec_0 = [m_{0,1}$,\ldots,$m_{0,z}]$,
$\messvec_1 = [m_{1,1}$,\ldots,$m_{1,z}]$,
$\messvec_\rb = [m_{\rb_1,1}$, \ldots, $m_{\rb_z,z}]$.

\item\label{def::solution-privacy-from-solver}
For any two randomized algorithms $\adv \coloneqq (\adv_2,\adv_3)$,
where $\adv_2$ runs in time $O(\poly(\sum_{i=1}^j \bar{\Delta}_i,\lambda))$
and $\adv_3$ runs in time $\delta(\sum_{i=1}^j \bar{\Delta}_i)$ $<\sum_{i=1}^j \bar{\Delta}_i$ using at most $\xi(\max(\bar{\Delta}_1, \ldots, \bar{\Delta}_j))$ parallel processors, there is a negligible function $\mu(\cdot)$, such that:

\[\Pr\left[\begin{array}{l}

\cpk, \csk\leftarrow\cl.\setup{\mtdgm}(1^\lambda, \intervalvec)\\
\messvec_0^*, \messvec^*_1, \text{state}\leftarrow\adv_2(1^\lambda, \cpk, \csk, z)\\
\adr_\scc\leftarrow\se.\delegate{\mtdgm}(\ced, ToC,\\
\hphantom{\adr_\scc\leftarrow} S, \intervalvec, aux, \adr_\tps, t_0, \Upsilon, \coinsvec)\\
\pk, \sk\leftarrow\tpc.\setup{\mtdgm}(1^\lambda, \intervalvec, S, z)\\
\rb_i\pick \bin,\forall i \in [j]\\
\hat{p}\leftarrow\gen{\mtdgm}( \messvec^*_\rb, \cpk, \pk, \sk, t_0)\\

\midrule

b', k \leftarrow \adv_3(\pk, \cpk,\hat{p}, \text{state}, \extravec, \intervalvec, \adr_\tps,\\
\hphantom{b', k \leftarrow} \adr_\scc, aux, \ced, ToC, t_0)\\
\text{s.t.}\
b' = \rb_k \\
\end{array} \right]\hspace{-.3mm}\leq\hspace{-1mm} \frac{1}{2}+\negl\]
\end{enumerate}

\noindent where 
$\messvec^*_0= [m^*_{0,1}$,\ldots, $m^*_{0,j}]$,
$\messvec^*_0= [m^*_{1,1}$,\ldots, $m^*_{1,j}]$,
$\messvec^*_\rb=[m^*_{\rb_1,1}$, $\ldots$, $m^*_{\rb_j,j}]$.
\end{definition}

Intuitively, solution validity requires that a prover cannot persuade a verifier
\begin{enumerate*}[label=(\arabic*)]
    \item to accept a solution that is not equal to the encoded solution $m^*_j$ or
    \item to accept a proof that has been registered after the deadline,
\end{enumerate*}
except for a probability negligible in the security parameter.

\begin{definition}[Solution-Validity]\label{Def::DGM-TLP::Solution-Validity}
A \dgm preserves a solution validity, if for all $\lambda$ and $ \intervalvec=[\bar{\Delta}_1,\ldots, \bar{\Delta}_z]$, any number of puzzles: $z\geq1$, all PPT adversaries $\adv\coloneqq(\adv_1,\adv_2)$ that run in time $O(\poly(\sum_{i=1}^z \bar{\Delta}_i$, $\lambda))$ there is a negligible function $\mu(\cdot)$, such that:

\[ \Pr\left[
    \begin{array}{l}
\cpk, \csk\leftarrow\cl.\setup{\mtdgm}(1^\lambda, \intervalvec)\\
\messvec, \messvec^*, \text{state}\leftarrow\adv_1(1^\lambda,\pk, \intervalvec, z)\\
\adr_\scc \leftarrow \se.\delegate{\mtdgm}(\ced, ToC, S, \intervalvec, aux,\\
\hphantom{\adr_\scc \leftarrow}  \adr_\tps, t_0, \Upsilon, \coinsvec)\\
\pk,\sk\leftarrow\tpc.\setup{\mtdgm}(1^\lambda, \intervalvec, S, z)\\
\hat{p}\leftarrow\gen{\mtdgm}({\messvec^*}, \cpk, \pk, \sk, t_0)\\
\solsvec, q\leftarrow\solve{\mtdgm}(\extravec, aux, \pk, \puzzvec, \adr_\scc, \coinsvec', \coinsvec)\\
\midrule

j, \pi', m_j^*, t_j, t'_j\leftarrow\adv_2(\pk, \cpk, \solsvec, \hat{p},\text{state},csk, \extravec, \intervalvec, \\
\hphantom{ j, \pi', m_j^*,} aux, \coinsvec, \adr_\scc, \adr_\tps, \ced, ToC, t_0)\\
\text{s.t.}\
(con_1 \wedge con_2\wedge con_3)
\vee
(\neg con_1 \wedge con_4\wedge con_5)\\


\end{array}
   \right]\leq  \mu(\lambda)\]
where $\messvec=[m_1$, $\ldots$, $m_z]$, and $g_j\in \commvec\in \hat{p}$.
Each condition $con_i$ is defined as follows:
\begin{itemize}[leftmargin=3.2mm]
\item[$\bullet$] $con_1$: $j$-th solution delivered on time:
$t_j- t_0\leq\sum_{i=1}^j(\bar{\Delta}_i+\Psi_i+\Upsilon)$.
\item[$\bullet$] $con_2$: generates invalid proof $\pi'$: $m_j^*\notin \lang$.
\item [$\bullet$] $con_3$: the invalid proof $\pi'$ is accepted:\\
$\verify{\mtdgm}(\pk$, $j$, $m_j^*$, $\pi'$, $g_j$, $\Psi_j$, $t_j$, $t_0$, \intervalvec, $\Upsilon)=1$.
\item [$\bullet$]$con_4$: generates valid proof $\pi_j$:
$m_j^*\in \lang$.
\item [$\bullet$]$con_5$: a valid proof is accepted, despite late delivery:\\
$\verify{\mtdgm}(\pk$, $j$, $m_j^*$, $ \pi_j$, $g_j$, $\Psi_j$, $t'_j$, $t_0$, \intervalvec, $\Upsilon)=1$.
\end{itemize}
\end{definition}

Informally, fair payment states that the verifier pays the prover if and only if the verifier accepts the proof.

\begin{definition}[Fair Payment]\label{def::dgm-tlp-fair-payment} A
\dgm supports fair payment, if $\forall j \in [z]$ and any PPT adversary \adv, there is a negligible function $\mu(\cdot)$, such that:
\[\Pr\left[\begin{array}{l}
\cpk, \csk\leftarrow\cl.\setup{\mtdgm}(1^\lambda, \intervalvec)\\
\messvec,\text{state}\leftarrow\adv(1^\lambda, \cpk, z)\\
\messvec^*, t_0\leftarrow \delegate{\mtdgm}(\messvec, \cpk, \csk)\\
\adr_\scc\leftarrow\se.\delegate{\mtdgm}(\ced,ToC, S,\intervalvec, aux, \\
\hphantom{\adr_\scc\leftarrow} \adr_\tps, t_0, \Upsilon, \coinsvec)\\
\pk,\sk\leftarrow\tpc.\setup{\mtdgm}(1^\lambda, \intervalvec, S, z)\\
\hat{p}\leftarrow\gen{\mtdgm}( {\messvec^*}, \cpk, \pk, \sk, t_0)\\
s_j, t_j, \pi_j\leftarrow\adv(\cpk, \messvec, \text{state}, t_0, \ced, ToC, aux, \\
\hphantom{s_j, t_j, \pi_j} \extravec, \intervalvec,\coinsvec, \adr_\scc,  \adr_\tps, \pk, \puzzvec)\\
\ver_j\leftarrow\verify{\mtdgm}(\pk, j, m_j^*, \pi_j, g_j, \Psi_j, t_j, t_0, \intervalvec, \Upsilon)\\
\paid_j\leftarrow\pay{\mtdgm}(\ver_j, \adr_\tps, \coinsvec, j)\\

\hline
\ver_j\neq \paid_j\\

\end{array}\right]\leq \mu(\lambda)\]

\end{definition}

\begin{definition}[Security]\label{def::DGM-TLP-security} A \dgm is secure if it satisfies solution privacy, solution-validity, and fair payment, w.r.t.\ definitions~\ref{Def::DGM-TLP::Solution-Privacy},~\ref{Def::DGM-TLP::Solution-Validity}, and~\ref{def::dgm-tlp-fair-payment} respectively.
\end{definition}

\subsection{Efficient Delegated Time-locked Puzzle}\label{subsec:efficient-delegated-time-locked-puzzle}
At a high level, \dgc operates as follows.
During the setup, \cl encrypts all the plaintext solutions using symmetric-key encryption with key \csk, which is sent to \se.
In turn, \se picks a \tps and determines the extra time it needs to find each solution using $\ced(\cdot)$.
Next, \se constructs a smart contract \scc in which it specifies the expected delivery time for each solution.
It deploys \scc to the blockchain and deposits enough coins for $z$ valid solutions.

\cl sends the ciphertexts to \tpc who
\begin{enumerate*}[label=(\arabic*)]
    \item generates all required secret and public keys on \cl's behalf and
    \item builds puzzles on the ciphertexts (instead of the plaintext solutions in \gc).
\end{enumerate*}
It sends all puzzles to \tps who checks the deposit and the \scc parameters.

If \tps agrees to proceed, it solves each puzzle and generates a proof of the solutions' correctness.
It sends the solution and proof to \scc who checks if the solution-proof pair has been delivered on time and the solution is valid with the help of the proof.
If the two checks pass, \scc sends a portion of the deposit to \tps.

\begin{figure}[!b]
    \centering
    \includegraphics[width=\linewidth]{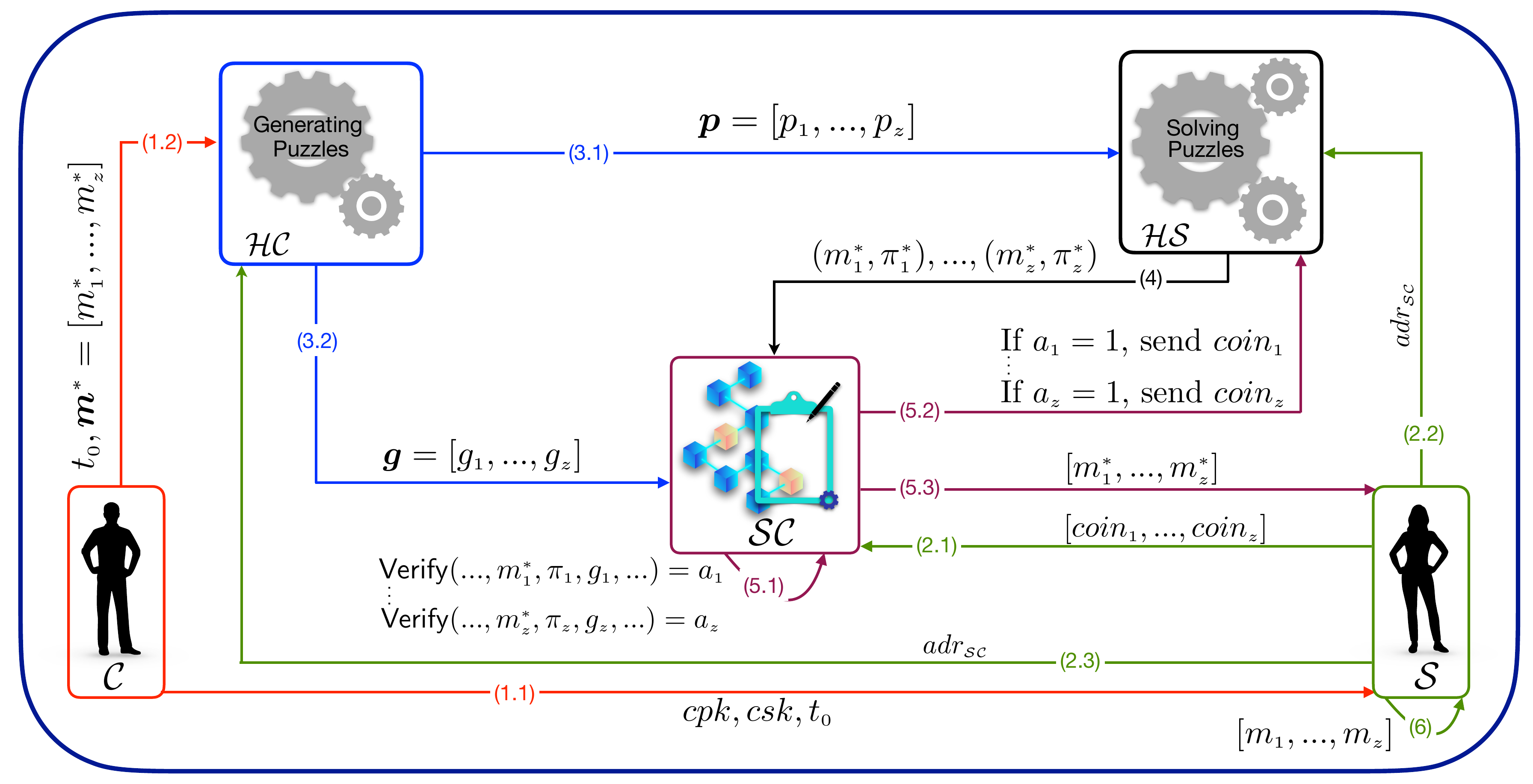}
    \caption{Outline of the interactions between parties in ED-TLP\xspace. \ensuremath{\mathcal{C}}\xspace is the client, \ensuremath{\mathcal{S}}\xspace is the server, \ensuremath{\mathcal{HC}} is \ensuremath{\mathcal{C}}'s helper, \ensuremath{\mathcal{HS}} is \ensuremath{\mathcal{S}}'s helper, and \ensuremath{\mathcal{SC}} is the smart contract.}\label{fig:ED-TLP-workflow}
    \Description{The diagram maps the interactions described in Section 4.2. It does not introduce any information not covered in that section, only provides a visual aid to readers.}
\end{figure}

Given a valid encrypted solution stored in \scc and a secret key \csk, \se decrypts the ciphertext to retrieve a plaintext solution.
Figure~\ref{fig:ED-TLP-workflow} outlines the interaction between the parties in \dgc.  Below, we present \dgc in detail.
\begin{enumerate}[leftmargin=.56cm]
\item\brokenuline{$\cl.\setup{\mtdgm}(1^\lambda,\intervalvec)\rightarrow (\cpk, \csk)$}
\label{DGC-TLP::step::Client-side-Setup}.
Involves \cl.
    \begin{enumerate}[leftmargin=4mm]
    \item generate a secret key for symmetric-key encryption: \\
    $\skegen(1^\lambda)\rightarrow \csk$.
    \item set $\cpk=\intervalvec$ and send $(\cpk, \csk)$ to \se.
    \end{enumerate}

\item\brokenuline{$\cl.\delegate{\mtdgm}(\messvec, \csk)\rightarrow (\messvec^*, t_0)$}
\label{DGC-TLP::step::Client-side-Delegation}.
Involves \cl.
    \begin{enumerate}[leftmargin=4mm]
    \item encrypt each plaintext solution in vector \messvec:\\
    $\skeenc(\csk, m_i)\rightarrow m_i^*, \forall i \in [z]$.
    
    \item send $t_0$ and $\messvec^*=[{m}_1^*,.., m_z^*]$ to \tpc and $t_0$ to \se.
    \end{enumerate}

\item\brokenuline{$\se.\delegate{\mtdgm}(\ced$, $ToC$, $S$, \intervalvec, $aux$, $\adr_\tps$, $t_0$, $\Upsilon$, $\coinsvec)$ $\rightarrow \adr_\scc$}.
\label{DGC-TLP::step::Server-side-Delegation}
Involves \se.
\begin{enumerate}[leftmargin=4mm]

\item determine the extra delay $\Psi_j$ that \tps needs to find the $j$-th solution, $\forall j \in [z]$, by calling $\ced(ToC, S, \bar\Delta_j, aux)\rightarrow \Psi_j$.

\item construct a smart contract \scc, deploy \scc into the block\-chain, and deposit $\coins=\sum_{j=1}^z\coins_j$ amount of coins into \scc, where $\coins_j\in \coinsvec$.
Let $\adr_\scc$ be the address of the deployed \scc.

\item set the delivery time for the $j$-th solution as
$T_j=t_0 + \sum_{i=1}^j(\bar{\Delta}_i+\Psi_i +\Upsilon)$, $\forall j \in [z]$.

\item register $\timeexpvec=[T_1$, $\ldots$, $T_z]$ and $\adr_\tps$ in \scc.

\item send $\adr_\scc$ to \tpc and \tps.

\end{enumerate}

\item\brokenuline{$\tpc.\setup{\mtdgm}(1^\lambda, \intervalvec, S, z)\rightarrow (\pk,\sk)$}.
\label{DGC-TLP::step::Helper-side-Setup}
Involves \tpc.

\begin{enumerate}[leftmargin=4mm]
\item[$\bullet$] call $\setup{\gm}(1^\lambda, \intervalvec, S, z)\rightarrow (\pk,\sk)$, to generate public and private parameters for $z$ puzzles.
\end{enumerate}

\item\brokenuline{$\gen{\mtdgm}({\messvec^*}, \pk, \sk, t_0)\rightarrow \hat p$}.
\label{DGC-TLP::Generate-Puzzle-}
Involves \tpc.

\begin{enumerate}[leftmargin=4mm]
\item call $\gen{\mtdgm}( {\messvec^*}, \pk, \sk)\rightarrow \hat p$, to generate $z$ puzzles and their commitments.
Recall, $\hat p\coloneqq(\puzzvec, \commvec)$, where \puzzvec is a vector of puzzles and \commvec is a vector of commitments.

\item at time $t_0$, send \puzzvec to \tps and \commvec to \scc.

\end{enumerate}

\item\brokenuline{$\solve{\mtdgm}(\extravec, aux, \pk, \puzzvec, \adr_\scc, \coinsvec', \coinsvec) \rightarrow (\solsvec, q)$}.
\label{DGC-TLP::step::Solve-Puzzle}
Involves \tps.
\begin{enumerate}[leftmargin=4mm]
\item check if the deposit is sufficient ($\coins_j\geq\coins_j'$) and elements of \extravec are large enough. If not, set $q=0$ and halt.

\item call $\solve{\gm}(\pk, \puzzvec)\rightarrow (\solsvec, \messvec^*)$, to solve $z$ puzzles.

\end{enumerate}

\item \brokenuline{$\prove{\mtdgm}(\pk,s_j)\rightarrow \pi_j$}.
\label{prove-D}
Involves \tps.

\begin{itemize}[leftmargin=4mm]
\item[$\bullet$] call $\prove{\gm}(\pk, s_j)\rightarrow \pi_j$, to generate a proof, upon discovering a solution $s_j\in \solsvec$.
\end{itemize}

\item \brokenuline{$\register{\mtdgm}(s_j, \pi_j, \adr_\scc)\rightarrow t_j$}.
\label{DGC-TLP::step::Register-Puzzle}
Involves \tps.

\begin{itemize}[leftmargin=4mm]
    \item[$\bullet$] send the $j$-th encoded message $m_j^*$ (where $m_j^*\in s_j$) and its proof $\pi_j$ to \scc, to be registered in \scc by time $t_j$.
\end{itemize}

\item \brokenuline{$\verify{\mtdgm}(\pk, j, m_j^*, \pi_j, g_j, \Psi_j, t_j, t_0, \intervalvec, \Upsilon)\rightarrow \ver_j$}.
\label{verify-D}
Involves \scc.

\begin{enumerate}[leftmargin=4mm]
\item if this is the first time it is invoked (when $j=1$), set two vectors $\vervec=[\ver_1$, $\ldots$, $\ver_z]$ and $\paidvec=[\paid_1$, $\ldots$, $\paid_z]$ whose elements are initially set to empty $\empt$.

\item\label{dgmtlp::verify-delivery-time} read the content of \scc and check whether $(s_j$, $\pi_j)$ was delivered to \scc on time: $t_j\leq T_j=t_0 + \sum_{i=1}^j(\bar{\Delta}_i+\Psi_i + \Upsilon)$
\item\label{dgmtlp::verify-proof} call $\verify{\gm}(\pk, j,m_j^*, \pi_j, g_j)\rightarrow \ver'_j\in \bin$, to check a proof's validity, where $g_j\in \commvec$.

\item set $\ver_j=1$, if both checks in steps~\ref{dgmtlp::verify-delivery-time} and~\ref{dgmtlp::verify-proof} pass; set $\ver_j=0$, otherwise.

\end{enumerate}

\item \brokenuline{$\pay{\mtdgm}(\ver_j, \adr_\tps, \coinsvec, j)\rightarrow \paid_j$}.
\label{DGC-TLP::step::pay}
Involves \scc.
Invoked either by \cl or \se.

\begin{itemize}[leftmargin=4mm]
\item[$\bullet$] if $\ver_j=1$, then:
    \begin{enumerate}
    \item if $\paid_j\neq 1$, for $j$-th puzzle, send $\coins_j$ coins to \tps, where $\coins_j\in \coinsvec$.
    \item set $\paid_j=1$ to ensure \tps will not be paid multiple times for the same solution.
\end{enumerate}

\item[$\bullet$] if $\ver_j=0$, send $\coins_j$ coins back to $\se$ and set $\paid_j=0$.
\end{itemize}

\item \brokenuline{$\retrieve{\mtdgm}(\pk, \csk, m_j^*)\rightarrow m_j$}.
\label{DGC-TLP::step::retrive}
Involves \se.

\begin{enumerate}[leftmargin=4mm]
    \item read $m_j^*$ from \scc.
    \item locally decrypt $m^*_j$: $\skedec(\csk, m^*_j)\rightarrow m_j$.
\end{enumerate}

\end{enumerate}

In the \dgc (and the definition of \dgm) we assumed \tps is a rational adversary, to ensure only the timely delivery of the solution through our incentive mechanism.
Nevertheless, it is crucial to note that the privacy of the scheme remains intact even in the presence of a fully malicious \tps.

\begin{theorem}\label{theorem::D-TLP-sec}
If the symmetric-key encryption meets IND-CPA, GC-TLP is secure (w.r.t.\ Definition~\ref{def::GC-TLP-security}), the blockchain is secure (i.e., it meets {persistence} and {liveness} properties~\cite{GarayKL15}, and the underlying signature satisfies ``existential unforgeability under chosen message attacks'') and \scc's correctness holds, then \dgc is secure, w.r.t.\ Definition~\ref{def::DGM-TLP-security}.

\end{theorem}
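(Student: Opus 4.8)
The plan is to establish the three requirements of Definition~\ref{def::DGM-TLP-security} --- privacy, solution-validity, and fair payment --- separately, in each case reducing the property of \dgc to a property of one of its building blocks. The guiding observation is that \dgc is a thin wrapper around \gc: \cl's delegation step merely symmetric-key-encrypts every plaintext solution before anything else happens, \tpc runs the unmodified \gc setup and puzzle generation on the resulting ciphertexts, and \scc only layers timing checks and payment on top. Each security property therefore peels off to the corresponding guarantee of the symmetric-key encryption, of \gc, or of the blockchain/\scc.

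For privacy (Definition~\ref{Def::DGM-TLP::Solution-Privacy}) I would treat its two cases with two different reductions. Case~\ref{def::solution-privacy-from-helper} concerns \tpc, which only ever receives the encrypted vector $\messvec^*$ and never learns \csk. I would build an IND-CPA adversary $\mathcal{B}$ that forwards $\adv_1$'s chosen pair $\messvec_0,\messvec_1$ to the encryption challenger, receives the challenge ciphertexts as $\messvec^*$, and simulates the remaining transcript (\cpk, the delegation outputs, \tpc's setup and puzzle generation, the solver output, and the \scc data) --- all of which are computable without \csk. Because the experiment draws an independent bit per index, I would invoke a standard hybrid argument to reduce the multi-message game to single-message IND-CPA, making $\adv_1$'s advantage negligible. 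Case~\ref{def::solution-privacy-from-solver} concerns the time-bounded solver \tps and reduces to \gc privacy (Theorem~\ref{Solution-Privacy}): a \gc privacy adversary forwards $\adv_2$'s chosen encoded-message pair to the \gc challenger, obtains $\hat p$, and perfectly simulates the extra \dgm artifacts ($\adr_\scc$, \coinsvec, \extravec, \ced, \intervalvec, etc.), since none of these depend on the challenge bits $\rb_i$.

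For solution-validity (Definition~\ref{Def::DGM-TLP::Solution-Validity}) I would split along the two disjuncts. The disjunct $con_1 \wedge con_2 \wedge con_3$ (an on-time but invalid solution that is nonetheless accepted) is ruled out by \gc solution-validity (Theorem~\ref{GC-TLP::Solution-Validity}), because \verify{\mtdgm} outputs $1$ only when the inner \verify{\gmp} accepts, so a successful cheat would yield a \gc commitment collision. The disjunct $\neg con_1 \wedge con_4 \wedge con_5$ (a valid but late solution that is accepted) is ruled out by \scc's correctness together with blockchain persistence: the timestamp $t_j$ recorded on-chain is reliable, and \verify{\mtdgm} explicitly rejects whenever $t_j > T_j$, so acceptance of a late solution would require either a faithless execution of \scc or a tampered timestamp, both negligible under the stated assumptions. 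Signature unforgeability enters here to guarantee that only the registered \tps can post under $\adr_\tps$. Fair payment (Definition~\ref{def::dgm-tlp-fair-payment}) then follows almost by construction: \pay{\mtdgm} deterministically sets $\paid_j = 1$ exactly when $\ver_j = 1$, so $\ver_j \neq \paid_j$ can occur only if \scc is mis-executed, which the correctness of \scc and blockchain liveness preclude except with negligible probability.

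The main obstacle I expect is the timing argument underlying the late-delivery disjunct of solution-validity (and, relatedly, the accurate-time guarantee). Unlike the privacy reductions, which are essentially syntactic once the layered structure is observed, this part does not reduce to a clean cryptographic game: it requires arguing that the on-chain registration time faithfully reflects real time within the network-delay slack $\Upsilon$, that \scc evaluates the inequality $t_j \leq T_j$ correctly, and that no adversary can backdate a registration --- which forces me to combine blockchain persistence and liveness with the $\Upsilon$ model and signature unforgeability into one coherent accounting rather than a single black-box reduction.
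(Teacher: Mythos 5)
Your proposal is correct and follows essentially the same route as the paper's proof: the same decomposition into the three properties of Definition~\ref{def::DGM-TLP-security}, the same two-case split for privacy (IND-CPA against \tpc, reduction to \gc privacy via Theorem~\ref{Solution-Privacy} against the time-bounded solver), the same two-disjunct split for solution-validity (commitment collision via Theorem~\ref{GC-TLP::Solution-Validity} for event $con_1 \wedge con_2 \wedge con_3$, blockchain persistence/liveness plus \scc correctness for $\neg con_1 \wedge con_4 \wedge con_5$), and fair payment by \scc correctness. Your treatment is in fact somewhat more explicit than the paper's --- the paper asserts the IND-CPA and timing steps informally rather than constructing the reduction and hybrid argument you outline --- but the underlying argument is the same.
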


 \subsection{Security Analysis of \dgc}\label{sec::D-TLP-Proof}

In this section, we prove the security of \dgc, i.e., Theorem~\ref{theorem::D-TLP-sec}.

\begin{proof}
We first focus on the solutions' privacy, w.r.t.\ Definition~\ref{Def::DGM-TLP::Solution-Privacy}.

\begin{claim}
If the symmetric-key encryption satisfies IND-CPA, then \dgc preserves solutions' privacy from \tpc and \tps, w.r.t.\ Case~\ref{def::solution-privacy-from-helper} in Definition~\ref{Def::DGM-TLP::Solution-Privacy}.
\end{claim}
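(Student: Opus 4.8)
The plan is to reduce Case~\ref{def::solution-privacy-from-helper} privacy directly to the IND-CPA security of the symmetric-key encryption scheme. The starting observation is that, in \dgc, the client encodes each plaintext via $m_i^* \gets \skeenc(\csk, m_i)$, and from that point on every object the Case-1 adversary ever sees---the public parameters, the puzzle vector \puzzvec, the commitment vector \commvec, and even the full solution vector \solsvec---is a (possibly randomized) function of the \emph{ciphertexts} $\messvec^* = [m_1^*,\ldots,m_z^*]$ together with parameters sampled independently of the symmetric key (the RSA modulus and generators, the witnesses $d_i$, the smart-contract data, etc.). In particular, $\csk$ itself is never handed to $\adv_1$ in Case~\ref{def::solution-privacy-from-helper}: it is used only inside $\cl.\delegate{\mtdgm}$ and in $\retrieve{\mtdgm}$, and neither output is part of $\adv_1$'s view. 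Hence the only channel carrying information about the plaintext bits $\rb_i$ is the ciphertext vector, and telling the plaintext vectors apart is exactly an IND-CPA distinguishing task.

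First I would build a reduction $\mathcal{B}$ that plays the IND-CPA game (with access to the encryption oracle and a single left-or-right challenge) and internally simulates the whole experiment for $\adv_1$. $\mathcal{B}$ sets $\cpk=\intervalvec$, runs $\adv_1$ to obtain $(\messvec_0,\messvec_1,\text{state})$, and then builds the encoded vector $\messvec^*$ as follows: it samples a uniform index $k^* \pick [z]$ as its guess for the coordinate $\adv_1$ will attack, embeds its single IND-CPA challenge on the pair $(m_{0,k^*},m_{1,k^*})$ at coordinate $k^*$, and for every other coordinate $i\neq k^*$ it samples $\rb_i\pick\bin$ itself and obtains $m_i^*$ by querying the encryption oracle on $m_{\rb_i,i}$. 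Crucially, $\mathcal{B}$ can now run the remaining algorithms ($\se.\delegate{\mtdgm}$, $\tpc.\setup{\mtdgm}$, $\gen{\mtdgm}$, $\solve{\mtdgm}$, and the contract steps) honestly \emph{without ever knowing} $\csk$, because it generated all the puzzle randomness and therefore already knows each solution $s_j=m_j^*\|d_j$ without solving anything. It then feeds the resulting transcript to $\adv_1$.

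When $\adv_1$ returns $(b',k)$, $\mathcal{B}$ outputs $b'$ as its IND-CPA guess if $k=k^*$, and a uniform bit otherwise. Since $k^*$ is hidden from $\adv_1$ and the simulated view is identically distributed for every value of $k^*$, the event $k=k^*$ occurs with probability $1/z$ independently of $\adv_1$'s behaviour. Conditioned on $k=k^*$ the challenger's hidden bit equals $\rb_{k^*}$, so $b'=\rb_{k^*}$ holds with exactly $\adv_1$'s success probability; conditioned on $k\neq k^*$, $\mathcal{B}$ is correct with probability $\tfrac12$. A short calculation then gives $\mathrm{Adv}^{\text{IND-CPA}}_{\mathcal{B}}=\tfrac1z\,\mathrm{Adv}_{\adv_1}$, so any non-negligible Case-1 advantage yields a non-negligible IND-CPA advantage, contradicting the assumption; since $z$ is polynomial, this suffices.

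The main obstacle here is definitional rather than computational: the privacy experiment draws an \emph{independent} bit $\rb_i$ for each coordinate and lets the adversary adaptively pick the coordinate $k$ to attack, whereas the IND-CPA game exposes a single challenge bit. Bridging this gap is precisely what the index guess $k^*$, combined with oracle-encrypting all remaining coordinates at their self-chosen bits, accomplishes, at the cost of a factor-$z$ loss; a coordinate-by-coordinate hybrid would yield the same bound. I would take particular care to argue that, whenever $k=k^*$, the simulated transcript is identically distributed to the real Case-1 experiment, so that no distinguishing advantage is lost beyond the guessing factor.
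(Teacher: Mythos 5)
Your proposal is correct, and it proves the claim by reduction to IND-CPA just as the paper does, but you execute the reduction at a level of rigor the paper never reaches. The paper's own proof is assertion-style: it splits the adversary's view into a set $A$ of public parameters (which are generated independently of the plaintexts) and a set $B = \{\pk, \solsvec, \hat{p}\}$ (which are computed only from the ciphertexts $\messvec^*$), treats the \tpc-corrupt and \tps-corrupt cases separately, and simply states that since everything is a function of IND-CPA ciphertexts plus independent data, the distinguishing probability is at most $\frac{1}{2}+\negl$. It never constructs an explicit reduction, and in particular it silently skips the definitional gap you identified: the Case-1 experiment draws $z$ independent bits $\rb_i$ and lets $\adv_1$ adaptively choose which coordinate $k$ to attack, whereas IND-CPA provides a single challenge bit. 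Your index-guessing reduction (embed the challenge at a random $k^*$, oracle-encrypt the other coordinates at self-chosen bits, output $\adv_1$'s guess only when $k=k^*$) closes exactly that gap, at an explicit and acceptable factor-$z$ loss; the paper's argument implicitly needs the same hybrid or guessing step but does not supply it. Your observation that the reduction can emit \solsvec and run $\solve{\mtdgm}$'s output without performing any sequential squaring---because it generated the puzzles and so already knows each $s_j = m_j^*\|d_j$---also handles a real subtlety (keeping the IND-CPA adversary polynomial in $\lambda$ alone) that the paper does not address. In short: same reduction target and same structural insight about the view, but your write-up is the complete argument of which the paper's is a sketch.
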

\begin{proof}
\tpc receives a vector $\messvec^*$ of ciphertexts (i.e., encrypted plaintext solutions), and public parameters in set $A=\{\cpk$, $\ced$, $ToC$, $S$, \intervalvec, $aux$, $\coins$, $t_0$, $\adr_\tps$, $\adr_\scc\}$.
Other parameters given to $\adv_1$ in the experiment (i.e., parameters in set $B=\{\pk, \solsvec, \hat p\}$) are generated by \tpc itself and may seem redundant.
However, we have given these parameters to $\adv_1$ for the case where \tps is corrupt as well, which we will discuss shortly.

Since the public parameters were generated independently of the plaintext solutions $m_1, \ldots, m_z$, they do not reveal anything about each $m_i$.
As the symmetric-key encryption scheme is IND-CPA, the vector $\messvec^*$ of ciphertext reveals no information about each $m_i$.
Specifically, in Case~\ref{def::solution-privacy-from-helper} in Definition~\ref{Def::DGM-TLP::Solution-Privacy}, the probability that $\adv_1$ can tell whether a ciphertext $m^*_{\rb_{i},i}\in \messvec^*$ is an encryption of message $m_{0,i}$ or $m_{1,i}$, chosen by $\adv_1$, is at most $\frac{1}{2}+\negl$.

Now, we focus on the case where \tps is corrupt. 
The messages that \tps receives include a vector $\messvec^{*}$ of ciphertexts and parameters in set $A+B$.
As long as the symmetric-key encryption meets IND-CPA, the vector $\messvec^*$ of ciphertext reveals no information about each $m_i$.
Hence, when \tps is corrupt, the probability that $\adv_1$, in Case~\ref{def::solution-privacy-from-helper} in Definition~\ref{Def::DGM-TLP::Solution-Privacy}, can tell if a ciphertext $m^*_{\rb_{i},i}\in \messvec^*$ is the encryption of message $m_{0,i}$ or $m_{1,i}$, both of which were initially chosen by $\adv_1$, is at most $\frac{1}{2}+\negl$.

Recall the public parameters in $A$ do not reveal anything about each $m_i$.
The same holds for the public parameter $\pk\in B$.
Also, parameters \solsvec and $\hat{p}$ in $B$ were generated by running  $\gen{\mtdgm}$ and $\solve{\mtdgm}$ on the ciphertexts in vector $\messvec^*$.
Thus, \solsvec and $\hat{p}$ will not reveal anything about the plaintext messages, as long as the symmetric-key encryption satisfies IND-CPA\@.
\end{proof}
\begin{claim}
If \gc protocol is privacy-preserving (w.r.t.\ Definition~\ref{Def::GM-TLP::Solution-Privacy}), then \dgc preserves solutions' privacy from \se and \tps, w.r.t.\ Case~\ref{def::solution-privacy-from-solver} in Definition~\ref{Def::DGM-TLP::Solution-Privacy}.
\end{claim}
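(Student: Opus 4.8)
The plan is to prove this claim by a reduction to the privacy of \gc established in Definition~\ref{Def::GM-TLP::Solution-Privacy}. The pivotal observation is that, with respect to the encoded solutions $\messvec^*$, the puzzle $\hat p$ produced in \dgc is generated by \tpc by invoking exactly the \gc setup and $\gen{\gmp}$ algorithms on $\messvec^*$; every other quantity handed to the solver-side adversary --- namely $\cpk$, $\extravec$, $\adr_\scc$, $\adr_\tps$, $\ced$, $ToC$, $aux$, $t_0$, and \intervalvec --- is produced by $\cl.\setup{\mtdgm}$, $\se.\delegate{\mtdgm}$, and $\tpc.\setup{\mtdgm}$, none of which depend on the challenge bits $\rb_i$ or on the encoded messages. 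Hence the only bit-dependent object in the Case~\ref{def::solution-privacy-from-solver} experiment is the \gc puzzle itself.

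Concretely, suppose $\adv\coloneqq(\adv_2,\adv_3)$ wins the Case~\ref{def::solution-privacy-from-solver} experiment with advantage non-negligibly above $\frac12$. I would build a \gc distinguisher $\adv'\coloneqq(\adv'_1,\adv'_2)$ as follows. On input $(1^\lambda,\pk,z)$, the polynomial-time stage $\adv'_1$ first samples a symmetric key by running $\cl.\setup{\mtdgm}$ to obtain $\csk$ (and $\cpk=\intervalvec$), runs $\adv_2(1^\lambda,\cpk,\csk,z)$ to obtain the two encoded-message vectors $\messvec^*_0,\messvec^*_1$ together with $\adv_2$'s state, and also locally simulates all remaining delegation parameters by executing $\se.\delegate{\mtdgm}$ and fixing $\ced$, $ToC$, $aux$, $t_0$, $\adr_\tps$, $\adr_\scc$ exactly as in the real protocol. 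It forwards $(\messvec^*_0,\messvec^*_1)$ as the \gc challenge messages and stores everything it generated in a state string. The \gc challenger then samples the bits $\rb_i$ and returns $\hat p\leftarrow\gen{\gmp}(\messvec^*_\rb,\pk,\sk)$. The time-bounded stage $\adv'_2$ simply reassembles the view expected by $\adv_3$ --- feeding it $\pk$, $\cpk$, $\hat p$, the stored state, $\extravec$, \intervalvec, $\adr_\tps$, $\adr_\scc$, $aux$, $\ced$, $ToC$, and $t_0$ --- and outputs whatever pair $(b',k)$ it returns.

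Because all simulated parameters are distributed identically to the real \dgc experiment and are independent of $\rb$, the view that $\adv_3$ receives inside the reduction is identical to its view in the genuine Case~\ref{def::solution-privacy-from-solver} game; consequently $\adv'$ guesses $\rb_k$ exactly when $\adv$ does, inheriting its advantage and contradicting the privacy of \gc. I expect the main obstacle to be a bookkeeping one rather than a conceptual one: I must verify that the reduction respects the \emph{same} resource budget as the Case~\ref{def::solution-privacy-from-solver} adversary. This is why all the expensive, bit-independent simulation (key generation, contract deployment, computing $\extravec$ via $\ced$) is pushed into the polynomial-time preprocessing stage $\adv'_1$, so that the time-critical stage $\adv'_2$ does no more than run $\adv_3$ and therefore stays within the bound $\delta(\sum_{i=1}^j\bar\Delta_i)$ using the same $\xi(\max(\bar\Delta_1,\ldots,\bar\Delta_j))$ parallel processors.

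A secondary point to check is that giving the solver $\csk$ is harmless here: since $\messvec^*_0,\messvec^*_1$ are chosen directly as the (possibly arbitrary) messages to be locked, knowledge of $\csk$ is independent of the puzzle-locking and cannot help distinguish, so the reduction need not withhold it. I would close by noting, as in the proof of Theorem~\ref{Solution-Privacy}, that the commitments in \commvec leak nothing before solving because $\hash(\cdot)$ is a random oracle, so their presence in $\hat p$ does not affect the argument.
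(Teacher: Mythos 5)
Your proposal is correct and follows essentially the same route as the paper: both arguments reduce Case~\ref{def::solution-privacy-from-solver} to the privacy of \gc (Theorem~\ref{Solution-Privacy}) by observing that everything the solver-side adversary sees beyond $(\pk,\hat p)$ --- namely $\extravec$, $\ced$, $ToC$, $aux$, $t_0$, $\adr_\tps$, $\adr_\scc$, and the coin/contract data --- is generated independently of the challenge bits and the encoded messages, so it cannot help distinguish. The paper states this as an informal view-decomposition, whereas you spell out the explicit distinguisher, the simulation of the delegation parameters, and the resource-budget bookkeeping (keeping the expensive simulation in the polynomial-time stage); that extra rigor is welcome but does not change the underlying argument.
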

\begin{proof}
Before solving the puzzles, the messages that \tps or \se receives include the elements of sets $A=\{\pk, \hat p\}$ and $B=\{\extravec$, $\ced$, $t_0$, $\coins$, $\adr_\tps$, $\adr_\scc$, $ToC$, $aux\}$.
The elements of set $A$ are identical to what \se receives in \gc.
The elements of set $B$ are independent of the plaintext solutions and the puzzles' secret and public parameters.
Therefore, knowledge of $B$ does not help \se or \tps learn the plaintext solutions before solving the puzzles.
More formally, given $A+B$, in Case~\ref{def::solution-privacy-from-solver} in Definition~\ref{Def::DGM-TLP::Solution-Privacy}, the probability that $\adv_3$ can tell whether a puzzle $p_{\rb_{i},i}\in \puzzvec\in \hat p$ has been created for plaintext message $m_{0,i}$ or $m_{1,i}$ is at most $\frac{1}{2}+\negl$, due to the privacy property of \gc, i.e., Theorem~\ref{Solution-Privacy}.
\end{proof}
\begin{claim}
If \gc meets solution-validity (w.r.t.\ Definition~\ref{Def::GM-TLP::Solution-Validity}), the blockchain is secure (i.e., it meets {persistence} and {liveness} properties~\cite{GarayKL15}, and the signature satisfies existential unforgeability under chosen message attacks) and the smart contract's correctness holds, \dgc preserves a solution validity, w.r.t.\ Definition~\ref{Def::DGM-TLP::Solution-Validity}.
\end{claim}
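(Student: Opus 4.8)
The plan is to bound the adversary's advantage separately on the two disjuncts of the winning predicate of Definition~\ref{Def::DGM-TLP::Solution-Validity}, namely $(con_1\wedge con_2\wedge con_3)$ and $(\neg con_1\wedge con_4\wedge con_5)$, and combine them by a union bound. The structural observation driving both cases is that $\verify{\mtdgm}$ outputs $\ver_j=1$ only when \emph{both} of its sub-checks succeed: the delivery-time check of step~\ref{dgmtlp::verify-delivery-time} ($t_j\leq T_j=t_0+\sum_{i=1}^j(\bar\Delta_i+\Psi_i+\Upsilon)$) and the proof check of step~\ref{dgmtlp::verify-proof}, which merely invokes the underlying $\verify{\gmp}$ on the embedded \gc instance. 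Hence any accepted adversarial output localizes the failure to exactly one sub-check, and each sub-check maps to one of our assumptions.

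First I would handle the disjunct $con_1\wedge con_2\wedge con_3$ by a reduction to the solution validity of \gc (Definition~\ref{Def::GM-TLP::Solution-Validity}). Given a \dgm adversary $\adv\coloneqq(\adv_1,\adv_2)$ that wins through this disjunct, I build a \gc adversary $\mathcal{B}$. On receiving the \gc public key \pk, $\mathcal{B}$ generates \csk by running $\cl.\setup{\mtdgm}$ (a symmetric key, independent of the \gc challenge), forwards \pk to $\adv_1$, and submits the vector $\messvec^*$ returned by $\adv_1$ as its own \gc message choice. The \gc challenger then produces $(\hat p,\solsvec)$, and $\mathcal{B}$ reconstructs the \dgm-specific parameters ($\extravec$ via \ced, $t_0$, $\adr_\scc$, $\coinsvec$, $\adr_\tps$) by executing $\se.\delegate{\mtdgm}$ honestly, giving $\adv_2$ a view identical to the real experiment. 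When $\adv_2$ returns $(j,\pi',m_j^*,t_j,t'_j)$ satisfying the disjunct, $con_1$ forces the timing sub-check to pass, so $con_3$ ($\verify{\mtdgm}(\cdots)=1$) forces the proof sub-check to pass, i.e.\ $\verify{\gmp}(\pk,j,m_j^*,\pi',g_j)=1$, while $con_2$ gives $m_j^*\notin\lang$. Thus $\mathcal{B}$ outputs $(j,\pi',m_j^*)$ and breaks \gc solution validity with the same probability, which is negligible by assumption.

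Next I would handle the disjunct $\neg con_1\wedge con_4\wedge con_5$. Here the genuine registration is late, $t_j-t_0>\sum_{i=1}^j(\bar\Delta_i+\Psi_i+\Upsilon)$, yet $con_5$ requires $\verify{\mtdgm}$ to accept against the timestamp $t'_j$, which by step~\ref{dgmtlp::verify-delivery-time} forces $t'_j\leq T_j$ and hence $t'_j\neq t_j$. The key point is that $\verify{\mtdgm}$ does not trust a prover-supplied timestamp but reads the registration time recorded in \scc (step~\ref{dgmtlp::verify-delivery-time}, ``read the content of \scc''). So for the contract to evaluate its timing predicate on $t'_j$ rather than the true $t_j$, the adversary must have caused \scc to store a registration of $(m_j^*,\pi_j)$ in a block whose timestamp is $t'_j$. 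I would then rule this out: by smart-contract correctness the contract compares against the value actually stored on-chain; by blockchain persistence and liveness the timestamp of the block finalizing the registration cannot be rewound or advanced beyond the network bound $\Upsilon$ already absorbed into $T_j$; and by existential unforgeability of the signature scheme the adversary cannot inject a registration attributed to $\adr_\tps$ at a fabricated earlier time. Together these force $t'_j=t_j$ up to the tolerated $\Upsilon$, contradicting $\neg con_1$, so this disjunct occurs with at most negligible probability.

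The main obstacle is the second disjunct. The first is a direct, essentially syntactic reduction, but the second requires pinning down precisely how the timestamp consumed by $\verify{\mtdgm}$ is obtained and arguing that the adversary cannot decouple the verified time $t'_j$ from the true on-chain registration time $t_j$. This demands combining the three blockchain-layer assumptions in a single argument---persistence and liveness to fix the finalizing block's timestamp within the known delay $\Upsilon$, signature unforgeability to bind registrations to $\adr_\tps$, and contract correctness to guarantee the predicate is evaluated on the stored value---and showing that any gap large enough to flip a late delivery into acceptance constitutes an explicit breach of one of them. A secondary point requiring care is verifying that $\mathcal{B}$'s reconstruction of the \scc state and the \ced-derived delays in the first case yields a distribution for $\adv$'s view that is identical to the real game, so that the reduction preserves the adversary's winning probability exactly.
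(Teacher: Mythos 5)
Your proposal is correct and follows essentially the same route as the paper's proof: split the winning predicate into the two disjuncts, reduce the on-time/invalid-proof case $(con_1\wedge con_2\wedge con_3)$ to the solution validity of \gc (the paper additionally unfolds this down to collision resistance of the hash-based commitment, which your black-box use of the hypothesis makes unnecessary), and dispose of the late-delivery case $(\neg con_1\wedge con_4\wedge con_5)$ via blockchain persistence and liveness ensuring the contract correctly flags late registrations. Your write-up is in fact more explicit than the paper's---giving the reduction adversary $\mathcal{B}$, checking the simulated view's distribution, and spelling out how signature unforgeability binds registrations to $\adr_\tps$---but these are refinements of, not departures from, the paper's argument.
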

\begin{proof}
First, we focus on event $\text{I}=(con_1 \wedge con_2\wedge con_3)$: a prover submits proof on time and passes the verification despite the proof containing an opening for a different message than the one already committed to.
Compared to \gc, the extra information that a prover (in this case \tps) learns in \dgc includes parameters in set $A+B$.
Nevertheless, these parameters are independent of the plaintext messages and the parameters used for the commitment.
Therefore, the solutions' validity is reduced to the solution validity of \gc and, in turn, to the security of the commitment scheme.
Specifically, given commitment $g_j=\hash(m_j,d_j)$ and an opening $\pi\coloneqq(m_j,d_j)$, for an adversary to break the solution validity, it must generate $(m'_j, d'_j)$, such that $\hash(m'_j, d'_j)=g_j$, where $m_j\neq m'_j$.
However, this is infeasible for a PPT adversary, as $\hash(\cdot)$ is collision-resistant, in the random oracle model.
Thus, in the experiment in Definition~\ref{Def::DGM-TLP::Solution-Validity}, event \text{I} occurs with a probability at most $\negl$.

Now, we focus on event $\text{II}=(\neg con_1 \wedge con_4 \wedge con_5)$: the adversary has generated a valid proof and passed the verification despite registering the proof late.
Due to the persistency property of the blockchain, once a transaction goes more than $v$ blocks deep into the blockchain of one honest player (where $v$ is a security parameter), it will be included in every honest player's blockchain with overwhelming probability, and it will be assigned a permanent position in the blockchain (so it will not be modified with an overwhelming probability).

Due to the liveness property, all transactions originating from honest parties will eventually end up at a depth of more than $v$ blocks in an honest player's blockchain; so, the adversary cannot perform a selective denial of service attack against honest account holders.
Thus, with a high probability when a (well-formed transaction containing) proof is sent late to the smart contract, the smart contract declares late; accordingly, $\verify{\mtdgm}(\cdot)$ outputs $0$ except for a negligible probability, \negl.
Hence, in the experiment in Definition~\ref{Def::DGM-TLP::Solution-Validity}, event \text{II} occurs with a probability at most \negl.
\end{proof}

Now, we focus on fair payment, w.r.t.\ Definition~\ref{def::dgm-tlp-fair-payment}.

\begin{claim}
If the blockchain is secure (it meets persistence and liveness properties~\cite{GarayKL15}), then \dgc offers fair payment, w.r.t.\ Definition~\ref{def::dgm-tlp-fair-payment}.
\end{claim}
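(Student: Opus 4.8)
The plan is to reduce the event $\ver_j \neq \paid_j$ — the only way the adversary wins in the experiment of Definition~\ref{def::dgm-tlp-fair-payment} — to a violation of the blockchain's persistence or liveness. The starting observation is that the payment flag is not an independent random quantity: by inspecting $\pay{\mtdgm}$, the smart contract \scc sets $\paid_j$ as a deterministic function of the verification bit $\ver_j$ that \verify{\mtdgm} has already written into its state. Concretely, on input $\ver_j = 1$ the contract transfers $\coins_j$ to \tps (or leaves the flag untouched if $\paid_j$ was already $1$) and sets $\paid_j = 1$; on input $\ver_j = 0$ it returns $\coins_j$ to \se and sets $\paid_j = 0$. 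Thus at the level of the contract code we have $\paid_j = \ver_j$ identically, so a gap between the two bits can arise only if the contract's recorded state or its issued payment transaction is altered away from what the code prescribes.

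I would then argue that any such alteration forces a break of blockchain security. First, persistence guarantees that once the transactions carrying $\ver_j$ and the corresponding payment are buried sufficiently deep, they acquire permanent positions and are not rewritten; hence the adversary cannot, e.g., roll back a confirmed payment to \tps while leaving $\ver_j = 1$ in the state, nor rewrite $\ver_j$ after the fact to contradict a payment already made. Second, liveness guarantees that a well-formed payment transaction originating from the (honest) contract is eventually confirmed, so the adversary cannot indefinitely suppress a deserved payment (which would leave $\ver_j = 1$ but $\paid_j$ unset) or prevent the refund to \se when $\ver_j = 0$. The internal $\paid_j$ flag additionally prevents a second payout for the same $j$, so the correspondence remains one-to-one even if \pay{\mtdgm} is invoked multiple times by \cl or \se.

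Putting these together, conditioned on the blockchain behaving correctly the contract executes exactly as written and therefore outputs $\paid_j = \ver_j$ for every $j \in [z]$; the event $\ver_j \neq \paid_j$ can occur only on the negligible-probability event that persistence or liveness fails, i.e., with probability at most \negl. The main obstacle I anticipate is not the arithmetic but pinning down the adversary's power over transaction timing and ordering: the adversary controls when \verify{\mtdgm} and \pay{\mtdgm} are triggered and can attempt reorderings or selective censorship, so the argument must lean carefully on persistence (to fix confirmed state) and on liveness (to rule out censorship of honest contract transactions) rather than treating the contract as an idealized trusted party. I would make explicit that \scc's correctness assumption lets us treat the contract logic itself as faithful, isolating the residual risk entirely into the blockchain's two security guarantees.
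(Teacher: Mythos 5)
Your proposal is correct and follows essentially the same route as the paper's own proof: both reduce the event $\ver_j\neq \paid_j$ to a failure of the blockchain's persistence and liveness, arguing that the contract code makes $\paid_j$ a deterministic copy of $\ver_j$, so any discrepancy requires tampering with the contract state or suppressing/altering its transactions, which happens with probability at most \negl. Your version is somewhat more explicit about which guarantee blocks which attack (persistence against state rollback, liveness against censorship), but this is elaboration of the same argument, not a different one.
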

\begin{proof}
The proof reduces to the security of the blockchain (and smart contracts).
Specifically, due to the persistence and liveness properties,
\begin{enumerate*}[label=(\arabic*)]
    \item the state of a smart contract cannot be tampered with and
    \item each function implemented in a smart contract correctly computes a result, except for a negligible probability \negl.
\end{enumerate*}
Thus, in the experiment in Definition~\ref{def::dgm-tlp-fair-payment}, when $\verify{\mtdgm}(\pk$, $j$, $\pi_j$, $g_j$, $\Psi_j$, $t_j$, $t_0$, $\intervalvec$, $\Upsilon)\rightarrow \ver_j\in \bin$, then
\begin{enumerate*}[label=(\arabic*)]
    \item the intact $\ver_j$ is passed on to $\pay{\mtdgm}(\ver_j$, $\adr_\tps$, \coinsvec, $j)$ as input (because the smart contract generates and maintains $\ver_j$ and passes it to $\verify{\mtdgm}$) and
    \item $\paid_j=\ver_j$, except for the probability of \negl.
\end{enumerate*}
\end{proof}

\renewcommand{\qedsymbol}{$\blacksquare$}
Hence, \dgc is secure, w.r.t.\ Definition~\ref{def::DGM-TLP-security}, given that \dgc satisfies the solutions' privacy (w.r.t.\ Definition~\ref{Def::DGM-TLP::Solution-Privacy}), the solutions' validity (w.r.t.\ Definition~\ref{Def::DGM-TLP::Solution-Validity}), and fair payment (w.r.t.\ Definition~\ref{def::dgm-tlp-fair-payment}).
\end{proof} 
 

\subsection{Satisfying the Primary Features}
Let us explain how \dgc satisfies the primary properties.

\begin{itemize}[leftmargin=4.2mm]
\item[$\bullet$] \textit{Varied Size Time Intervals and Efficiently Handling Multiple Puzzles.}
\dgc uses \gc in a block-box manner, therefore, it inherits these features.

\item[$\bullet$] \textit{Privacy.}
To ensure plaintext solutions' privacy, \cl encrypts all plaintext solutions using symmetric-key encryption and asks \tpc to treat the ciphertexts as puzzles' solutions.
\cl sends the secret key of the encryption only to \se.
This approach protects the privacy of plaintext messages from both \tpc and \tps.

\item[$\bullet$]\textit{Exact-time Solutions Recovery.}
In \dgc, given the exact computational power of \tps, \se uses $\ced(\cdot)$ to determine the extra time \tps needs for each solution.
Thus, instead of assuming that \tps possesses computing resources to perform the maximum number of squarings $S$ per second and find a solution on time, \se considers the available resources of \tps.

\item[$\bullet$] \textit{Timely Delivery of Solutions and Fair Payment.}
To ensure the timely \emph{delivery} of a solution, \se constructs a smart contract \scc and specifies by when each solution must be delivered, based on $\ced(\cdot)$ output, and deposits in \scc a certain amount of coins.
\se requires \scc to check if each (encrypted) solution is valid and delivered on time and, if the two checks pass, pay \tps.
This gives \tps the assurance it will get paid if it provides a valid solution on time.
The hash-based \scc-side verification imposes a low computation cost.
\end{itemize}

\section{Evaluation}\label{sec:evaluation}
We conducted an in-depth evaluation of \gc and \dgc against
\begin{enumerate*}[label=(\arabic*)]
    \item  the original RSA-based TLP of~\citeauthor{Rivest:1996:TPT:888615}~\cite{Rivest:1996:TPT:888615} because it serves as the foundation for numerous TLPs and VDFs, and
    \item the \ctlp of~\citeauthor{Abadi-C-TLP}~\cite{Abadi-C-TLP} because it is the most efficient multi-instance puzzle that supports efficient verification.
\end{enumerate*}
We analyze the features of all protocols and compare their overheads asymptotically and concretely.

\subsection{Features}\label{subsec:features}

Table~\ref{table::comparisonTable} compares the four schemes' features. Figure~\ref{fig:TLP-schemes} further illustrates differences among these four schemes concerning the support for multi-puzzle and varied-size time intervals.

\begin{table}[h]
\setlength{\smalltabcolsep}{0.7\tabcolsep}
\caption[Differences between the four schemes in terms of features]{Differences in features between:
    \begin{enumerate*}[label=(\arabic*)]
        \item the original TLP of~\citeauthor{Rivest:1996:TPT:888615}~\cite{Rivest:1996:TPT:888615},
        \item the \ctlp of~\citeauthor{Abadi-C-TLP}~\cite{Abadi-C-TLP}
        \item our \gc, and
        \item our \dgc.
    \end{enumerate*}
    }\label{table::comparisonTable} 
\begin{tabularx}{\linewidth}{
X@{}
c@{\hspace{\smalltabcolsep}}c@{\hspace{\smalltabcolsep}}c@{\hspace{\smalltabcolsep}}
c@{}
} 
\toprule
&\multicolumn{4}{c}{Scheme}\\
\cmidrule(lr){2-5}
Features & TLP & C-TLP & \gc & \dgc \\
\cmidrule(lr){1-1} \cmidrule(lr){2-5}
Multi-puzzle & \xmark & \cmark & \cmark & \cmark\\
Varied-size & \cmark & \xmark & \cmark & \cmark\\
Verification & \xmark & \cmark & \cmark & \cmark\\
Delegation & \xmark & \xmark & \xmark & \cmark\\
Exact-time Solution Recovery & \xmark & \xmark & \xmark & \cmark\\
Fair Payment & \xmark & \xmark & \xmark & \cmark\\
\bottomrule
\end{tabularx}
\end{table}

\begin{figure}[!b]
    \centering
    \includegraphics[width=\linewidth]{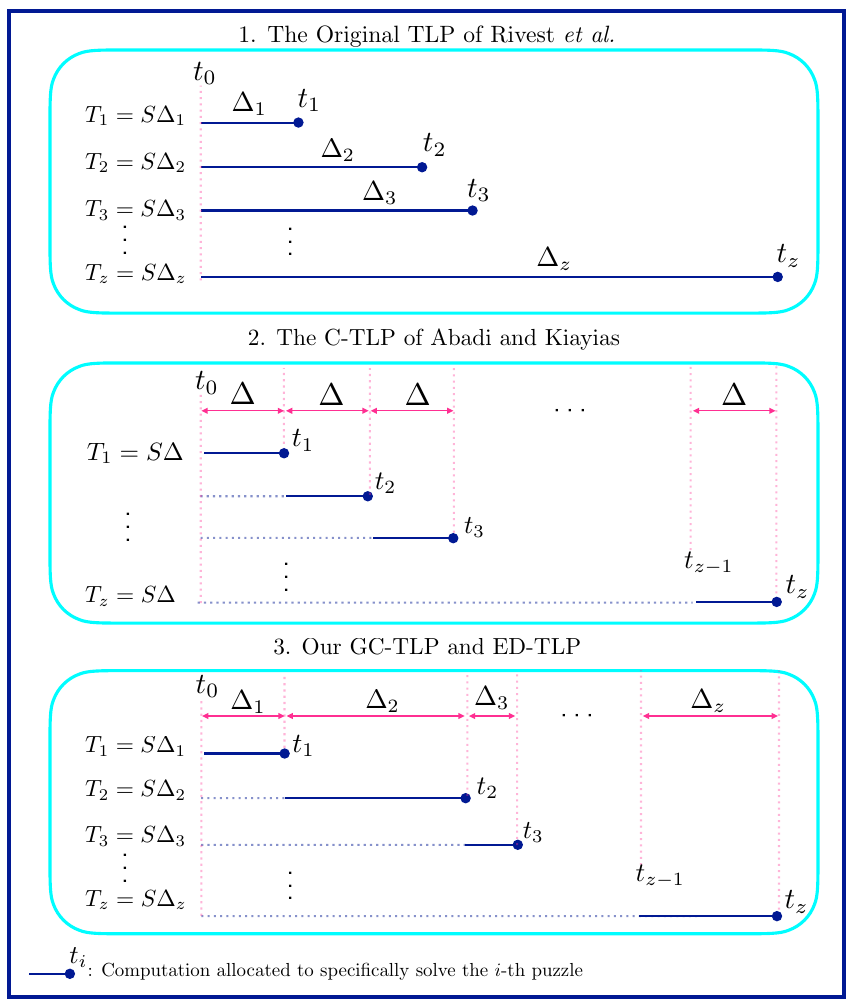}
    \caption{Differences between the following four schemes in terms of supporting multi-puzzle and varied-size time intervals: the original TLP of~\citeauthor{Rivest:1996:TPT:888615}~\cite{Rivest:1996:TPT:888615}, the C-TLP\xspace of~\citeauthor{Abadi-C-TLP}~\cite{Abadi-C-TLP},
    our GC-TLP\xspace, and
    our ED-TLP\xspace.
    }
    \Description{The figure shows that the original Time-Lock Puzzle by Rivest et al. requires each puzzle in a sequence to be solved in parallel for its full duration; the Chained Time-Lock Puzzle of Abadi and Kiayias allows the puzzles to be solved sequentially but only if they are of equal intervals. Our Efficient Delegated Time-Lock Puzzle allows puzzles with arbitrary intervals to be solved sequentially, therefore has high flexibility and low overhead.}
    \label{fig:TLP-schemes}
\end{figure}

The features of our protocols were motivated by limitations in the existing literature, illustrated by the multi-client example in Section~\ref{sec:introduction}.
Our scheme enables clients with different puzzles to combine them in a single chain the server can solve sequentially. 
We briefly explain how it can be done in a simple two-client case. 
Consider clients $\cl_1$ and $\cl_2$ who encode messages $m_1$ and $m_2$ in puzzles $\hat{p}_1$ and $\hat{p}_2$ with intervals $\Delta_1$ and $\Delta_2$, respectively, then send them to a server \se at the same time.
Assume $\Delta_1<\Delta_2$.
After \se receives $\hat{p}_1$ and $\hat{p}_2$, it can start working on $\hat{p}_1$ immediately, but also asks (or incentivizes) the clients to chain their messages.
To do this, $\cl_2$, with a longer interval, delegates its messages to $\cl_1$.
Acting as \tpc, $\cl_1$ uses its existing keys to extend $\hat{p}_1$ with a puzzle containing $m_2$ set for interval of $\bar{\Delta}_2 = \Delta_2 - \Delta_1$, which is sent to \se.
After solving the initial puzzle in $\hat{p}_1$, \se can start solving the new puzzle.
It is thus able to recover both $m_1$ and $m_2$ in time $\Delta_2$ using only a single CPU, doing equivalent work to the work $\hat{p}_2$ initially required.
This \textit{scales to any number of clients}, as puzzles chains can be arbitrarily long.
Alternatively, clients can delegate puzzle generation to a common helper \tpc from the start.
In addition, if \se is unable to solve the puzzle chain in useful time, it can use server delegation to securely offload its work, with fair payment and timely delivery of solutions

The protocols are modular, so the main features of \dgc: variable intervals, client delegation, and server delegation can be used separately.
For example, if $\cl_1$ and $\cl_2$ are offline after sending their puzzles and cannot chain them, \se can still delegate the work and obtain the solutions on time (at a higher cost).

\subsection{Asymptotic Cost Analysis}\label{subsec:asymptotic-cost-analysis}

We provide a brief asymptotic cost analysis of the protocol.
Table~\ref{table::puzzle-com} summarizes the cost comparison between the four schemes.
In our cost evaluation, for the sake of simplicity, we do not include the output of $\ced(\cdot)$, as the output is a fixed value and remains the same for all the schemes we analyze in this section.

\begin{table*}[!t]
\setlength{\smalltabcolsep}{0.7\tabcolsep}
\caption{
    Asymptotic costs comparison.
    In the table, $z$ is the total number of puzzles.
    $\Delta$ and $\Delta_i$ are time intervals in C-TLP and TLP respectively.
    $\bar{\Delta}_i$ is a time interval in GC-TLP\xspace and ED-TLP\xspace with $\Delta_i = \sum_{j=1}^i{\bar{\Delta}_j}$.
}
\vspace{-0.8em}
\label{table::puzzle-com}
\begin{tabular}{
@{}cp{2cm} 
c@{\hspace{\smalltabcolsep}}c 
c@{\hspace{\smalltabcolsep}}c 
c@{\hspace{\smalltabcolsep}}c 
c@{}c 
cc 
c@{\hspace{\smalltabcolsep}}c@{} 
} 
\toprule

&
&\multicolumn{10}{c}{{Algorithms Complexity}}
&\multicolumn{2}{c}{{Total Complexity}} \\
\cmidrule(rl){3-12}\cmidrule(l){13-14}

&
&\multicolumn{2}{c}{\ensuremath{\mathsf{Setup}}}
&\multicolumn{2}{c}{\ensuremath{\mathsf{Delegate}}}
&\multicolumn{2}{c}{\ensuremath{\mathsf{GenPuzzle}}}
&\multicolumn{2}{c}{\ensuremath{\mathsf{Solve}}}
&\multirow{3}{*}{\ensuremath{\mathsf{Verify}}}
&\multirow{3}{*}{\ensuremath{\mathsf{Retrieve}}}
&\multirow{3}{*}{Comp.}
&\multirow{3}{*}{Comm.} \\
\cmidrule(rl){3-4}\cmidrule(rl){5-6}\cmidrule(rl){7-8}\cmidrule(rl){9-10}

{\multirow{-4}{*}{\rotatebox[origin=c]{90}{\textbf{Scheme}}}}
&\multirow{-4}{*}{Operation}
&\ensuremath{\mathcal{C}}
&\ensuremath{\mathcal{HC}}
&\ensuremath{\mathcal{C}}
&\ensuremath{\mathcal{S}}\
&\ensuremath{\mathcal{C}}
&\ensuremath{\mathcal{HC}}
&\ensuremath{\mathcal{S}}
&\ensuremath{\mathcal{HS}}
&
&
&
& \\
\cmidrule(r){1-2}\cmidrule(rl){3-4}\cmidrule(rl){5-6}\cmidrule(rl){7-8}\cmidrule(rl){9-10}\cmidrule(rl){11-11}\cmidrule(rl){12-12}\cmidrule(l){13-14}

& Exp.
&$-$
&$O(z)$
&$-$
&$-$
&$-$
&$O(z)$
& $-$
&{{ $O(S\cdot\sum\limits^{z}_{i=1}\bar{\Delta}_{i} )$}}
& $-$
& $-$
&
& \\
    
& Add./Mul.
&$-$
&$-$
&$-$
&$O(z)$
&$-$
&$O(z)$
& $-$
& $O(z)$
&  $-$
& $-$
&
& \\

& Hash
&$-$
&$-$
&$-$
&$-$
&$-$
&$O(z)$
&$-$
&$-$
&$O(z)$
&$-$
&
& \\

\multirow{-5}{*}{\rotatebox[origin=c]{90}{\textbf{ED-TLP\xspace}}}
& Sym.\ Enc
&$-$
&$-$
&$O(z)$
&$-$
&$O(z)$
&$O(z)$
&$-$
&$O(z)$
&$-$
&$O(z)$
&\multirow{-5}{*}{$O(S\cdot\sum\limits^{z}_{i=1}\bar{\Delta}_{i} )$}
&\multirow{-5}{*}{$O(z)$} \\
\cmidrule(r){1-2}\cmidrule(rl){3-4}\cmidrule(rl){5-6}\cmidrule(rl){7-8}\cmidrule(rl){9-10}\cmidrule(rl){11-11}\cmidrule(rl){12-12}\cmidrule(l){13-14}
 
& Exp.
&$O(z)$
&
&
&
&$O(z)$
&
&$O(S\cdot\sum\limits^z_{i=1}\bar{\Delta}_{i})$
&
&$-$
&
&
& \\

& Add./Mul.
&$-$
&
&
&
&$O(z)$
&
&$O(z)$
&
&$-$
&
&
& \\

& Hash
&$-$
&
&
&
&$O(z)$
&
&$-$
&
&$O(z)$
&
&
& \\

\multirow{-5}{*}{\rotatebox[origin=c]{90}{ \textbf{GC-TLP\xspace}}}
& Sym. Enc
&$-$
&
&
&
&$O(z)$
&
&$O(z)$
&
&$-$
&
&\multirow{-5}{*}{$O(S\cdot\sum\limits^z_{i=1}\bar{\Delta}_i)$}
&\multirow{-5}{*}{$O(z)$} \\
\cmidrule(r){1-2}\cmidrule(rl){3-4}\cmidrule(rl){5-6}\cmidrule(rl){7-8}\cmidrule(rl){9-10}\cmidrule(rl){11-11}\cmidrule(rl){12-12}\cmidrule(l){13-14}

& Exp.
&$1$
&
&
&
&$O(z)$
&
&$O(z\cdot S\cdot \Delta)$
&
&$-$
&
&
& \\

& Add./Mul.
&$-$
&
&
&
&$O(z)$
&
&$O(z)
$&
&$-$
&
&
& \\

& Hash
&$-$
&
&
&
&$O(z)$
&
&$-$
&
&$O(z)$
&
&
& \\

\multirow{-4}{*}{\rotatebox[origin=c]{90}{ \textbf{C-TLP}}}
& Sym. Enc
&$-$
&
&
&
&$O(z)$
&
&$O(z)$
&
& $-$
&
&\multirow{-3}{*}{$O(z\cdot S\cdot \Delta)$}
&\multirow{-3}{*}{$O(z)$} \\
\cmidrule(r){1-2}\cmidrule(rl){3-4}\cmidrule(rl){5-6}\cmidrule(rl){7-8}\cmidrule(rl){9-10}\cmidrule(rl){11-11}\cmidrule(rl){12-12}\cmidrule(l){13-14}

& Exp.
&$O(z)$
&
&
&
&$O(z)$
&
&$O(S\cdot\sum\limits^z_{i=1}\Delta_i)$
&
&
&
&
& \\ 

& Add./Mul.
&$-$
&
&
&
&$O(z)$
&
&$O(z)$
&
&
&
&
& \\

& Hash
&$-$
&
&
&
&$-$
&
&$-$
&
&
&
&
& \\

\multirow{-5}{*}{\rotatebox[origin=c]{90}{\textbf{TLP}}}
& Sym.\ Enc
&$-$
&
&
&
&$O(z)$
&
&$O(z)$
&
&
&
&\multirow{-5}{*}{$O(S\cdot\sum\limits^z_{i=1}\Delta_i)$}
&\multirow{-5}{*}{$O(z)$} \\
\bottomrule
\end{tabular}
\end{table*}

\subsubsection{Computation Cost}
As Table~\ref{table::puzzle-com} demonstrates, \dgc has the lowest client-side setup and puzzle generation cost and negligible computation cost for the server.
Schemes that support verification: C-TLP, \gc, and \dgc, all have $O(z)$ verification cost.

Thus, in \dgc, the client does not need to perform any modular exponentiation.
However, in the rest of the schemes, the client has to perform $O(z)$ modular exponentiation in the setup and/or puzzle generation phases.
Furthermore, in \dgc, the server does not engage in any modular exponentiation, in contrast to other schemes that require the server's involvement in performing $O(S\cdot\sum^z_{i=1}\bar{\Delta}_i )$ exponentiation in \gc, $O(z\cdot S\cdot \Delta)$ exponentiation within C-TLP, and $O(S\cdot\sum^z_{i=1}{\Delta}_i )$ exponentiation in the original TLP.

\subsubsection{\dgc's Added Cost}
\dgc imposes an additional cost, $O(z)$, during the retrieve phase, distinguishing it from the other schemes that do not incur such a cost. 
Note that the retrieve phase exclusively involves invocations of the symmetric key encryption scheme, which imposes low computational costs.



\subsubsection{Communication Cost}
As depicted in Table~\ref{table::puzzle-com}, for the $z$-puzzle setting, the communication complexity for all four schemes is $O(z)$.
However, in \dgc, both \cl and \se only transmit random values and outputs of symmetric-key encryption, which are shorter than the messages in the other three schemes.
Thus, \dgc exhibits the lowest client-side and server-side concrete communication costs.

\subsection{Concrete Run-time Analysis}\label{subsec:concrete-run-time-analysis}

\begin{table*}[!ht]
\setlength{\smalltabcolsep}{6pt}

\sisetup{
table-format=5.2,
round-mode = places,
round-precision = 2,
print-zero-integer=true,
table-text-alignment=right,
table-align-text-before=false,
group-digits = none,
}
\caption{Run time using 2048-bit moduli and 1 second intervals (or equivalent cumulative durations in TLP).}\label{table::puzzle-time} 
\vspace{-0.8em}
\begin{tabular}{
c@{\hspace{5pt}} 
S[table-format=5.0, round-precision = 0]@{} 
S[table-format=\ensuremath{<}3.2]@{\hspace{\smalltabcolsep}} 
S[table-format=1.2]@{} 
S[table-format=3.2]@{\hspace{\smalltabcolsep}} 
S[table-format=3.2]@{} 
S[table-format=2.2]@{\hspace{\smalltabcolsep}} 
S[table-format=3.2]@{\hspace{\smalltabcolsep}} 
S[table-format=5.2]@{} 
S[table-format=5.2]@{\hspace{\smalltabcolsep}} 
S[table-format=3.2]@{\hspace{\smalltabcolsep}}
S[table-format=3.2]@{}
>{\columncolor{white}[-2pt][2pt]}S[table-format=3.2]@{\hspace{\smalltabcolsep}} 
S[table-format=3.2]@{} 
>{\columncolor{white}[-11pt][2pt]}S[table-format=5.2]@{\hspace{\smalltabcolsep}} 
S[table-format=5.2]@{\hspace{\smalltabcolsep}} 
S[table-format=5.2]@{} 
}
\toprule
 
&
&\multicolumn{15}{c}{{ Algorithms Run Time (in seconds) }} \\\cmidrule(l){3-17}

&
&\multicolumn{2}{c}{\ensuremath{\mathsf{Setup}}}
&\multicolumn{2}{c}{\ensuremath{\mathsf{Delegate}}}
&\multicolumn{2}{c}{\ensuremath{\mathsf{GenPuzzle}}}
&\multicolumn{2}{c}{\ensuremath{\mathsf{Solve}}}
&
&
&\multicolumn{5}{c}{Total time} \\

\cmidrule(rl){3-4}\cmidrule(rl){5-6}\cmidrule(rl){7-8}\cmidrule(rl){9-10}\cmidrule(l){13-17}
   
{\multirow{-4}{*}{{\rotatebox[origin=c]{90}{\textbf{Scheme}}}}}

&{\multirow{-4}{*}{\parbox{1cm}{\centering Number\\of\\puzzles}}}
&{\ensuremath{\mathcal{C}}}
&{\ensuremath{\mathcal{HC}}}
&{\ensuremath{\mathcal{C}}}
&{\ensuremath{\mathcal{S}}}
&{\ensuremath{\mathcal{C}}}
&{\ensuremath{\mathcal{HC}}}
&{\ensuremath{\mathcal{S}}}
&{\ensuremath{\mathcal{HS}}}
&{\multirow{-2}{*}{\ensuremath{\mathsf{Verify}}}}
&{\multirow{-2}{*}{\ensuremath{\mathsf{Retrieve}}}}
&{\ensuremath{\mathcal{C}}}
&{\ensuremath{\mathcal{HC}}}
&{\ensuremath{\mathcal{S}}}
&{\ensuremath{\mathcal{HS}}}
&{Overall}\\ 

\cmidrule(rl){1-2}\cmidrule(rl){3-4}\cmidrule(rl){5-6}\cmidrule(rl){7-8}\cmidrule(rl){9-10}\cmidrule(rl){11-11}\cmidrule(rl){12-12}\cmidrule(l){13-17}

&10
& \ensuremath{<} 0.01
&0.03
& \ensuremath{<}0.01
& \ensuremath{<}0.01
&
&0.028356042
&
&10.66745721
&\ensuremath{<}0.01
&\ensuremath{<}0.01
&{\cellcolor{T-Q-PH3}}\ensuremath{<}0.01
&0.06
&{\cellcolor{T-Q-PH3}}\ensuremath{<}0.01
&10.67
&10.72 \\

&100
&\ensuremath{<}0.01
&0.03
&\ensuremath{<}0.01
&\ensuremath{<}0.01
&
&0.19
&
&107.50
&\ensuremath{<}0.01
&\ensuremath{<}0.01
&{\cellcolor{T-Q-PH3}}\ensuremath{<}0.01
&0.22
&{\cellcolor{T-Q-PH3}}\ensuremath{<}0.01
&107.50
&107.72 \\
  
&1000
&\ensuremath{<}0.01
&0.05
&0.02
&\ensuremath{<}0.01
&
&1.80
&
&1071.01
&\ensuremath{<}0.01
&0.02
&{\cellcolor{T-Q-PH3}}0.02
&1.85
&{\cellcolor{T-Q-PH3}}0.02
&1071.01
&1072.88 \\

\multirow{-4}{*}{\rotatebox[origin=c]{90}{\textbf{ED-TLP\xspace}}}
&10000
&\ensuremath{<}0.01
&0.23
&0.13
&\ensuremath{<}0.01
&
&18.12
&
&10669.92
&0.03
&0.13
&{\cellcolor{T-Q-PH3}}0.13
&18.12
&{\cellcolor{T-Q-PH3}}0.13
&10669.92
&10688.54 \\ \cmidrule(rl){1-2}\cmidrule(rl){3-4}\cmidrule(rl){5-6}\cmidrule(rl){7-8}\cmidrule(rl){9-10}\cmidrule(rl){11-11}\cmidrule(rl){12-12}\cmidrule(l){13-17}

&10
&0.03
&
&
&
&0.03
&
&10.71
&
&\ensuremath{<}0.01
&
&0.05
&
&10.71
&
&10.76 \\

&100
&0.03
&
&
&
&0.19
&
&107.28
&
&\ensuremath{<}0.01
&
&0.22
&
&107.28
&
&107.50 \\

&1000
&0.05
&
&
&
&1.81
&
&1072.96
&
&0.02
&
&1.85
&
&1072.96
&
&1074.81 \\

\multirow{-4}{*}{\rotatebox[origin=c]{90}{\textbf{GC-TLP\xspace}}}
&10000
&0.22
&
&
&
&18.07
&
&10689.07
&
&0.02
&
&18.29
&
&10689.07
&
&10707.37 \\ \cmidrule(rl){1-2}\cmidrule(rl){3-4}\cmidrule(rl){5-6}\cmidrule(rl){7-8}\cmidrule(rl){9-10}\cmidrule(rl){11-11}\cmidrule(rl){12-12}\cmidrule(l){13-17}

&10
&0.03
&
&
&
&0.02
&
&10.70
&
&\ensuremath{<}0.01
&
&0.05
&
&10.70
&
&10.75 \\
    
&100
&0.03
&
&
&
&0.19
&
&107.09
&
&\ensuremath{<}0.01
&
&0.21
&
&107.01
&
&107.22 \\

&1000
&0.04
&
&
&
&1.83
&
&1070.36
&
&0.03
&
&1.86
&
&1070.36
&
&1072.23 \\

\multirow{-4}{*}{\rotatebox[origin=c]{90}{\textbf{C-TLP} }}
&10000
&0.14
&
&
&
&18.10
&
&10680.29
&
&0.02
&
&18.24
&
&10680.29
&
&10698.54 \\ \cmidrule(rl){1-2}\cmidrule(rl){3-4}\cmidrule(rl){5-6}\cmidrule(rl){7-8}\cmidrule(rl){9-10}\cmidrule(rl){11-11}\cmidrule(rl){12-12}\cmidrule(l){13-17}

&10
&0.14
&
&
&
&0.02
&
&58.50
&
&
&
&0.16
&
&58.50
&
&58.65\\

&100
&1.86
&
&
&
&0.19
&
&5075.49
&
&
&
&2.05
&
&5075.49
&
&5077.53 \\
  
&1000
&19.78
&
&
&
&1.80
&
&N/A$^a$
&
&
&
&21.58
&
&N/A$^a$
&
&N/A$^a$ \\

\multirow{-4}{*}{\rotatebox[origin=c]{90}{\textbf{TLP}}}
&10000
&181.77
&
&
&
&17.15
&
&N/A$^b$
&
&
&
&198.92
&
&N/A$^b$
&
&N/A$^b$ \\ \bottomrule
\multicolumn{5}{l}{\footnotesize
$^a$ Estimated total runtime: 5.8 days
}&
\multicolumn{10}{l}{\footnotesize
$^b$ Estimated total runtime: 1.5 years
}\\
\end{tabular}
\end{table*}

To complement the asymptotic analysis, we implemented
\dgc and \gc, as well as \ctlp~\cite{Abadi-C-TLP} and the \citeauthor{Rivest:1996:TPT:888615} TLP~\cite{Rivest:1996:TPT:888615} in Python 3 using the high-performance multiple precision integer library GMP~\cite{GranlundGMP} via the gmpy2~\cite{gmpy2} library.
The implementation strictly adhere to the protocol descriptions for TLP, \ctlp, and \gc.
Smart contract functionality was implemented using Solidity.
For protocols requiring a commitment scheme, we used a SHA512 hash-based commitment with 128-bit witness values. 

\subsubsection{Setting the Parameters} 
\begin{figure}[t]
\includegraphics[width=\linewidth]{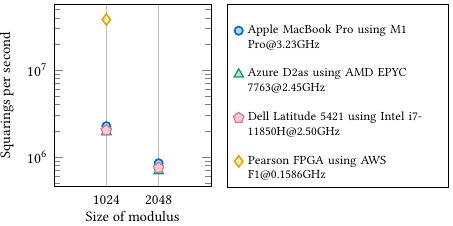}
\caption{Log plot of the squaring capabilities of consumer and commodity cloud hardware for 1024-bit and 2048-bit RSA groups, compared to an FPGA-based solution for 1024-bit moduli as reported in~\cite{FPGACompRes}. General purpose CPUs have similar capabilities, but the FGPA solution is an order of magnitude faster.}
\Description{The plot shows that the various general purpose CPUs are clustered around 2000000 operations per second for 1024-bit moduli and 800000 operations per second for 2048-bit moduli. FPGA solutions are capable of 38000000 operations per second.}
\label{fig:bench}
\end{figure}
The evaluation required a concrete value of $S$, the number of repeated squaring a CPU can perform.
We benchmarked 10 different commodity computing machines and consumer devices. 
The results are presented in \autoref{fig:bench}, 
 with the full results presented in~Table~\ref{table:bench-} in the Appendix~\ref{Squaring-Capabilities}.

\subsubsection{Run-time Comparison}
For the concrete analysis, we ran the four protocols on a 2021 MacBook Pro equipped with an Apple M1 Pro CPU. 
As we use execution time as a proxy for computation costs, for benchmarking we used a mock smart contract running locally, without an underlying blockchain or network delay.
Table~\ref{table::puzzle-time} lists the running time of each protocol step and totals per actor.

The most salient observation is \se can offload $100$\% of its workload to \tps with negligible computational overhead.
Although \dgc adds the \delegate{} and \retrieve{} steps, their CPU costs for \se are negligible per puzzle.
In our benchmarks, both \se and \tps were run on the same hardware, but as \autoref{fig:bench} shows, delegating to specialized hardware can result in large gains, especially for an underpowered server. 
Client-side delegation also has low overheads.
Despite its much lower computational complexity, delegating puzzle generation results in computational savings of $99$\% for the client.
As puzzle generation can be parallelized, generating large numbers of puzzles can benefit from delegation to a \tpc with a high number of powerful CPU cores.
For example, $10000$ puzzles only require $2.48$s on 8 cores, compared to $18.12$s on a single core.
To support variable durations, \gc has a more complex setup compared to \ctlp (Table~\ref{table::puzzle-time}).
In practice, the effect is minor per puzzle, with a total increase of $0.08$s for $10000$ puzzles.

\begin{figure}[!t]
\centering
\includegraphics[width=\linewidth]{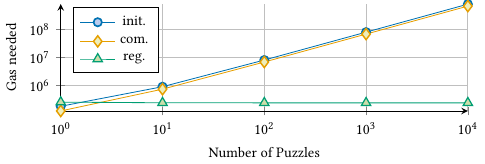}
\caption{Gas amount needed to execute the smart contract functions (initialization, committing to the puzzles, and registering solutions) for different puzzle instances}
\Description{The number of puzzles leads to a fixed registration cost and a linear increase in initialization and commitment costs.}
\label{fig:gas-cost}
\end{figure}
\subsection{\dgc Gas Cost Estimates}
We implemented the \dgc smart contract in Solidity and estimated its costs on the Ethereum~\cite{ethereum} Mainnet and two Ethereum-based level 2 networks: Arbitrum One~\cite{kalodner2018arbitrum} and Polygon PoS~\cite{bjelic2017pol}.
Although the protocol is agnostic of the underlying blockchain, we chose Solidity and Ethereum-based chains due to their prevalence. Figure~\ref{fig:gas-cost} shows the gas cost scaling with corresponding USD costs list in Table~\ref{tab:cost_comparison}.
Initializing the contract, executed when \se deploys the contract in $\se.\delegate{\mtdgm}$, and committing to the puzzles, executed by \tpc during $\gen{\mtdgm}$, are $O(n)$, which the costs reflect.
There is, however, a large constant gas cost inherent to the deployment of the contract.
Registering solutions is asymptotically constant. 
The minor variations seen in concrete gas cost are due to different parameters.

\begin{table}[!t]
\caption{ Cost comparison of initializing the contract, committing to the puzzle chain, and registering a solution across different networks.}
\label{tab:cost_comparison}
\setlength{\smalltabcolsep}{3pt}

\sisetup{
table-format=4.2,
round-mode = places,
round-precision = 2,
print-zero-integer=true,
table-text-alignment=right,
group-digits = none,
}

\begin{tabular}{
@{}
S[table-format=5.0, round-precision = 0]@{}
S[table-format=4.2]@{\hspace{\smalltabcolsep}}
S[table-format=4.2]@{\hspace{\smalltabcolsep}}
S[table-format=1.2]@{\hspace{\smalltabcolsep}}
S[table-format=3.2]@{\hspace{\smalltabcolsep}}
S[table-format=3.2]@{\hspace{\smalltabcolsep}}
S[table-format=1.2]@{\hspace{\smalltabcolsep}}
S[table-format=2.2]@{\hspace{\smalltabcolsep}}
S[table-format=1.2]@{\hspace{\smalltabcolsep}}
S[table-format=1.2]@{\hspace{\smalltabcolsep}}
} 
\toprule
&\multicolumn{9}{c}{Algorithms' Gas Cost (in USD$^*$\,)} \\ \cmidrule(l){2-10}

&\multicolumn{3}{c}{Etherium}
&\multicolumn{3}{c}{Arbitrum}
&\multicolumn{3}{c}{Polygon}\\
\cmidrule(rl){2-4}\cmidrule(rl){5-7}\cmidrule(l){8-10}

{\multirow{-4}{*} {\parbox{1cm}{\centering Number\\of\\puzzles}}}
& \text{init.}
& \text{com.}
& \text{reg.}
& \text{init.}
& \text{com.}
& \text{reg.}
& \text{init.}
& \text{com.}
& \text{reg.}
\\
\cmidrule(rl){1-1}\cmidrule(rl){2-4}\cmidrule(rl){5-7}\cmidrule(l){8-10}
 
 1
& 0.98
& 0.67
& 1.38
& 0.05
& 0.04
& 0.07
& 0.01
& 0.01
& 0.01
\\

 10
& 4.89
& 3.99
& 1.31
& 0.25
& 0.20
& 0.07
& 0.02
& 0.01
& 0.01
\\

 100
& 43.72
& 37.17
& 1.30
& 2.19
& 1.86
& 0.07
& 0.11
& 0.10
& 0.01
\\

 1000
& 433.44
& 369.10
& 1.30
& 21.67
& 18.46
& 0.07
& 1.08
& 0.92
& 0.01
\\

 10000
& 4331.40
& 3690.14
& 1.30
& 216.54
& 184.48
& 0.07
& 10.70
& 9.12
& 0.01
\\
\bottomrule

\multicolumn{10}{p{8cm}}{\footnotesize $^*$ Transaction USD price on 20 Aug 2024, 11:28 UTC using \url{www.cryptoneur.xyz/en/gas-fees-calculator}.}\\

\end{tabular}
\end{table}

\subsection{Further Discussion}
Delegation with fair payment incentives developing and making available specialized hardware for squaring-based puzzles and VDFs.
This levels the playing field between parties with different computational capabilities and addresses one of the fundamental issues of time-based cryptography: the heterogeneity of solvers which otherwise causes the 
solve time to vary substantially.
Figure~\ref{fig:bench} shows that commodity cloud and high-end consumer hardware are comparable in solving speed%
\footnote{
This may be explained by the focus on highly parallelizable tasks, especially for cloud computing; puzzle solving is inherently sequential.}%
.
In contrast, FPGA-based squaring requires only 4 cycles per operation~\cite{FPGACompRes}.
Therefore, speed-up by a factor of 15 can be achieved by delegating solving to a helper using the FPGA described in~\cite{FPGACompRes} compared to the fastest general-purpose CPU benchmarked (Apple M1).

As the overhead of \dgc is low, delegation and the choice of helper become economics decisions.
Even if \cl and \se have the capabilities to run \gc on their own, they can still consider \dgc if the speed-up offered by delegation provides enough value compared to the monetary cost of delegation.
Helpers with different capabilities and costs can be considered.
A slower helper with adequate performance for the needs of \se at a lower cost may be preferred over a costlier, faster helper.
In addition, if faster helpers are busy, slower helpers may provide the solution sooner overall, which an appropriate instantiation of \ced can capture.

Setting the value of $S$, the number of squarings per second the most capable server can perform, is currently a challenge as numbers are not readily available.
If a market for delegation of repeated squaring develops, puzzle generation can be calibrate to the strongest offerings on the market to ensure that puzzles cannot be solved before the intended time.
Additionally, by allowing \cl to extend the chain, this means new puzzles in a chain can be generated using updated information about the strongest servers.

\section{Applications of (E)D-TLP}\label{sec:applications}
We briefly discuss applications of our (E)D-TLP within the contexts of VDF and proofs of data storage

\subsection{Delegated VDF}\label{subsec:delegated-vdf}
Verifiable delay functions~(VDFs) let a prover provide a publicly verifiable proof that it has performed a predetermined number of sequential computations~\cite{BonehBBF18,BonehBF18}.
As VDF schemes have been built on TLPs~\cite{Wesolowski19,Pietrzak19a}, the ideas behind \dgm can also directly contribute to VDF scalability. 
A typical VDF involves algorithms:
\begin{enumerate*}[label=(\arabic*)]
    \item $\setup{}(\cdot)$ that returns system parameters,
    \item $\eval(\cdot)$ that returns a value $y$ and proof that the value has been constructed correctly, and
    \item $\verify{}(\cdot)$ that decides if a proof is valid.
\end{enumerate*}
By definition, $\eval(\cdot)$ demands considerable computational resources to compute a predefined number of sequential steps until it obtains the final result, similarly to TLP.

However, to date, the scalability of VDF schemes has been overlooked.
The literature has primarily concentrated on the specific case where the solver deals with only a single instance of VDF, ignoring the generic multi-instance VDF cases. 
Using a current VDF scheme, if a resource-constrained party (lacking a sufficient level of computation power) has to run multiple VDFs within a period, it must defer some instances of $\eval(\cdot)$ until others have completed. 
However, this leads to a significant delay in computing all VDFs instances. 
This becomes particularly problematic in a competitive environment, where different parties need to submit the output of VDF ($\eval(\cdot)$) instances to a smart contract on time to receive a reward, e.g., in~\cite{ThyagarajanGBKS21}. 
By adapting \dgm, VDFs can be delegated with fair payment for timely delivery, letting systems scale as the number of VDF instances increases beyond their immediate capabilities. 
\subsection{Scalable Proof of Storage}\label{subsec:proof-of-storage-for-multiple-files}
There are scenarios in which a client desires a server to acquire distinct ``random challenges'' at various points in time within a specified period, without the client's involvement during that timeframe.
Such challenges would enable the server to generate specific proofs, including, but not limited to, demonstrating the \textit{continuous availability of services}, such as proofs of data storage \cite{Abadi-C-TLP,Storage-Time,YuHYL21}.
In scenarios where a client outsources its file to a storage server and aims to ensure the file's integrity and accessibility at different intervals, these challenges become crucial. 
The idea involves the client computing random challenges, encoding them into puzzles, and transmitting them to the server. The server can then solve each puzzle, extract a subset of challenges, and employ them for the relevant proof scheme.

In cases where a (set of) client needs to simultaneously activate multiple instances of the scheme (e.g., invoking multiple instances of proofs of data storage scheme when the client has multiple independent files), thin clients with limited resources can not generate all the puzzles and the storage server with insufficient computational resources may struggle to solve the puzzles on time.

To surmount this challenge in a privacy-preserving manner, our ED-TLP can be employed to (i) enable the client to delegate the generation of puzzles to a more powerful server and (ii) allow the storage server to verifiably delegate the task of solving the puzzles to a resourceful helper.
In the multi-client case, the server can ask clients to combine their puzzles into a single chain to further reduce the overall workload.
\balance
\section{Related Work}\label{sec:related-work}
Since the introduction of the RSA-based TLP, various variants have been proposed.
For a comprehensive survey, we refer readers to a recent survey~\cite{medley2023sok}. 
\citeauthor{BonehN00}~\cite{BonehN00} and \citeauthor{GarayJ02}~\cite{GarayJ02} have proposed TLPs for the setting where a client can be malicious and needs to prove (in zero-knowledge) to a solver that the correct solution will be recovered after a certain time.
\citeauthor{BaumDDNO21}~\cite{BaumDDNO21} have developed a composable TLP that can be defined and proven in the universal composability framework. 
\citeauthor{MalavoltaT19}~\cite{MalavoltaT19} and \citeauthor{BrakerskiDGM19}~\cite{BrakerskiDGM19} proposed the notion of homomorphic TLPs, which let an arbitrary function run over puzzles before they are solved.
They use the RSA-based TLP and fully homomorphic encryption, which imposes high overheads.
To improve the above homomorphic TLPs, \citeauthor{SrinivasanLMNPT23}~\cite{SrinivasanLMNPT23} proposed a scheme that supports the unbounded batching of puzzles.

\citeauthor{ThyagarajanGBKS21}~\cite{ThyagarajanGBKS21} proposed a blockchain-based scheme that allows a party to delegate the computation of sequential squaring to a set of competing solvers.
To date, this is the only scheme that considers verifiably outsourcing sequential squaring for rewards.
In this scheme, a client posts the number of squarings required and its related public parameters to a smart contract, alongside a deposit.
The solvers are required to propose a solution before a certain time point. 
This scheme suffers from a security issue that allows colluding solvers to send incorrect results and invalid proofs, but still they are paid without being detected~\cite{Abadi2023-1347}.
Also, this scheme does not offer any solution for multi-puzzle settings.
It offers no formal definition for delegated time-lock puzzles (they only present a formal definition for delegated squaring) and supports only solver-side delegation.
Our (E)D-TLP  addresses these limitations.

\section{Conclusion}\label{sec:conclusion}

In this work, we introduced a scalable TLP tailored for real-world scenarios involving numerous puzzles, effectively addressing the computational constraints of clients and servers under heavy workloads. Specifically, we proposed the concept of Delegated Time-Lock Puzzle (\dgm) and developed a novel protocol, \dgc, to implement \dgm, significantly enhancing the scalability of TLPs. 
This protocol allows a server to offload its workload to a more capable helper with minimal overhead, ensuring timely delivery and fair payment. Delegating tasks to a helper with specialized hardware is particularly advantageous, as it enables significant speed-ups. In a multi-client scenario, \dgc optimizes efficiency by allowing clients to merge their puzzles into a single puzzle chain, equivalent in complexity to the longest puzzle.

We implemented \dgc, configured its concrete parameters, and evaluated its on-chain and off-chain costs. For the first time, we assessed the performance of state-of-the-art TLPs for handling a large number of puzzles. The results demonstrate that our scheme is both efficient and highly scalable. \dgc enables the delegation of 99\% of the client workload and 100\% of the server workload with minimal overhead, achieving smart contract gas costs as low as $0.2$ cents per puzzle. \dgc can enhance the scalability of systems reliant on delay-based cryptography.

\begin{acks}
Aydin Abadi was supported in part by REPHRAIN: The National Research Centre on Privacy, Harm Reduction and Adversarial Influence Online, under UKRI grant: EP/V011189/1.
Dan Ristea is funded by UK EPSRC grant EP/S022503/1 supporting the CDT in Cybersecurity at UCL\@.
Steven J.~Murdoch was supported by REPHRAIN.
\end{acks}

\bibliographystyle{ACM-Reference-Format}
\bibliography{bib/main-ref,bib/vdf-fpga,bib/repo}

\appendix
\section{Discussion on Network Delay Parameters}\label{sec::network-delay}

As discussed in~\cite{GarayKL15}, liveness states that
an honestly generated transaction will eventually be included more than $\kappa$ blocks deep in an honest party’s blockchain.
It is parameterized by wait time: $u$ and depth: $\kappa$.
They can be fixed by setting $\kappa$ as the minimum depth of a block considered as the blockchain’s state (i.e., a part of the blockchain that remains unchanged with a high probability, e.g., $\kappa \geq 6$) and $u$ the waiting time that the transaction gets $\kappa$ blocks deep.

As shown in~\cite{BadertscherMTZ17}, there is a slackness in honest parties' view of the blockchain.
In particular, there is no guarantee that at any given time, all honest miners have the same view of the blockchain or even the state.
But, there is an upper bound on the slackness, denoted by $\mathtt{WindowSize}$, after which all honest parties would have the same view on a certain part of the blockchain state.

This means when an honest party (e.g., the server) propagates its transaction (containing the proof) all honest parties will see it on their chain after at most: $\Upsilon = \mathtt{WindowSize} + u$ time period.

\section{Symmetric-key Encryption Scheme} \label{sec:SKE}

A symmetric-key encryption scheme consists of three algorithms:
\begin{enumerate*}[label=(\arabic*)]
\item $\skegen(1^\lambda)\rightarrow k$ is a probabilistic algorithm that outputs a symmetric key $k$.
\item $\skeenc(k,m)\rightarrow c$ takes as input $k$ and a message $m$ in some message space and outputs a ciphertext $c$.
\item $\skedec(k,c)\rightarrow m$ takes as input $k$ and a ciphertext $c$ and outputs a message $m$.
\end{enumerate*}.
The correctness requirement is that for all messages $m$ in the message space: $\Pr[\skedec(k, \skeenc(k, m))=m: \skegen(1^\lambda)\rightarrow k ]=1\;$.

The symmetric-key encryption scheme satisfies \emph{indistinguishability under chosen-plaintext attacks (IND-CPA)}, if for any PPT adversary \adv there is a negligible function $\mu(\cdot)$ for which \adv has no more than $\frac{1}{2}+\mu(\lambda)$ probability in winning the following game. 
The challenger generates a symmetric key by calling $\skegen(1^\lambda)\rightarrow k$.
The adversary \adv is given access to an encryption oracle $\skeenc(k,\cdot)$ and eventually sends to the challenger a pair of messages $m_0,m_1$ of equal length.
In turn, the challenger chooses a random bit $b$ and provides \adv with a ciphertext $\skeenc(k$, $m_b)\rightarrow c_b$.
Upon receiving $c_b$, \adv continues to have access to $\skeenc(k,\cdot)$ and wins if its guess $b'$ is equal to $b$.

\section{Notation}\label{sec:notation}

Table~\ref{table:notation-table} summarizes the key notations used in the paper.

\begin{table}[h]
\caption{\small{Notation Table}.}\label{table:notation-table}
\begin{tabularx}{\linewidth}{@{}rX}
\toprule
\textbf{Symbol}&  \textbf{Description} \\\midrule

$\lambda$ & Security parameter \\
\se & Server \\
\cl & Client \\
\tpc & Helper of client \\
\tps & Helper of server \\
\scc & Smart contract \\
$\mathcal{NP}$ & Nondeterministic polynomial time\\
\lang & Language in $\mathcal{NP}$\\
$N$ & RSA modulus \\
\tlp & Time-lock puzzle \\
\ctlp & Chained time-lock puzzle \\
\gm & Generic multi-instance time-lock puzzle \\
\dgm & Delegated time-lock puzzle \\
$\Delta$ & Period that a message must remain hidden \\
$S$ & Max.\ squarings the strongest solver performs per second \\
$T=S\cdot \Delta$ & Number of squarings needed to solve a puzzle \\
$z$& Number of puzzles \\
$[z]$& First $z$ natural numbers $[1,2,3, \ldots, z]$ \\
$j$ & Puzzle's index \\
$s$ or $s_j$ & Puzzle's solution \\
$p$ or $p_j$ & Puzzle \\
\puzzvec & A vector of puzzles $[p_1, \ldots, p_z]$ \\
$g_j$ & Public statement \\
\commvec & A vector of statements $[g_1,\ldots, g_z]$\\
$\hat{p}$ & $ \hat{p}\coloneqq(\puzzvec, \commvec)$\\
$m, m_j$ & Plaintext message\\
$r, r_j, d_j$ & Random value\\
\hash & Hash function\\
$\bar{\Delta}_j$ & Time interval \\
$\sum_{i=1}^j \bar{\Delta}_I$ & Period after which $j$-th solution is found in \gm/\dgm \\
$\pi_j$ & Proof of solution's correctness \\
\ced & Customized extra delay generating function \\
$ToC$ & Type of computational step \\
$aux, aux_{ID}$ &  Specification of a certain solver \\
$\Upsilon$ & Network delay \\
$m_j^*$ &  Encoded/encrypted message \\
$\messvec^*$ & A vector of encrypted messages $[m_1^*,\ldots, m_z^*]$ \\
$t_0$ & Time when puzzles are given to solver \\
$\coins_j$ & Coins paid to solver for finding $j$-th solution \\
\coins & $\coins_1 + \ldots + \coins_z$ \\
\coinsvec & A vector of coins $[\coins_1, \ldots, \coins_z]$ \\
$t_j$ &  Time when $j$-th solution is registered in \scc \\
$\Psi_j$ &  Extra time needed to find $j$-th solution \\
\extravec & A vector of extra time $[\Psi_1,\ldots, \Psi_z]$ \\
$\adr_I$ & Party $I$'s account addresses \\
$\mu$ & Negligible function \\
\skegen & Symm-key encryption's key gen.\ algo.\ \\
\skeenc & Symm-key encryption's encryption algo.\ \\
\skedec & Symm-key encryption's decryption algo.\ \\
\Com & Commitment scheme's commit algo.\ \\
\Ver & Commitment scheme's verify algo.\ \\
\bottomrule
\end{tabularx}
\end{table}

\section{Sequential and Iterated Functions}\label{sec::sequential-squaring}

\begin{definition} [$\Delta,\delta(\Delta))$-Sequential function]
For a function: $\delta(\Delta)$, time parameter: $\Delta$ and security parameter: $\lambda=O(\log(|X|))$, $f:X\rightarrow Y$ is a $(\Delta,\delta(\Delta))$-sequential function if the following conditions hold:
\begin{itemize}[leftmargin=.43cm]
\item[$\bullet$] There is an algorithm that for all $x\in X$ evaluates $f$ in parallel time $\Delta$, by using $\poly(\log(\Delta)$, $\lambda)$ processors.
\item[$\bullet$] For all adversaries $\adv$ which execute in parallel time strictly less than $\delta(\Delta)$ with $\poly(\Delta$, $\lambda)$ processors there is a negligible function $\mu(\cdot)$: 
\[\Pr\left[y_\adv\pick \adv(\lambda,x), x\pick X :  y_\adv=f(x) \right]\leq \mu(\lambda)\]
where $\delta(\Delta)=(1-\epsilon)\Delta$ and $\epsilon<1$.
\end{itemize}
\end{definition}

\begin{definition}[Iterated Sequential function] Let $\beta: X\rightarrow X$ be a $(\Delta,\delta(\Delta))$-sequential function.
A function $f: \mathbb{N}\times X\rightarrow X$ defined as $f(k,x)=\beta^{(k)}(x)=\overbrace{\beta\circ \beta\circ \ldots \circ \beta}^{k \text{\ \ Times}}$ is an iterated sequential function, with round function $\beta$, if for all $k=2^{o(\lambda)}$ function $h:X\rightarrow X$ defined by  $h(x)=f(k,x)$ is $(k\Delta,\delta(\Delta))$-sequential.

\end{definition}

The primary property of an iterated sequential function is that the iteration of the round function $\beta$ is the quickest way to evaluate the function.
Iterated squaring in a finite group of unknown order is widely believed to be a suitable candidate for an iterated sequential function.
Below, we restate its definition.

\begin{assumption}[Iterated Squaring]\label{assumption::SequentialSquaring} Let N be a strong RSA modulus, $r$ be a generator of $\mathbb{Z}_N$, $\Delta$ be a time parameter, and $T=\poly(\Delta,\lambda)$.
For any $\adv$, defined above, there is a negligible function $\mu(\cdot)$ such that:

\[\Pr\left[%
\begin{array}{l}%
r \pick \mathbb{Z}_N, \rb\pick \bin\\
\text{if } \rb=0,\ y \pick \mathbb{Z}_N\\
\text {else } y=r^{2^T}
\end{array}%
:%
\adv(N,r,y)\rightarrow \rb
\right]\leq \frac{1}{2}+\negl\]

\end{assumption}

\section{The original RSA-based TLP}\label{sec:RSA-based-TLP}
Below, we restate the original RSA-based time-lock puzzle proposed by~\citeauthor{Rivest:1996:TPT:888615}~\cite{Rivest:1996:TPT:888615}.

\begin{enumerate}
\item \brokenuline{$\setup{\tlp}(1^\lambda, \Delta, S)$}.
\begin{enumerate}

\item pick two large random prime numbers, $q_1$ and $q_2$.
Set $N=q_1\cdot q_2$ and compute Euler's totient function of $N$ as follows, $\phi(N)=(q_1-1)\cdot (q_2-1)$.
\item set $T=S\cdot \Delta$ the total number of squarings needed to decrypt an encrypted message $m$, where $S$ is the maximum number of squaring modulo $N$ per second that the (strongest) server can perform, and $\Delta$ is the period, in seconds, for which the message must remain private.

\item generate a key for the symmetric-key encryption: \\
$\skegen(1^\lambda)\rightarrow k$.

\item choose a uniformly random value $r$: $r\pick\mathbb{Z}^*_N$.
\item set $\ex=2^T\bmod \phi(N)$.
\item set $\pk\coloneqq(N,T,r)$ as the public key and $\sk\coloneqq(q_1,q_2,\ex,k)$ as the secret key.
\end{enumerate}

\item\brokenuline{$\gen{\tlp}(m,\pk,\sk)$}.
\label{Generate-Puzzle-}
\begin{enumerate}
\item encrypt the message under key $k$ using the symmetric-key encryption, as follows: $p_1= \skeenc(k,m)$.
\item encrypt the symmetric-key encryption key $k$, as follows: $p_2= k+r^\ex\bmod N$.
\item set $p\coloneqq(p_1,p_2)$ as puzzle and output the puzzle.
\end{enumerate}

\item\brokenuline{$\solve{\tlp}(pk, {p})$}.
    \begin{enumerate}
    \item find $b$, where $b=r^{2^T}\bmod N$, through repeated squaring of $r$ modulo $N$.
    \item decrypt the key's ciphertext, i.e., $k=p_2-b\bmod N$.
    \item decrypt the message's ciphertext, i.e., $m=\skedec(k, $ $p_1)$.
        Output the solution, $m$.
    \end{enumerate}
\end{enumerate}

\section{Extending a puzzle chain}\label{sec::extending-a-puzzle-chain}

\begin{enumerate}

\item\brokenuline{$\genext{\gm}(1^\lambda, \intervalvec', S', z, z', \pk, \sk)\rightarrow (\pk'$, $\sk')$}. \\ 
Involves \cl. Requires an existing set of keys $\pk, \sk$, the previous number of puzzles $z$ and the number of new puzzles $z'$.
    \begin{enumerate}
    \item $\forall j, z < j \leq z + z'$, set $T_j=S'\cdot\bar\Delta'_j$.
    \item Set $\timeexpvec'=[T_1,\ldots, T_z,T_{z+ 1}, \ldots, T_{z+z'}]$ containing values $\timeexpvec=[T_1,\ldots, T_z] \in \pk$.

    \item compute values $a_j$: 
    $\forall j, z < j \leq z + z': \ex_j=2^{T_j}\bmod \phi(N)$.
    This yields vector $\expvec'=[\ex_1,\ldots, $ $ \ex_z, \ex_{z+1},\ldots, $ $\ex_{z+z'}]$ containing values $\expvec=[\ex_1,\ldots, \ex_z] \in \sk$.

    \item choose $z'$ fixed size random generators: $r_j\pick \mathbb{Z}^*_N$, where $|r_j|=\omega_1$, $\forall j,\ z < j \leq z + z'$ and set $\randvec'=[r_2$, $\ldots$, $r_{z+1}$, $r_{z+2}$, $\ldots$, $r_{z+z'+1}]$ containing values $\randvec\in \sk$.
    Also, pick $z'$ random keys for the symmetric-key encryption and set $\keyvec'=[k_1$, $\ldots$, $k_z$, $k_{z+1}$, $\ldots$, $k_{z+z'}]$ containing $\keyvec \in \sk$.
    Choose $z'$ fixed size sufficiently large random values, where $|d_j|=\omega_2$ $\forall j,\ z < j \leq z + z'$ and set $\witnvec'=[d_1$, $\ldots$, $d_z$, $d_{z+1}$, $\ldots$, $d_{z+z'}]$ containing values $\witnvec\in \sk$.

    \item set public key $\pk'\coloneqq(\textit{aux}$, $N$, $\timeexpvec'$, $r_1$, $\omega_1$, $\omega_2)$ and secret key $\sk' \coloneqq (q_1, q_2$, $\expvec'$, $\keyvec'$, $\randvec'$, $\witnvec')$ where $\textit{aux}, r_1, \omega_1, \omega_2 \in \pk$ and $q_1, q_2 \in \sk$.
    Output $\pk'$ and $\sk'$.
    \end{enumerate}

\item\brokenuline{$\genext{\gm}(\messvec', \pk', \sk')\rightarrow \hat{p}'$}. Involves \cl. 
Encrypt the messages $\forall j, z < j \leq z + z'$ as it is done in $\gen{\gm}$ and output the new puzzle vector $\hat{p}'$. Send $\hat{p}'$ and $\pk'$ to \se.
The new $\commvec'$ is made public.
\end{enumerate}
\se proceeds to solve the newly generated puzzles as normal but only once all puzzles in the original chain have been solved.
As extending a message chain is equivalent to creating a longer chain initially, the security properties of the protocol are not affected.

\section{Squaring Capabilities of Various Devices}\label{Squaring-Capabilities}

Table~\ref{table:bench-} presents the results of experiments conducted on various devices to evaluate their squaring performance.

\begin{table}[!hb]
\vspace{-2mm}
\caption{Squaring capabilities of consumer and commodity cloud hardware for 1024-bit and 2048-bit RSA groups. 
}
\label{table:bench-}
\vspace{-3mm}
\begin{threeparttable}
\scalebox{0.75}{
\begin{tabular}{ccrrr} 

\toprule

&
& Freq
&\multicolumn{2}{c}{ Squarings per second } \\

\cmidrule{4-5}

\multirow{-2}{*}{ Machine}
&\multirow{-2}{*}{ CPU}
& (GHz)
&\multicolumn{1}{c}{ 1024-bit}
&\multicolumn{1}{c}{ 2048-bit} \\

\midrule

 Apple MacBook Pro
& Apple M1 Pro
& 3.23
& $2.260 \cdot 10^6$
& $0.845 \cdot 10^6$ \\

 Dell Latitude 5421
& Intel i7-11850H
& 2.50
& $2.025\cdot 10^6$
& $0.749\cdot 10^6$ \\

 University Cluster
& Intel Haswell
& 2.4
& $1.683 \cdot 10^6$
& $0.671 \cdot 10^6$ \\

 Azure F2s\_v2~\cite{AzureSizes,AzureNaming}
& Intel Xeon 8272CL
& 2.60
& $1.612\cdot 10^6$
& $0.628\cdot 10^6$ \\

 Azure DS2\_v2~\cite{AzureSizes,AzureNaming}
& Intel Xeon E5-2673 v3
& 2.40
& $1.201 \cdot 10^6$
& $0.480 \cdot 10^6$ \\

 Azure D2as\_v5~\cite{AzureSizes,AzureNaming}
& AMD EPYC 7763
& 2.45
& $1.935\cdot 10^6$
& $0.686\cdot 10^6$ \\

 Azure D2ls\_v5~\cite{AzureSizes,AzureNaming}
& Intel Xeon 8370C
& 2.80
& $1.509 \cdot 10^6$
& $0.573 \cdot 10^6$ \\

 Azure D2pls\_v5~\cite{AzureSizes,AzureNaming}
& Ampere Altra
& 3.0
& $0.824\cdot 10^6$
& $0.262\cdot 10^6$ \\

 Azure FX4mds~\cite{AzureSizes,AzureNaming}
& Intel Xeon 6246R CPU
& 3.40
& $1.911 \cdot 10^6$
& $0.736 \cdot 10^6$ \\

 \textit{Pearson FPGA\tnote{$*$}}
& \textit{AWS F1~\cite{Amazon_F1}}
& 0.1586
& \textit{$38.168 \cdot 10^6$}
& $N/A$ \\

\bottomrule
\end{tabular}
}
\begin{tablenotes}\footnotesize
\item[*] For comparison, we include the squaring capabilities of an FPGA-based solution\\ developed by Eric Pearson for the VDF~Alliance FPGA~Competition~\cite{FPGAComp}, as\\ reported in~\cite{FPGACompRes}. 
\end{tablenotes}
\end{threeparttable}
\vspace{-3mm}
\end{table}

\clearpage

\end{document}